%% file: main.tex
\documentclass[aps,pra,reprint,superscriptaddress,nofootinbib]{revtex4-2}

\usepackage[T1]{fontenc}
\usepackage{amsmath,amssymb,amsthm,mathtools}
\usepackage{physics}
\usepackage{bm,dsfont}
\usepackage{microtype}
\usepackage{xcolor}
\usepackage{hyperref}
\usepackage{cleveref}
\usepackage{enumitem}
\usepackage{tikz}
\usetikzlibrary{arrows.meta,calc,fit,positioning,decorations.pathreplacing}

\definecolor{encoderblue}{RGB}{50,102,166}
\definecolor{encoderteal}{RGB}{32,133,120}
\definecolor{encoderorange}{RGB}{211,122,42}
\definecolor{softblue}{RGB}{231,239,249}
\definecolor{softteal}{RGB}{229,244,241}
\definecolor{softorange}{RGB}{250,238,225}

\definecolor{linkburgundy}{HTML}{7A3341}
\definecolor{citeteal}{HTML}{2F6B67}
\definecolor{urlblue}{HTML}{355F8A}

\hypersetup{
    colorlinks = true,
    linkcolor  = linkburgundy,
    citecolor  = citeteal,
    urlcolor   = urlblue
}

\newtheorem{theorem}{Theorem}
\newtheorem{lemma}[theorem]{Lemma}
\newtheorem{proposition}[theorem]{Proposition}
\newtheorem{corollary}[theorem]{Corollary}
\newtheorem{definition}[theorem]{Definition}
\newtheorem{remark}[theorem]{Remark}

\crefname{theorem}{Theorem}{Theorems}
\crefname{lemma}{Lemma}{Lemmas}
\crefname{proposition}{Proposition}{Propositions}
\crefname{corollary}{Corollary}{Corollaries}
\crefname{definition}{Definition}{Definitions}
\crefname{remark}{Remark}{Remarks}

\crefname{section}{Section}{Sections}
\crefname{appendix}{Appendix}{Appendices}
\crefname{table}{Table}{Tables}
\crefname{figure}{Figure}{Figures}

\renewcommand{\openone}{\mathds{1}}
\newcommand{\EG}{E_{\mathrm{gen}}}
\newcommand{\ECI}{E_{\mathrm{CR}}}
\newcommand{\EI}{E_{\mathrm I}}

\begin{document}

\title{The Entanglement Content of Quantum Measurement Bases}

\author{Jef Pauwels}
\affiliation{Department of Applied Physics, University of Geneva, 1211 Geneva, Switzerland}
\affiliation{Constructor University, Bremen, Germany}
\author{Otfried G\"uhne}
\affiliation{Naturwissenschaftlich-Technische Fakultät, Universität Siegen, Walter-Flex-Straße 3, 57068 Siegen, Germany}

\begin{abstract}
Bell-state measurements are essential ingredients in 
many protocols for quantum information processing, ranging from quantum 
teleportation and dense coding to entanglement distribution in quantum 
networks. Their power relies on the fact that they are measurements in
an entangled basis of a two-particle system and that the used Bell-state
basis can be generated from a single Bell state by local unitary 
transformations. How can  these measurements be generalized to more 
particles?
We develop a general framework for this state-to-measurement problem: 
We introduce a hierarchy of classes of measurement bases, distinguished 
by the local transformations the parties may use for their generation 
from a single state. This leads to a generalization of the concept of 
maximally entangleable (or weighted hypergraph) states and the 
identification of a novel maximally entangled basis of four qubits, being a candidate for data-hiding tasks or distillation protocols. 
Finally, we prove that not all forms of entanglement can be encoded in an
entire measurement basis.
\end{abstract}

\maketitle

\section{Introduction}\label{sec:introduction}

Entanglement enters quantum theory in two ways: through states and through
measurements.  The entanglement of states has been studied intensively for
three decades, and a rich theory of its classification, transformation, and
quantification is available \cite{Horodecki2009,GuhneToth2009}.  Entangled
measurements---joint measurements whose outcomes project onto entangled
states---are no less fundamental.  The Bell-state measurement powers
teleportation \cite{Bennett1993}, dense coding \cite{Bennett1992}, and
entanglement swapping \cite{Zukowski1993}, which underlies quantum repeaters
\cite{Briegel1998} and quantum-network protocols \cite{Tavakoli2022}.  Yet the
entanglement of measurements has received far less systematic
attention than that of states.  Here we address a basic
structural question: \emph{which types of entanglement can occur uniformly
among the outcomes of a quantum measurement?}

A rank-one projective measurement on $n$ qubits is nothing but an
orthonormal basis, so this is equivalently a question about bases.  The paradigmatic
example is the Bell basis: its four vectors are all maximally entangled,
and they are all related to one another by local unitaries.  In the
bipartite setting this structure is well understood---Werner showed that
maximally entangled bases, orthonormal bases of unitary operators, and tight
teleportation and dense-coding schemes are all equivalent objects
\cite{Werner2001}.  

Maximal entanglement, however, is not the only
interesting possibility.  A canonical example is the \emph{elegant joint
measurement} (EJM) \cite{Gisin2019}: a two-qubit basis whose four outcomes
are \emph{partially} entangled, all to the same degree, with their marginals
pointing to the vertices of a regular tetrahedron on the Bloch sphere.  The
EJM has become a primitive in network nonlocality, where it
generates quantum correlations in the triangle network and violates bilocal
Bell inequalities \cite{Tavakoli2021,Tavakoli2022}; the characteristic EJM
distribution in the triangle network was recently proved nonclassical
\cite{GittonRenner2025}.  Bases of this kind,
whose elements all belong to a single local-unitary orbit, are called
\emph{iso-entangled}.  For two qubits, all iso-entangled projective bases
have recently been classified \cite{DelSanto2024}.  For more qubits,
systematic constructions based on orbits of Pauli subgroups produce
symmetric iso-entangled measurements \cite{Pauwels2026Pauli}, including
the multiqubit EJM and tunable families that generalize it
\cite{PauwelsGisin2026EJM,Pauwels2026Tunable}.
Related constructions interpolate between product and maximally entangled
bases \cite{Karimipour2006,Gheorghiu2010,Rajchel2018} or
combine iso-entanglement with mutual unbiasedness \cite{Czartowski2020}.

All of these works start from the basis: they design bases realizing a
particular entanglement pattern.  In this paper we take the complementary,
state-first point of view:
\begin{quote}
\emph{Given an $n$-qubit pure state $\ket\psi$, is there a complete
orthonormal basis all of whose elements carry exactly the entanglement of
$\ket\psi$?}
\end{quote}
Here ``exactly the same entanglement'' has an unambiguous meaning: 
two states have the same
entanglement type when they are related by local unitaries.  The question is
thus whether the local-unitary orbit of $\ket\psi$ contains $2^n$
mutually orthogonal states---equivalently, whether there exist product
unitaries $U_a$ such that $\{U_a\ket\psi\}_{a=1}^{2^n}$ is an
orthonormal basis.  We call such a family of product unitaries a
\emph{local encoding} of $\ket\psi$, and the resulting basis an
iso-entangled basis with \emph{fiducial state} $\ket\psi$, meaning the
state one starts from.  The name comes from
dense coding: the parties sharing $\ket\psi$ encode one of $2^n$
messages by applying local unitaries, and the message can be read out
perfectly by a single joint measurement precisely when the encoded states
form an orthonormal basis \cite{Bennett1992,Bruss2004,Mozes2005,Wang2008}.
This state-to-basis question was
posed explicitly by Tanaka, Markham, and Murao in 2007, who asked whether
every multiqubit state admits such a local encoding and left the problem
open \cite{Tanaka2007}.

What is known?  The cleanest answer concerns the most restricted encoders.
Suppose each party $j$ encodes one private bit $a_j$ by choosing between
two local unitaries, so that $U_a=\bigotimes_jU_j(a_j)$.  Kruszy\'nska and
Kraus proved that the states encodable in this way are exactly the
\emph{locally maximally entangleable} (LME) states: those that become
equal-weight superpositions of all computational basis states, i.e.,
\emph{flat} states, in some product basis \cite{Kraus2009}.  This class is large,
containing in particular all stabilizer and graph states, but it is far from
everything.  The class also contains hypergraph states
\cite{Qu2013,Rossi2013,Guehne2014} for
which purification protocols and photonic measurement-based processing have
recently been developed \cite{Vandre2023, Huang2024},
and may also be described as weighted hypergraph states.
Already for three
qubits, however, the $W$ state is not LME
\cite{Kraus2009}; recall that GHZ and $W$ represent the two inequivalent
forms of genuine tripartite entanglement \cite{Duer2000}.  

Beyond the LME states, the picture is a collection of constructions.  Miyake
and Briegel generated a basis of $W$-type states from a set of non-commuting Pauli operations \cite{Miyake2005}; the same work that posed the
universality question also gave sufficient constructions for broad
stabilizer-related families and showed that every $W_n$ state is locally
encodable by Pauli operations \cite{Tanaka2007}; and the Pauli-orbit measurements mentioned above provide
systematic multiqubit examples \cite{Pauwels2026Pauli,Pauwels2026Tunable}.
More recently, Pimpel, Renner, and Tavakoli proved that every
bipartite pure state of local dimension $2$, $4$, or $8$ is encodable,
gave a recursive construction of bases for $W_n$, and found convincing
numerical evidence that every state of three qubits or of two qutrits is
encodable \cite{Pimpel2023}.  For four qubits, however, their searches
failed on some states. So they conjectured that local encodability is not
universal, supporting this with a heuristic count of variables and
constraints, and their conjecture has been included in a
list of open problems in entanglement theory
\cite{WeinbrennerInPreparation}.  Interestingly, their explicit four-qubit candidate turns out
to be encodable after all: we give an exact encoding of it in
\cref{app:eta-encoding}.  The conjecture therefore remained open: do the
numerical failures reflect a genuine obstruction, or only the difficulty of
the search?

In this paper we answer this question.  In doing so, we propose an
organizing principle for the entanglement of measurements: encodings are
grouped by the class of local encoders the parties may use, and each class
is studied through the set of entanglement types it can reach.  The first
and most physical axis separating such classes is the \emph{coordination}
among the parties.  Label the basis elements by an $n$-bit string $a$.  In an
independent encoding, as in the Kruszy\'nska--Kraus theorem, the unitary
applied by party $j$ depends only on its own bit $a_j$.  In a
coordinated encoding, each party's unitary may depend on the entire label
$a$.  Coordination refers only to how the classical label is used; every
encoder remains a product unitary, and no entangling operation is ever
applied.  

Between the independent and the fully general case lies a natural
intermediate class, the subject of our first main result: encoding families
that commute pairwise and whose local factors are binary observables, that
is, commuting product reflections.
These are essentially Pauli observables in rotated local bases, as we will see.
In this class the parties may coordinate freely; what is restricted is
the algebraic structure of the encoders, a second axis independent of
coordination. 

\input{figures/encoding_hierarchy}

Our first main result is an exact characterization of this intermediate
class: a state admits a commuting-reflection encoding if and only if it is,
up to local unitaries, a flat superposition in a \emph{graph-state basis},
the joint eigenbasis of a graph-state stabilizer
(\cref{thm:commuting-characterization}).  This is a very natural
generalization of the Kruszy\'nska--Kraus theorem: the flatness condition is
the same, but the reference basis is allowed to be the eigenbasis of any
graph, with the empty graph recovering the product basis and hence exactly
the LME class.  The generalization strictly enlarges the accessible
entanglement types.  We construct an explicit commuting-reflection encoding
of the three-qubit $W$ state (\cref{prop:w-encoding})---a state that
independent encoders provably cannot reach.  

Returning to the
Bell-basis motivation, we consider the four-qubit Higuchi--Sudbery state $M_4$ {\cite{Higuchi2000}} which can be considered to be
the maximally entangled four-qubit state {\cite{Derksen2017, Gour2010}}, e.g., in terms of
the geometric measure of entanglement. For this state
we construct a natural four-qubit analogue of the Bell basis: a complete
iso-entangled basis whose sixteen elements globally maximize both the
geometric measure of entanglement and the average two-versus-two linear
entropy.  A global measurement distinguishes a uniformly chosen basis state
perfectly, whereas every fully separable measurement---and hence every LOCC
measurement---has average success probability at most $2/9$.  Yet $M_4$ admits no commuting-reflection
encoding, so it exactly separates commuting-reflection encoders from general
local encoders (\cref{prop:m4-separator}).

Our second main result answers this question from 2007
\cite{Tanaka2007} and proves the later four-qubit conjecture
\cite{Pimpel2023}: local encoding is not universal.  Within an $18$-dimensional family of
four-qubit states whose one-qubit marginals are all maximally mixed---the
1-uniform states, the first level of the $k$-uniform hierarchy that
underlies absolutely maximally entangled states and quantum
error-correcting codes \cite{Scott2004,Goyeneche2014}---the
encodable states have at most $15$ independent parameters.  They therefore
cannot fill the family.  We also give an explicit nonencodable state.  Its phases are chosen so that a classical
theorem rules out every possible encoding exactly.  The proof is a dimension comparison, but the
bound does not come from naively counting equations and unknowns, which is
inconclusive here.  Instead it uses a signed-area identity obeyed by every
continuously varying orthonormal basis.  Combined with local encoding and
1-uniformity, this identity forces the signed area of every pair of state
changes within a continuously encodable family to vanish.

Our results are conveniently summarized in terms of the sets of states that
each encoder class can reach.  Write $\EI$, $\ECI$, and $\EG$ for the
$n$-qubit states admitting, respectively, an independent, a
commuting-reflection, and a general local encoding, with a superscript
$(n)$ when the number of qubits matters.  Each set is a union of
local-unitary orbits, and larger encoder classes can only reach more
states, so $\EI\subseteq\ECI\subseteq\EG$.  Our three witnesses turn every
one of these inclusions into a strict one at four qubits,
\begin{equation}\label{eq:strict-four-qubit-hierarchy}
  \EI^{(4)}\subsetneq\ECI^{(4)}\subsetneq\EG^{(4)}
  \subsetneq\mathbb{CP}^{15},
\end{equation}
the last set being all four-qubit pure states.  The three separations are
witnessed by $\ket{W_3}\otimes\ket0$,
by $\ket{M_4}$, and by
$\ket{\Psi_{\mathrm{ex}}}$ (see \cref{cor:explicit-nonencodable}).
\Cref{fig:state-class-map} shows the resulting picture.  Widening the
encoder class genuinely enlarges the entanglement types accessible to projective
measurements, but no amount of encoder freedom makes local encoding universal.  Whether
a given entanglement type can fill a complete measurement is a nontrivial
compatibility condition between the geometry of its local-unitary orbit and
the orthogonality constraints of a basis.  More broadly, the results
suggest studying the entanglement of measurements through the lens of
encoding classes, a perspective we return to in the discussion.

The paper is organized as follows.  \Cref{sec:setting} introduces basic preliminaries.
\Cref{sec:commuting-reflections} proves the commuting-reflection
characterization.  \Cref{sec:w-state} proves strictness using the $W$
state, and \cref{sec:m4-separator} uses $M_4$ to separate
commuting-reflection encoders from general local encoders.
\Cref{sec:nonencodability} proves the nonencodability theorem and gives an
explicit example.  Finally, \cref{sec:discussion} discusses implications
and open problems.

\section{Local-unitary orbits and encoder structures}\label{sec:setting}

We consider $n$ qubits with Hilbert space
$\mathcal H=(\mathbb C^2)^{\otimes n}$ and dimension $N=2^n$.  Pure states
are rays $[\psi]\in\mathbb{CP}^{N-1}$; global phases are never relevant.
The central object of this paper can be read in two equivalent ways.  As a
measurement, it is an orthonormal basis all of whose outcomes have the same
entanglement type.  As an encoding, it is the orbit of a fiducial state
under a finite family of product unitaries.  \Cref{fig:orbit-concept}
illustrates this geometric picture.

\input{figures/orbit_concept}

\begin{definition}[Iso-entangled basis]
An orthonormal basis $\{\ket{\phi_a}\}_{a=1}^N$ is
\emph{iso-entangled} if all of its vectors belong to one local-unitary orbit.
\end{definition}

\begin{definition}[Local encoding]\label{def:local-encoding}
A \emph{local encoding} of a fiducial state $\ket\psi$ is a family of
product unitaries
\[
  U_a=\bigotimes_{j=1}^n U_{a,j},
  \qquad a=1,\ldots,N,
\]
such that
\begin{equation}\label{eq:encoding-condition}
  \langle\psi|U_a^\dagger U_b|\psi\rangle=\delta_{ab}.
\end{equation}
A state that admits such a family is \emph{locally encodable}.
\end{definition}

Equivalently, an encoded basis is a minimum-cardinality complex projective
$1$-design contained in a single local-unitary orbit, since
\[
  \frac1N\sum_{a=1}^N U_a\ket\psi\!\bra\psi U_a^\dagger
  =\frac{\openone}{N}.
\]
In dimension $N$, a projective $1$-design has at least $N$ points, with
equality precisely for an orthonormal basis \cite{RoyScott2007}.

Every locally encoded basis is iso-entangled.  Conversely, anchoring an
iso-entangled basis at any one of its vectors produces a local encoding, so
the two notions describe the same objects.

We use three levels of encoder structure.

\begin{definition}[Encoder structures]\label{def:encoder-structures}
An encoding is
\begin{enumerate}[label=(\roman*)]
  \item \emph{independent} if its labels are bit strings
  $a=(a_1,\ldots,a_n)$ and
  $U_a=\bigotimes_j U_j(a_j)$: party $j$ uses only the private bit $a_j$;
  \item \emph{commuting-reflection} if the $U_a$ commute pairwise and every
  local factor $U_{a,j}$ is a Hermitian unitary involution,
  $U_{a,j}^\dagger=U_{a,j}$ and $U_{a,j}^2=\openone$;
  \item \emph{general} if no restriction beyond product unitarity is imposed.
\end{enumerate}
\end{definition}

An involution is an operator whose square is the identity.  In the
present setting it is also Hermitian and unitary, so it is a $\pm1$-valued
observable.  On a single qubit the trivial cases are $\pm\openone$; every
other example is $VZV^\dagger=\hat{\bm r}\cdot\bm\sigma$, a Pauli observable
in a rotated basis.  In a commuting-reflection encoding it is the complete
product encoders that commute; their individual single-qubit factors need
not.  The parties may therefore coordinate arbitrarily, while the full
product observables share a common eigenbasis.  As we prove in
\cref{sec:commuting-reflections}, every complete encoding family of this
kind can, after a local change of basis, be represented by a stabilizer group
of Pauli operators.  General encodings go beyond this in
two ways: the encoders need not commute (the $W$-basis of Miyake and
Briegel \cite{Miyake2005} and our $M_4$ basis in \cref{app:w3-m4-proofs}
are of this kind), and commuting encoders of higher order, whose local
factors are not Hermitian, are also allowed.

We write $\EI,\ECI,\EG$ for the corresponding sets of state rays.  These
sets are invariant under local unitaries, because conjugating every encoder
preserves both product structure and the stated encoder structure.  At
the level of state classes,
\begin{equation}\label{eq:class-inclusions}
  \EI\subseteq\ECI\subseteq\EG.
\end{equation}
The first inclusion follows from the characterization
below.\footnote{It should not be read as saying that every particular
independent encoder is already a commuting family; the statement is about
the sets of states, not about individual encoding families.}
When the number of qubits matters, we add a superscript, as in
$\EI^{(n)}$, $\ECI^{(n)}$, and $\EG^{(n)}$.

The distinction between independent and coordinated encoding concerns only
which part of the classical label each local factor may use
(\cref{fig:encoder-coordination}). In the general class, the local factor $U_{a,j}$ may be
chosen independently for every pair $(a,j)$; this is the
unrestricted product-unitary model of
Refs.~\cite{Tanaka2007,Pimpel2023}.

\input{figures/encoder_coordination}

\subsection{Independent choices and LME states}

Our starting point is the following theorem.

\begin{theorem}[Kruszy\'nska--Kraus \cite{Kraus2009}]
\label{thm:kraus}
For an $n$-qubit pure state $\ket\psi$, the following are equivalent:
\begin{enumerate}[label=(\alph*)]
  \item $\ket\psi$ admits an independent local encoding;
  \item $\ket\psi$ is local-unitary equivalent to a state
  that is an equal superposition of computational basis states,
  \begin{equation}\label{eq:lme-flat}
  \ket{\psi} \stackrel{LU}{\sim}
    \frac1{\sqrt N}\sum_{x\in\{0,1\}^n}e^{i\alpha_x}\ket x;
  \end{equation}
  \item $\ket\psi$ is locally maximally entangleable.
\end{enumerate}
\end{theorem}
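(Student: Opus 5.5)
We prove the cycle (b)$\Rightarrow$(a)$\Rightarrow$(b) and treat (c) through (b). Recall that LME means there is a product basis $\{\ket{x}_V=V\ket x\}$, with $V=\bigotimes_j V_j$, and diagonal unitaries $D_j$ on qubit $j$ such that $\bigotimes_j D_j\ket\psi$ is maximally entangled with $n$ auxiliary qubits after controlling each $D_j$ on an auxiliary $\ket{+}$. Equivalently, the reduced state of the auxiliaries is $\openone/2^n$, which holds iff $\ket\psi$ is flat in that product basis. So (b)$\Leftrightarrow$(c) amounts to unpacking the definition of Ref.~\cite{Kraus2009}, and we take it as essentially definitional. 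The real content is (a)$\Leftrightarrow$(b).

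\emph{(b)$\Rightarrow$(a).} Since $\EI$ is LU-invariant, we may assume $\ket\psi=N^{-1/2}\sum_x e^{i\alpha_x}\ket x$. Set $U_j(a_j)=Z^{a_j}$. Then
\[
\langle\psi|Z^{a}Z^{b}|\psi\rangle=\frac1N\sum_x(-1)^{(a\oplus b)\cdot x}=\delta_{ab},
\]
because the phases cancel in $|e^{i\alpha_x}|^2$. This gives an independent encoding.

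\emph{(a)$\Rightarrow$(b).} Let $U_j(0),U_j(1)$ be the encoders. Replacing $\ket\psi$ by $\bigotimes_jU_j(0)\ket\psi$ and $U_j(a_j)$ by $U_j(a_j)U_j(0)^\dagger$, we may assume $U_j(0)=\openone$ and write $W_j:=U_j(1)$. The key step is to extract the orthogonality conditions for labels differing only on a subset $S$. For $a=0$ and $b=\mathbb 1_S$, condition \eqref{eq:encoding-condition} gives $\langle\psi|\bigotimes_{j\in S}W_j|\psi\rangle=0$ for all nonempty $S$. Other pairs give conditions of the form
\[
\langle\psi|\bigotimes_{j\in S}W_j^{\pm1}|\psi\rangle=0,
\]
where the sign pattern depends on $a$. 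Next, diagonalize each $W_j=V_j\,\mathrm{diag}(e^{i\theta_j},e^{i\phi_j})V_j^\dagger$ and pass to the basis $\ket x_V$. Writing $p(x)=|\langle x|_V\psi\rangle|^2$ and $\lambda_j(x_j)\in\{e^{i\theta_j},e^{i\phi_j}\}$, the conditions become
\[
\sum_x p(x)\prod_{j\in S}\lambda_j(x_j)^{\epsilon_j}=0
\]
for all nonempty $S$ and all sign patterns $\epsilon$ that actually occur. Now use the pair $a=\mathbb 1_S$, $b=0$, together with the pairs inside $S$, which supply mixed signs. With these, single-qubit subsets force $e^{i\theta_j}\neq e^{i\phi_j}$ and $p_j(0)\lambda_j(0)+p_j(1)\lambda_j(1)=0$. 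Orthogonality of two unit vectors then forces $p_j(0)=p_j(1)=1/2$ and $\lambda_j(0)=-\lambda_j(1)$. So, up to a phase, $W_j=V_jZV_j^\dagger$.

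Once every $W_j$ is, up to phase, a rotated $Z$, all conditions read
\[
\sum_x p(x)(-1)^{s\cdot x}=0\quad\text{for all } s\neq0.
\]
This says the Fourier transform of the probability distribution $p$ on $\mathbb Z_2^n$ vanishes away from zero, so $p\equiv 1/N$. Hence $\ket\psi$ is flat in the basis $\ket x_V$, which is (b).

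\emph{Main obstacle.} The delicate point is the single-qubit step. One must show, using only the pairwise conditions available in an independent encoding, that each $W_j$ has eigenvalues that are antipodal on the unit circle and that the marginal in its eigenbasis is uniform, and this must hold even when the rest of the state is arbitrary. I would first try the pairs $a,b$ differing in bit $j$ only. These give
\[
\langle\psi|\,\openone\otimes\cdots\otimes W_j\otimes\cdots|\psi\rangle=0,
\]
i.e. $\mathrm{tr}(\rho_jW_j)=0$ with $W_j$ unitary. In its eigenbasis this is $q\,e^{i\theta}+(1-q)e^{i\phi}=0$, which forces $q=1/2$ and $e^{i\phi}=-e^{i\theta}$. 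This settles the single-qubit step cleanly. The multi-qubit conditions then reduce directly to the Fourier argument above.
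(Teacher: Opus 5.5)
Your proof is correct and follows the route the paper indicates around this cited theorem: (b)$\Rightarrow$(a) via the phase flips $Z^a$, and (a)$\Rightarrow$(b) by showing that each $W_j$ is a phase times a rotated $Z$ (your $\operatorname{tr}(\rho_jW_j)=0$ step), then inverting the sign matrix $(-1)^{a\cdot b}$ as in the proof of \cref{thm:commuting-characterization}. One small correction: maximal mixedness of the auxiliaries is equivalent to orthonormality of the controlled orbit, i.e.\ to (a) via $\bra a\rho_A\ket b=\langle\psi|U_b^\dagger U_a|\psi\rangle/N$, not directly to flatness (a flat state with $D_j=\mathrm{diag}(1,i)$ fails), so you should route (c) through (a) rather than call it definitionally equivalent to (b).
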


For a bit string $a$, we write
$Z^a=\bigotimes_{j=1}^n Z_j^{a_j}$.
The amplitudes in \eqref{eq:lme-flat} have equal modulus in a product basis.  Applying the
commuting reflections $Z^a$ to \eqref{eq:lme-flat} produces an orthonormal
basis.  Thus every LME state admits a commuting-reflection encoding, which
establishes $\EI\subseteq\ECI$.

\subsection{Graph-state notation}

The Kruszy\'nska--Kraus theorem can be read as a statement about a common
eigenbasis.  After local basis changes, the independent encoders are the
phase flips $Z^a$.  Their common eigenbasis is a product basis, and
orthogonality of the orbit forces the fiducial state to have equal
probabilities in that basis.  Our generalization retains the two ingredients
that make this argument exact---a complete family of commuting
reflections---but drops independence of the local choices.  The common
eigenbasis may then be entangled.  The theorem below shows that it must be a
stabilizer, hence graph-state, basis, and that equal probabilities in this
basis are again necessary and sufficient.

We now introduce the notation needed to make this statement precise.  Graph
states have previously been used to construct multipartite families of
equally entangled bases \cite{Gheorghiu2010}.  Our result is different in
kind: it gives a necessary-and-sufficient condition tied to a specified
algebra of local encoders.

For a simple graph $G$ on $n$ vertices, let
\begin{equation}\label{eq:graph-state}
  \ket G=\prod_{\{j,k\}\in E(G)}\mathrm{CZ}_{jk}\ket+^{\otimes n}.
\end{equation}
Its stabilizer is generated by
\begin{equation}\label{eq:graph-generators}
  K_j=X_j\prod_{k\in\mathcal N(j)}Z_k,
  \qquad j=1,\ldots,n.
\end{equation}
We write
$\mathcal S(G)=\langle K_1,\ldots,K_n\rangle$
for the $2^n$-element stabilizer group of $\ket G$.
Binary sums and dot products below are understood modulo two.
The $N$ states
\begin{equation}\label{eq:graph-basis}
  \ket{G_b}:=Z^b\ket G,
  \qquad b\in\{0,1\}^n,
\end{equation}
form the graph basis, the joint eigenbasis of the stabilizer generators.
We use the standard fact that every maximal qubit stabilizer group (an abelian subgroup of the Pauli group with $2^n$ elements)is
local-Clifford equivalent to a stabilizer group of a graph state
\cite{Gottesman1997,VandenNest2004,Hein2006}.

\section{Commuting-reflection encoders: an exact graph-basis characterization}
\label{sec:commuting-reflections}

We prove the characterization in three steps.  First, pairwise
commutativity removes any continuous freedom in the relative local reflection
axes: a local change of basis can be used to turn every encoder into a tensor product of
Pauli operators.  Second,
$2^n$ distinct commuting Pauli labels necessarily exhaust a maximal
stabilizer group.  Third, orthogonality of its orbit is equivalent to uniform
weight in the joint stabilizer eigenbasis.  The first step is isolated in the
following lemma.

\begin{lemma}[Alignment of commuting product reflections]
\label{lem:reflection-alignment}
Let $\{R_a\}_{a\in\mathcal A}$ be pairwise-commuting product operators
\[
  R_a=\bigotimes_{j=1}^n R_{a,j},
\]
where every $R_{a,j}$ is a single-qubit Hermitian unitary involution.  There
is a local unitary $V=\bigotimes_j V_j$ such that every
$VR_aV^\dagger$ is a tensor product of Pauli operators, up to a real sign.
\end{lemma}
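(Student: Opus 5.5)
The plan is to reduce everything to a per-site geometric statement: for any two commuting encoders, at each qubit the two local reflection axes are either parallel or orthogonal. First recall the single-qubit picture. Every single-qubit Hermitian unitary involution is either $\pm\openone$ or $\hat{\bm r}\cdot\bm\sigma$ for a unit vector $\hat{\bm r}\in\mathbb R^3$. For two nontrivial factors we have the product rule
\[
(\hat{\bm r}\cdot\bm\sigma)(\hat{\bm s}\cdot\bm\sigma)=(\hat{\bm r}\cdot\hat{\bm s})\openone+i(\hat{\bm r}\times\hat{\bm s})\cdot\bm\sigma .
\]
Reversing the order flips the sign of the cross-product term. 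A trivial factor commutes with everything, so we can treat it as having $c_j=1$ and zero cross vector.

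Now fix two encoders $R_a,R_b$ and write $c_j=\hat{\bm r}_j\cdot\hat{\bm s}_j$ and $\bm w_j=\hat{\bm r}_j\times\hat{\bm s}_j$. Expanding the tensor product gives
\[
R_aR_b=\sum_{S\subseteq[n]} i^{|S|}\Bigl(\prod_{j\notin S}c_j\Bigr)\bigotimes_{j\in S}(\bm w_j\cdot\bm\sigma_j),
\]
where identity factors are understood on $j\notin S$. The expansion of $R_bR_a$ is the same with $i\to -i$. The operators $\openone$ and $\bm w_j\cdot\bm\sigma_j$ with $\bm w_j\neq0$ are linearly independent on each site, so distinct subsets $S$ give linearly independent operator terms. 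Hence commutation is equivalent to the vanishing of every term with $|S|$ odd.

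A term is nonzero exactly when $S$ contains every site with $c_j=0$ and is contained in the set of sites with $\bm w_j\neq0$. Suppose some site $j_0$ had both $c_{j_0}\neq0$ and $\bm w_{j_0}\neq0$. Start from the minimal admissible $S$, namely the set of sites with $c_j=0$. Toggling $j_0$ in or out of it produces two admissible subsets of opposite parity, so one of them is an odd nonzero term. This contradicts commutation. Therefore at every site the two axes are parallel ($\bm w_j=0$) or orthogonal ($c_j=0$). As a by-product, the number of orthogonal sites must be even. I expect this expansion-and-parity step to be the main obstacle; the rest is bookkeeping.

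With this in hand, fix a qubit $j$ and collect the lines $\mathbb R\hat{\bm r}_{a,j}$ spanned by all nontrivial factors $R_{a,j}$, $a\in\mathcal A$. By the pairwise result, any two of these lines are equal or perpendicular. So there are at most three distinct lines, and they are mutually orthogonal. Extend them to an orthonormal frame of $\mathbb R^3$ and choose a rotation in $SO(3)$ carrying this frame to the $x,y,z$ axes. Lift that rotation through the double cover $SU(2)\to SO(3)$ to a unitary $V_j$. Conjugation by $V_j$ then sends each $R_{a,j}$ to $\pm\openone$ or $\pm X,\pm Y,\pm Z$. Taking $V=\bigotimes_jV_j$, each $VR_aV^\dagger$ is a tensor product of Pauli operators times the product of the local signs, which is a real sign $\pm1$, as claimed in \cref{lem:reflection-alignment}.
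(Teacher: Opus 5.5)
Your proof is correct, and its architecture matches the paper's. The two shared steps are: for each pair of encoders, show that at every qubit the two reflection axes are parallel or perpendicular; then rotate each qubit's at most three mutually orthogonal axes onto $x,y,z$.

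The difference lies in how you extract the per-site dichotomy from $R_aR_b=R_bR_a$.

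\begin{itemize}
\item The paper uses the rigidity of elementary tensors. Two equal nonzero product operators have factors that agree site by site up to scalars, so $R_{b,j}R_{a,j}=\lambda_jR_{a,j}R_{b,j}$ with $\prod_j\lambda_j=1$. Comparing identity and Pauli parts in the single-qubit relation $BA=\lambda AB$ then forces either $\lambda_j=1$ (parallel axes) or $\lambda_j=-1$ (perpendicular axes).
\item You instead expand both products over subsets $S$. Linear independence of the resulting tensor terms turns commutation into the vanishing of every odd-$|S|$ term, and a parity toggle rules out a site with $c_j\neq0\neq\bm w_j$.
\end{itemize}

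Your route is self-contained: in effect it proves the tensor-rigidity fact for this special case, at the cost of more bookkeeping. The paper's route is shorter but takes that fact as given. Both yield the same by-product, namely that the number of perpendicular sites is even; in the paper this follows from $\prod_j\lambda_j=1$ with $\lambda_j=\pm1$.

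One small point needs tidying. When exactly one of $R_{a,j},R_{b,j}$ is $\pm\openone$, the site product is $\pm\hat{\bm s}\cdot\bm\sigma$, not $c_j\openone$. So your convention ``$c_j=1$, $\bm w_j=0$'' is not literally correct for such a site. However, that site factor is nonzero and identical in both orderings. You can therefore cancel it from both sides of $R_aR_b=R_bR_a$, or carry it along as a common tensor factor, which preserves linear independence. With that fix your argument goes through unchanged.
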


\begin{proof}
Apart from $\pm\openone$, a single-qubit Hermitian involution is a
reflection $\hat{\bm r}\cdot\bm\sigma$ about a Bloch axis
$\hat{\bm r}$.  We first record how two of them can be related.  For
$A=\hat{\bm r}\cdot\bm\sigma$ and $B=\hat{\bm s}\cdot\bm\sigma$,
\[
  AB=(\hat{\bm r}\cdot\hat{\bm s})\openone
  +i(\hat{\bm r}\times\hat{\bm s})\cdot\bm\sigma,
\]
and $BA$ is the same expression with the cross product reversed.  Suppose
$BA=\lambda AB$ for some scalar $\lambda$.  Comparing the two
expressions term by term, the identity part gives
$(\hat{\bm r}\cdot\hat{\bm s})(1-\lambda)=0$ and the Pauli part gives
$(1+\lambda)\,\hat{\bm r}\times\hat{\bm s}=0$.  For unit vectors the dot
and cross products cannot both vanish, so there are only two possibilities:
$\lambda=1$ with parallel axes, or $\lambda=-1$ with perpendicular axes.
In words, two single-qubit reflections either commute and share an axis, or
anticommute and have perpendicular axes.

Now fix two encoders $R_a$ and $R_b$.  The products $R_aR_b$ and
$R_bR_a$ are equal, and both are tensor products of the corresponding
single-qubit factors.  Two such products can agree only if their factors
agree qubit by qubit up to constants whose product is one, so at every
qubit $j$,
\[
  R_{b,j}R_{a,j}=\lambda_jR_{a,j}R_{b,j},
  \qquad \prod_j\lambda_j=1.
\]
By the first paragraph each $\lambda_j=\pm1$, the case of a scalar factor
$\pm\openone$ being trivial.  Hence at every qubit the two axes are
parallel or perpendicular.

Applying this to all pairs of encoders, the axes occurring at a fixed qubit
are pairwise parallel or perpendicular.  Distinct axes at that qubit are
therefore mutually perpendicular, so there are at most three of them, and a
single rotation $V_j$ sends them to the $x,y,z$ axes.  Doing this
independently at every qubit turns every encoder into a tensor product of
Pauli operators, up to a sign.
\end{proof}

\begin{theorem}[Characterization of commuting-reflection encodings]
\label{thm:commuting-characterization}
For an $n$-qubit pure state $\ket\psi$, the following are equivalent:
\begin{enumerate}[label=(\alph*)]
  \item $\ket\psi$ admits a commuting-reflection local encoding;
  \item there is a local unitary $V$ and a graph $G$ such that the
  stabilizer orbit
  \[
    \{\,V^\dagger S V\ket\psi:S\in\mathcal S(G)\,\}
  \]
  is an orthonormal basis;
  \item there is a local unitary $V$, a graph $G$, and phases
  $\{\beta_b\}$ such that
  \begin{equation}\label{eq:flat-graph-basis}
    V\ket\psi
    =\frac1{\sqrt N}
      \sum_{b\in\{0,1\}^n}e^{i\beta_b}\ket{G_b}.
  \end{equation}
\end{enumerate}
\end{theorem}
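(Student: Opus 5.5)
We prove the cycle (a)$\Rightarrow$(b)$\Rightarrow$(c)$\Rightarrow$(a), following the three-step outline given before \cref{lem:reflection-alignment}.

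For (a)$\Rightarrow$(b), let $\{U_a\}_{a=1}^N$ be a commuting-reflection encoding. By \cref{lem:reflection-alignment} there is a local unitary $W$ such that every $WU_aW^\dagger=\epsilon_aP_a$, where $\epsilon_a=\pm1$ and $P_a$ is a Pauli string. The signs are irrelevant for orthonormality, since they only multiply basis vectors by phases. The $P_a$ still commute pairwise.

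We next show that the $P_a$ are distinct as Pauli strings. If $P_a=P_b$ with $a\neq b$, then $U_a=\pm U_b$, and \eqref{eq:encoding-condition} would give $|\langle\psi|U_a^\dagger U_b|\psi\rangle|=1\neq0$.

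This gives $N=2^n$ distinct, pairwise commuting Pauli strings. The key counting step is that they form a maximal abelian subgroup of the Pauli group, modulo phases. Their images in $\mathbb F_2^{2n}$ lie in the span $L$ of the images, and $L$ is isotropic for the symplectic form because the $P_a$ commute. An isotropic subspace has dimension at most $n$, so $|L|\le 2^n$. Since the $2^n$ images are distinct, they are exactly the elements of $L$, and $L$ is Lagrangian.

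One subtlety is that $P_aP_b$ may differ from the $P_c$ with the same symplectic image by a phase $\pm1,\pm i$. Phases do not affect the orbit being orthonormal up to phases. To obtain a genuine stabilizer group, we choose Hermitian sign representatives so that $-\openone$ is excluded; this is possible for an isotropic subspace. The orbit $\{P_aW\ket\psi\}$ is then, up to phases, the orbit of a maximal stabilizer group $\mathcal S$.

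By the cited standard fact, there is a local Clifford $C$ with $C\mathcal SC^\dagger=\mathcal S(G)$ up to signs. Setting $V=CW$ gives (b), since $V^\dagger SV\ket\psi$ ranges, up to phases, over $U_a\ket\psi$.

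For (b)$\Leftrightarrow$(c), put $\ket\phi=V\ket\psi$ and expand $\ket\phi=\sum_b c_b\ket{G_b}$. Each $S\in\mathcal S(G)$ is diagonal in the graph basis with eigenvalues $\chi_S(b)=\pm1$, and these are the characters of $\mathcal S(G)\cong\mathbb Z_2^n$. Hence
\[
\langle\phi|S^\dagger T|\phi\rangle=\sum_b|c_b|^2\chi_{ST}(b).
\]
Orthonormality of the orbit therefore says that all nontrivial Fourier coefficients of the distribution $|c_b|^2$ vanish. This is equivalent to $|c_b|^2=1/N$, which is exactly \eqref{eq:flat-graph-basis}.

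For (c)$\Rightarrow$(a), by (b) the family $U_S=V^\dagger SV$ is an encoding. Each $U_S$ is a product of Hermitian involutions, since Pauli strings are, possibly after an overall sign, and the $U_S$ commute pairwise.

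The main obstacle is the counting and phase step. The lemma gives Pauli strings only up to signs, and the group generated by the $P_a$ might not close with real signs. The plan is to work entirely in $\mathbb F_2^{2n}$ and use the isotropic bound to pin down the set, then argue that the orbit depends only on the symplectic images, up to phases.
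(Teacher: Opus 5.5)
Your proposal is correct and follows essentially the paper's route: the alignment lemma, then the count showing that $2^n$ distinct commuting Pauli strings exhaust a maximal stabilizer group, then flatness from the invertibility of the sign matrix $(-1)^{a\cdot b}$, which you phrase as Fourier inversion. Your isotropic-subspace bound in $\mathbb F_2^{2n}$ is the symplectic form of the counting fact the paper cites, your sign representatives on a basis of $L$ correspond to the paper's $n$ independent generators, and the $\pm i$ phases you worry about cannot actually occur, since a product of commuting Hermitian Pauli strings is Hermitian and hence equals $\pm$ a Pauli string.
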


\input{tables/parallel_characterizations}

\begin{proof}
Assume first \eqref{eq:flat-graph-basis}.  A stabilizer element
$S_a=\prod_jK_j^{a_j}$ acts diagonally in the graph basis:
\[
  S_a\ket{G_b}=(-1)^{a\cdot b}\ket{G_b}.
\]
Consequently,
\[
  \langle\psi|V^\dagger S_a^\dagger S_{a'}V|\psi\rangle
  =\frac1N\sum_b(-1)^{(a\oplus a')\cdot b}
  =\delta_{aa'}.
\]
The conjugated stabilizer elements are commuting product reflections, proving
(c)$\Rightarrow$(b)$\Rightarrow$(a).

Conversely, let $\{R_a\}_{a=1}^N$ be a commuting-reflection encoding.  By
\cref{lem:reflection-alignment} we may assume, after applying a local
unitary $V_0$, that every $R_a$ is a tensor product of Pauli operators,
up to a sign.  No two of them agree up to a sign, since otherwise two
elements of the encoded basis would coincide up to a phase.  We therefore
have $2^n$ commuting Pauli products that are distinct up to signs.

Once signs are discarded, an abelian group of Pauli products has at most
$2^n$ elements; this is the counting behind the stabilizer formalism
\cite{Gottesman1997}.  Our encoders generate such a group and already supply
$2^n$ distinct elements, so they exhaust it.  In particular, the family is
closed under multiplication up to signs, and one of the encoders is the
identity.  Choose $n$ encoders whose products give all $2^n$ elements up
to sign; being $n$ generators of a group of order $2^n$, they are
independent, meaning that no product of a nonempty subset of them is the
identity up to a sign.  Each is Hermitian and squares to $\openone$, so all
their products are as well.  Moreover no such product equals $-\openone$:
that product would be the identity up to a sign, contradicting
independence.  The $n$ chosen encoders therefore generate a stabilizer
group $\mathcal S$ of order $2^n$.

A local Clifford $C$ maps the joint eigenbasis of $\mathcal S$ to the
graph basis of some graph $G$.  Write
\[
  CV_0\ket\psi=\sum_b c_b\ket{G_b}.
\]
Comparing every element of the encoded basis with the one whose encoder is
the identity, orthogonality says precisely that each nonidentity element of
$\mathcal S$ has zero expectation value in the state.  Since
$S_a\ket{G_b}=(-1)^{a\cdot b}\ket{G_b}$, these conditions read
\[
  \sum_b |c_b|^2(-1)^{a\cdot b}=0
  \qquad\text{for every }a\neq0.
\]
Together with $\sum_b|c_b|^2=1$, this is a system of $N$ linear
equations for the $N$ numbers $|c_b|^2$ whose matrix of signs
$(-1)^{a\cdot b}$ is invertible.  Its solution is therefore unique, and
$|c_b|^2=1/N$ solves it.  This is \eqref{eq:flat-graph-basis} with
$V=CV_0$.
\end{proof}

For the empty graph, the graph basis is a product basis.  Hence
\eqref{eq:flat-graph-basis} reduces, after local Hadamards, to the
Kruszy\'nska--Kraus form \eqref{eq:lme-flat}.  The theorem therefore
preserves the equal-modulus condition of the independent case but changes
the reference basis: a nonempty graph replaces the product basis by an
entangled stabilizer basis.  This additional freedom is strictly more powerful, as the
next section shows.

Commuting encoders whose local factors are not binary observables are not
covered by the characterization, and identifying the states they can encode
is one of the open problems we highlight in \cref{sec:discussion}.

The characterization also has a useful ancilla picture, which is how the LME
class is usually presented.  For any table
$\mathcal U=\{U_a\}_{a\in\{0,1\}^n}$, let
$\ket a_A=\bigotimes_j\ket{a_j}_{A_j}$ and control the encoders coherently
on an $n$-qubit label register,
\begin{equation}\label{eq:coherent-control-state}
  \begin{aligned}
  C_{\mathcal U}
    &=\sum_{a\in\{0,1\}^n}\ketbra a a_A\otimes U_a,\\
  \ket{\Omega_{\mathcal U}(\psi)}
    &=C_{\mathcal U}
      \bigl(\ket+_A^{\otimes n}\otimes\ket\psi_S\bigr)
      =\frac1{\sqrt N}\sum_a\ket a_A\,U_a\ket\psi_S .
  \end{aligned}
\end{equation}
The matrix elements of the reduced state of the register are the orbit
overlaps,
$\bra a\rho_A\ket b=\langle\psi|U_b^\dagger U_a|\psi\rangle/N$, so
$\ket{\Omega_{\mathcal U}(\psi)}$ is maximally entangled across the
register--system cut $A{:}S$ exactly when $\{U_a\ket\psi\}_a$ is an
orthonormal basis.  Equivalently, $\sum_aU_a\ket\psi\bra a$ is then
unitary and $\ket{\Omega_{\mathcal U}(\psi)}$ is its Choi state, a
resource for gate teleportation \cite{Kraus2009}.

What separates the two columns of
\Cref{tab:parallel-characterizations} is where this control lives.  For
an independent encoding, $C_{\mathcal U}$ factorizes, after grouping
tensor factors by party, as
\[
  C_{\mathcal U}
  =\bigotimes_{j=1}^n
    \left(
      \sum_{a_j=0}^1
      \ketbra{a_j}{a_j}_{A_j}\otimes U_j(a_j)
    \right),
\]
one controlled gate between each party's qubit and its own ancilla: this is
exactly the LME protocol of Ref.~\cite{Kraus2009}.  A commuting-reflection
encoding still produces maximal register--system entanglement, but its
control need not factorize in this way.

\section{Separating the encoder classes:
\texorpdfstring{$W_3$}{W3} and \texorpdfstring{$M_4$}{M4}}
\label{sec:structured-examples}
This section gives explicit encoders for $\ket{W_3}$ and $\ket{M_4}$.
They turn both inclusions in
\eqref{eq:strict-four-qubit-hierarchy} into strict ones: the $W$ state is
not LME yet admits a commuting-reflection encoding, and $M_4$ admits a
general local encoding but no commuting-reflection one.

\subsection{The \texorpdfstring{$W$}{W} state:
coordination beyond the LME class}
\label{sec:w-state}

The three-qubit $W$ state
\begin{equation}\label{eq:w-state}
  \ket{W_3}=\frac1{\sqrt3}
  \left(\ket{001}+\ket{010}+\ket{100}\right)
\end{equation}
is a decisive test of whether graph-basis freedom changes the state class.
Together with the GHZ state, it represents one of the two inequivalent forms
of genuine three-qubit entanglement under stochastic local operations
\cite{Duer2000}, and it is not LME \cite{Kraus2009}.  It therefore cannot be
encoded by independent binary choices.  Several $W$-type bases are nevertheless known
\cite{Miyake2005,Tanaka2007,Pimpel2023}.  Our claim is not a first
$W$-type basis: it is that $W_3$ already admits one inside the restricted
commuting-reflection class.

\begin{proposition}[A commuting-reflection $W$ basis]
\label{prop:w-encoding}
Let
\[
  q=i(bY+aZ),\qquad
  a^2=\frac{\sqrt3+1}{2\sqrt3},\qquad
  b^2=\frac{\sqrt3-1}{2\sqrt3},
\]
where $a,b>0$,
and set $V=q^{\otimes3}$.  Let $G_\triangle$ be the triangle graph, with
stabilizer generators
\[
  K_1=XZZ,\qquad K_2=ZXZ,\qquad K_3=ZZX.
\]
Then
\begin{equation}\label{eq:w-basis}
  \left\{V^\dagger K_1^{s_1}K_2^{s_2}K_3^{s_3}V\ket{W_3}:
  s\in\{0,1\}^3\right\}
\end{equation}
is an orthonormal basis of $W$-type states generated by commuting product
reflections.
\end{proposition}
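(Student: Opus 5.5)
By Theorem~\ref{thm:commuting-characterization} and the orthogonality argument in its proof, it suffices to check two things. First, the operators $V^\dagger S V$, $S\in\mathcal S(G_\triangle)$, are pairwise-commuting product reflections. Second, every nonidentity $S\in\mathcal S(G_\triangle)$ has vanishing expectation value in $\ket{W_3}$ after conjugation:
\[
  \langle W_3|V^\dagger S V|W_3\rangle=0 \qquad (S\neq\openone).
\]
The first point is immediate. Since $a^2+b^2=1$, the operator $bY+aZ$ is a Hermitian reflection, so $q$ is unitary and $V$ is a local unitary. Conjugation preserves products, commutation and the Hermitian-involution property of each local factor. Every vector in \eqref{eq:w-basis} is $W$-type because it is a local unitary image of $\ket{W_3}$. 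Orthonormality then reduces to the displayed condition: $\langle W_3|V^\dagger S_s^\dagger S_{s'}V|W_3\rangle$ equals the expectation of $S_{s\oplus s'}$ up to a sign, exactly as in the proof of Theorem~\ref{thm:commuting-characterization}.

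Next I compute the conjugated single-qubit Paulis. Write $n=(0,b,a)$. Conjugation by $n\cdot\bm\sigma$ acts as $v\cdot\bm\sigma\mapsto 2(n\cdot v)\,n\cdot\bm\sigma-v\cdot\bm\sigma$, and the factor $i$ in $q$ cancels. This gives $q^\dagger Xq=-X$. For $Z$ it gives $q^\dagger Zq=2ab\,Y+(2a^2-1)Z$. With the stated values, $2a^2-1=1/\sqrt3$ and $2ab=\sqrt{2/3}$, so
\[
  Z':=q^\dagger Z q=\tfrac{1}{\sqrt3}\bigl(Z+\sqrt2\,Y\bigr).
\]
Similarly $q^\dagger Y q = 2ab\,Z-(2a^2-1)Y=(\sqrt2 Z - Y)/\sqrt3$. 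The seven nonidentity stabilizer elements fall into three symmetry types: $K_j$ (one $X$, two $Z$), $K_jK_k\propto Y_jY_k$ (two $Y$ on qubits $j,k$, identity on the third), and $K_1K_2K_3\propto YYY$ up to a sign I will fix by direct multiplication. Because $\ket{W_3}$ and $V$ are permutation symmetric, only one representative of each type needs checking.

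The final step is to evaluate these three expectations using the two- and three-body Pauli correlations of $\ket{W_3}$. The needed values are $\langle Z_iZ_j\rangle=-1/3$, $\langle Z_i\rangle=1/3$, $\langle X_iX_j\rangle=\langle Y_iY_j\rangle=2/3$, $\langle X_iY_j\rangle=0$, and the vanishing of every string with an odd number of $X$'s and $Y$'s combined.
\begin{itemize}
\item Type $K_j$: the conjugated operator is $-X\otimes Z'\otimes Z'$. Expanding, each term carries an odd total number of $X/Y$ factors, so this expectation vanishes automatically.
\item Type $Y_jY_k$: the conjugate is a combination of $ZZ$, $ZY$ and $YY$ on two qubits. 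The expectation is $\tfrac13\bigl(2\langle ZZ\rangle+\langle YY\rangle\bigr)=\tfrac13(-\tfrac23+\tfrac23)=0$, since the cross terms are odd.
\item Type $YYY$: the conjugate has only mixed $Z/Y$ strings. The odd-$Y$ ones vanish, and $\langle ZZZ\rangle=-1$ and $\langle ZYY\rangle=1/3$ must combine with the coefficients to zero. This is precisely what fixes $a,b$, and I expect it to be the delicate step.
\end{itemize}
The main obstacle is therefore sign and coefficient bookkeeping in the three-body term. If the identification of $K_1K_2K_3$ is off by a Pauli, I would instead compute it directly as a product and re-expand.
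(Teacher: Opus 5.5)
Your overall route is the paper's: reduce to the seven nonidentity expectations, conjugate the Paulis by $q$, and use permutation symmetry. Your formulas for $q^\dagger Xq$, $q^\dagger Zq$, $q^\dagger Yq$, the identity $K_jK_k=Y_jY_k$, and the two-body computation are all correct. In fact it is this two-body condition, $c^2=2d^2$ with $c=\sqrt{2/3}$ and $d=1/\sqrt3$, that fixes $a$ and $b$. There are, however, two concrete errors to fix.

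\textbf{The weight-three element.} It is not proportional to $YYY$. On the three qubits the factors of $K_1K_2K_3$ are $X\cdot Z\cdot Z=X$, $Z\cdot X\cdot Z=-X$, and $Z\cdot Z\cdot X=X$. So $K_1K_2K_3=-X_1X_2X_3$, and since $q^\dagger Xq=-X$ its conjugate is $+X_1X_2X_3$. This operator flips all three qubits and maps the one-excitation sector to the two-excitation sector. Its expectation in $\ket{W_3}$ therefore vanishes with no condition on $a,b$; the step you flagged as delicate is trivial. Your plan would also have misfired as written, because your tabulated value is wrong: $\langle Z_1Y_2Y_3\rangle=2/3$, not $1/3$. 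With $1/3$, the $YYY$-type expectation $-c^3+3cd^2\langle ZYY\rangle$ would equal $c(d^2-c^2)\neq0$ at the stated $a,b$, and you would have wrongly concluded the orbit is not orthogonal.

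\textbf{The justification in your $K_j$ bullet.} Expand $-X\otimes Z'\otimes Z'$ with $Z'=(Z+\sqrt2\,Y)/\sqrt3$. The cross terms $X\otimes Z\otimes Y$ and $X\otimes Y\otimes Z$ contain two flip operators, not an odd number. They can map the one-excitation sector into itself (e.g.\ $\ket{001}\mapsto\ket{100}$), so your parity count does not kill them. They vanish for the other reason you invoke only later. Each contains exactly one $Y$, so all its computational-basis matrix elements are purely imaginary. Its expectation in the real state $\ket{W_3}$ is then both imaginary and, by Hermiticity, real, hence zero. This is precisely the paper's argument for $E_1$: every term either changes the excitation sector or has imaginary matrix elements in a real state.

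With $K_1K_2K_3\mapsto X_1X_2X_3$ and this reality argument for the odd-$Y$ terms, your proof coincides with the paper's.
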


The encoders are the conjugated stabilizer generators
$E_i=V^\dagger K_iV=-X_i\widetilde Z_j\widetilde Z_k$, built from
$\widetilde Z=q^\dagger Zq=cY+dZ$ with $c=\sqrt{2/3}$ and
$d=1/\sqrt3$.  They commute and square to the identity because they are
unitary conjugates of stabilizer generators, so only orthogonality of their
orbit remains to be checked.  Permutation symmetry of $\ket{W_3}$ reduces
that to one expectation value per Hamming weight of $s$.  Two of the three
vanish because $E_1$ and $E_1E_2E_3$ either change the excitation number or
have imaginary matrix elements in a real state; the remaining one vanishes
precisely when $c^2=2d^2$, which is what fixes $a$ and $b$.  The full
computation is carried out in \cref{app:w3-m4-proofs}.

\begin{corollary}\label{cor:strict-inclusion}
Independent encoders are strictly weaker than commuting-reflection encoders:
\[
  \EI\subsetneq\ECI.
\]
\end{corollary}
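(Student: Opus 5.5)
The corollary needs two facts: the inclusion $\EI\subseteq\ECI$, and a single state in $\ECI$ that is not in $\EI$. The inclusion was already shown after \cref{thm:kraus}. By that theorem, every state in $\EI$ is local-unitary equivalent to a flat state \eqref{eq:lme-flat}. The $2^n$ commuting reflections $Z^a$ map such a state to an orthonormal basis, because $\langle\psi|Z^{a}Z^{a'}|\psi\rangle=\frac1N\sum_x(-1)^{(a\oplus a')\cdot x}=\delta_{aa'}$. Conjugating these reflections back by the local unitary gives a commuting-reflection encoding of the original state. Equivalently, this is the empty-graph case of \cref{thm:commuting-characterization}.

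For strictness at $n=3$, I take the witness $\ket{W_3}$. \Cref{prop:w-encoding} gives an explicit commuting-reflection encoding of it: the family $E_s=V^\dagger K_1^{s_1}K_2^{s_2}K_3^{s_3}V$ consists of commuting product reflections, and its orbit is orthonormal. Hence $\ket{W_3}\in\ECI^{(3)}$. On the other side, $\ket{W_3}$ is not LME \cite{Kraus2009}, so by \cref{thm:kraus} it admits no independent encoding, and $\ket{W_3}\notin\EI^{(3)}$. The separation is therefore witnessed.

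If the statement is read at four qubits, as in \eqref{eq:strict-four-qubit-hierarchy}, I use $\ket{W_3}\otimes\ket0$ as the witness.

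\emph{Membership.} Take the three-qubit encoding and tensor it with the single-qubit commuting reflections $\{\openone,X\}$ on the fourth qubit. The resulting 16 operators are still commuting product reflections. Their orbit is orthonormal because the overlaps factor: $\langle W_3|E_s^\dagger E_{s'}|W_3\rangle\,\langle0|X^{t}X^{t'}|0\rangle=\delta_{ss'}\delta_{tt'}$.

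\emph{Non-membership.} I need $\ket{W_3}\otimes\ket0\notin\EI^{(4)}$, and this is the only step requiring care. Suppose $\ket{W_3}\otimes\ket0$ were flat in some product basis, as in \eqref{eq:lme-flat}. Since the state is a product across the cut $123{:}4$, I want to conclude that the first factor is flat in the corresponding three-qubit product basis. The route I would try first is the LME characterization itself: LME is preserved when a product factor is removed, because the local controlled unitaries acting on the first three qubits already produce a maximally entangled state for that factor. The same conclusion also follows by comparing Schmidt decompositions of flat product states. Either way, $\ket{W_3}$ would be LME, which is a contradiction.

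In every case the conclusion is $\EI\subsetneq\ECI$.
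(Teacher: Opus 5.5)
Your proof is correct and is essentially the paper's: the inclusion comes from applying the reflections $Z^a$ to the flat form of \cref{thm:kraus}, and strictness is witnessed by $\ket{W_3}$, which is not LME yet lies in $\ECI$ by \cref{prop:w-encoding}. Your four-qubit extension matches \cref{rem:four-qubit-w-witness}, where the paper drops the factor $\ket0$ by noting that a product state is flat in a product basis only if each factor is (summing the flatness condition over the fourth qubit's basis gives this), while your ancilla route works equally well once you note that the register marginal factorizes as $\rho_{A_{123}}\otimes\rho_{A_4}$, so maximal mixedness forces $\rho_{A_{123}}=\openone/8$.
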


\begin{proof}
The inclusion was established after \cref{thm:kraus}.  It is strict because
$\ket{W_3}\notin\EI$ by \cref{thm:kraus}, whereas
$\ket{W_3}\in\ECI$ by \cref{prop:w-encoding}.
\end{proof}

The separation persists when the system is enlarged, which is what allows us
to state the whole hierarchy \eqref{eq:strict-four-qubit-hierarchy} at four
qubits.

\begin{remark}[A four-qubit witness for the first separation]
\label{rem:four-qubit-w-witness}
The state $\ket{W_3}\otimes\ket0$ lies in
$\ECI^{(4)}\setminus\EI^{(4)}$.
\end{remark}

Indeed, appending the two encoders $\openone$ and $X$ on the fourth
qubit to the eight commuting reflections of \cref{prop:w-encoding} gives
sixteen commuting product reflections whose orbit is an orthonormal basis,
so the state lies in $\ECI^{(4)}$.  It is not in $\EI^{(4)}$: by
\cref{thm:kraus} that would make it flat in some product basis, and a
product state is flat in a product basis only if each of its factors is,
which would force $\ket{W_3}$ to be LME.

\subsection{The \texorpdfstring{$M_4$}{M4} basis: exact encoding and
strict separation}
\label{sec:m4-separator}

We now show that commuting reflections, in turn, are strictly weaker than
general local encoders.  The witness is the four-qubit
Higuchi--Sudbery state
\begin{equation}\label{eq:m4-state}
  \begin{aligned}
  \ket{M_4}=\frac1{\sqrt6}\Bigl[&
  \ket{0011}+\ket{1100}\\[-2pt]
  &+\omega(\ket{1010}+\ket{0101})\\[-2pt]
  &+\omega^2(\ket{1001}+\ket{0110})\Bigr],
  \qquad \omega=e^{2\pi i/3}.
  \end{aligned}
\end{equation}
For a real four-vector
$\bm q=(q_0,q_1,q_2,q_3)$ of unit length, define the single-qubit unitary
\begin{equation}\label{eq:quaternion-gate}
  Q(\bm q)=q_0\openone-i(q_1X+q_2Y+q_3Z).
\end{equation}
Set $s=1/\sqrt2$ and $h=1/2$.  For row $r$ of
\cref{tab:m4-encoders}, let
\begin{equation}\label{eq:m4-encoders}
  U_r=Q(\bm q_{r1})\otimes Q(\bm q_{r2})\otimes Q(\bm q_{r3})
      \otimes\openone,
\end{equation}
so that the fourth qubit is untouched.\footnote{This is no loss of
generality here.  The state $\ket{M_4}$ is a four-qubit $SU(2)$
singlet and is therefore invariant under $v^{\otimes4}$ for every
$v\in SU(2)$ \cite{Higuchi2000,Kempe2001,Bourennane2004}.  Replacing any
encoder $U$ by $U\,v^{\otimes4}$, with $v$ the inverse of its fourth
factor, leaves the encoded state unchanged and trivializes that factor.}

\begin{proposition}[The $M_4$ basis]\label{prop:m4-encoding}
The sixteen states
$\{U_r\ket{M_4}:r=1,\ldots,16\}$ form an orthonormal basis.  Every
one-qubit gate in \cref{tab:m4-encoders} is a Clifford unitary.  The
sixteen product unitaries contain a pair that does not commute even up to an
overall phase.
\end{proposition}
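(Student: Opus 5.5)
The proposition has three parts, and I would handle them in increasing order of difficulty: Clifford membership, noncommutation, and then the orthonormality of the sixteen states.

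\emph{Clifford property.} A gate $Q(\bm q)$ is a single-qubit Clifford exactly when it maps each Pauli to $\pm$ a Pauli under conjugation. Equivalently, the $SO(3)$ rotation determined by the unit quaternion $\bm q$ must permute the coordinate axes up to sign. The binary octahedral group consists of the quaternions
\[
\pm\openone,\quad (\pm1\pm i\pm j\pm k)/2,\quad (\pm1\pm i)/\sqrt2\ \text{and its analogues}.
\]
These correspond exactly to entries built from $0$, $\pm h$, $\pm s$, $\pm1$ with the appropriate support pattern. So I would check row by row that each $\bm q_{rj}$ in \cref{tab:m4-encoders} has one of these forms: a single nonzero entry $\pm1$, two entries $\pm s$, or four entries $\pm h$.

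\emph{Noncommuting pair.} Conjugation by $Q(\bm q)$ acts on Pauli vectors through the rotation $R(\bm q)$, and the quaternion product gives the local commutator. Pick two rows $r,r'$ and compute at each qubit the group commutator $Q_{rj}Q_{r'j}Q_{rj}^{-1}Q_{r'j}^{-1}$ in the quaternion algebra. For $U_r$ and $U_{r'}$ to commute up to a phase, every one of these local commutators would have to be a scalar, which in $SU(2)$ means $\pm\openone$. It therefore suffices to exhibit a single qubit where two quaternions neither commute nor anticommute. A concrete example would be one factor with four entries $\pm h$ and another with two entries $\pm s$ on a non-aligned axis, since their commutator is then a nontrivial element of order $3$ or $4$.

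\emph{Orthonormality.} Normalization is automatic, so the task is to show $\langle M_4|U_r^\dagger U_{r'}|M_4\rangle=0$ for $r\neq r'$. That is $120$ overlaps, and I would reduce them structurally rather than computing each one directly.
\begin{enumerate}[label=(\roman*)]
\item Because $U_r^\dagger U_{r'}$ is again a product of Cliffords acting trivially on qubit 4, each overlap is an expectation value $\langle M_4|C_1\otimes C_2\otimes C_3\otimes\openone|M_4\rangle$. Here each $C_j$ is a quaternion-valued Clifford.
\item Expand each $C_j=c_0\openone-i\,\bm c\cdot\bm\sigma$. The overlap then becomes a multilinear combination of the correlation tensor entries $\langle\sigma_\mu\otimes\sigma_\nu\otimes\sigma_\lambda\otimes\openone\rangle_{M_4}$.
\item $M_4$ is $2$-uniform in its two-body marginals, with one-body marginals $\openone/2$, and it is an $SU(2)$ singlet. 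Hence its three-body correlators $T_{\mu\nu\lambda}$ with the fourth qubit traced out are rotationally invariant tensors. Invariant rank-$3$ tensors are proportional to $\epsilon_{\mu\nu\lambda}$, and the rank-$2$ ones to $\delta_{\mu\nu}$. From the explicit state, $\langle\sigma_\mu\sigma_\nu\rangle=-\tfrac13\delta_{\mu\nu}$ on every pair. The three-body term is $\kappa\,\epsilon_{\mu\nu\lambda}$, with $\kappa$ fixed by one component; I expect $\kappa$ to be imaginary-free, of size $\pm 1/\sqrt3$, from the $\omega$ phases.
\item Substituting gives a closed formula:
\[
\langle C\rangle=c_{10}c_{20}c_{30}-\tfrac13\sum_{j<k}(\pm\text{ terms } c_{l0}\,\bm c_j\cdot\bm c_k)+\kappa'\,\bm c_1\cdot(\bm c_2\times\bm c_3).
\]
Verifying that this vanishes for all $120$ pairs is then finite arithmetic on table entries.
\end{enumerate}

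I expect the main obstacle to be organizing step (iv) without brute force. First, I would look for group structure in the table, for example rows forming a coset of a small group so that $U_r^\dagger U_{r'}$ ranges over only a few classes. This would cut the number of distinct overlap evaluations from $120$ to a handful. Failing that, I would simply tabulate the formula above symbolically for all pairs.
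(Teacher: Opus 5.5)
Your proposal is correct, but it argues differently from the paper. The paper's proof is pure exact computation: it evaluates the full Gram matrix \eqref{eq:m4-gram} symbolically over $\mathbb Q(\sqrt2,\sqrt3,i)$, conjugates $X,Y,Z$ by each listed factor, and checks $U_2U_3\not\propto U_3U_2$ by direct multiplication, all in the companion scripts. You instead use structure at every step.

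\emph{Clifford claim.} Identifying $Q(\bm q)$ with the quaternion $q_0+q_1\mathbf i+q_2\mathbf j+q_3\mathbf k$ is legitimate, since $-iX,-iY,-iZ$ satisfy the quaternion relations. The claim then becomes membership in the binary octahedral group, which can be read off from the support patterns in \cref{tab:m4-encoders}.

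\emph{Noncommutation.} This becomes a one-qubit check. For rows $2$ and $3$ at qubit $1$, $\mathbf i$ and $(1-\mathbf i-\mathbf j-\mathbf k)/2$ neither commute nor anticommute. In fact an $h$-type and an $s$-type quaternion never do, so your ``non-aligned'' qualifier is unnecessary.

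\emph{Orthogonality.} The singlet property of $M_4$ collapses each of the $120$ overlaps to a formula in three quaternions. Write $C_j=\bar q_{rj}q_{r'j}=c_{j0}\openone-i\,\bm c_j\cdot\bm\sigma$. With the signs fixed, one finds
\[
  \langle M_4|U_r^\dagger U_{r'}|M_4\rangle
  =c_{10}c_{20}c_{30}
  +\tfrac13\bigl(c_{10}\,\bm c_2\cdot\bm c_3+c_{20}\,\bm c_1\cdot\bm c_3
  +c_{30}\,\bm c_1\cdot\bm c_2\bigr)
  \pm\frac{i}{\sqrt3}\det(\bm c_1,\bm c_2,\bm c_3),
\]
since $(-i)^2\cdot(-\tfrac13)=+\tfrac13$ and $(-i)^3=i$; the sign of the last term is immaterial. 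Orthogonality therefore splits into one real condition and the separate condition $\det(\bm c_1,\bm c_2,\bm c_3)=0$. Spot checks (rows $1$ vs.\ $4$, $2$ vs.\ $3$, $3$ vs.\ $4$) vanish in exactly this way. The correlator input is precisely \eqref{eq:m4-correlators}, which the paper derives anyway for \cref{prop:m4-separator}.

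\emph{Comparison.} Your route explains why the table works and gives an explicit orthogonality criterion that could guide searches for further $M_4$ bases. It still ends in a finite case check, so it shrinks the computation rather than removing it. The paper's route needs no structural input at all.

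\emph{One factual slip.} $M_4$ is not $2$-uniform; no four-qubit state is, and its two-qubit marginals have spectrum $\{1/2,1/6,1/6,1/6\}$. Your claim of $2$-uniformity would contradict the $-\tfrac13\delta_{\mu\nu}$ correlators you then correctly use. Those correlators follow from singlet invariance together with this spectrum.
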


\begin{proof}
Every vector listed in \cref{tab:m4-encoders} has unit norm, so each
factor \eqref{eq:quaternion-gate} is unitary.  The three claims are then
finite exact computations in the field generated by $\sqrt2$, $\sqrt3$,
and $i$, which we carried out symbolically; the scripts are in the
companion repository.  First,
\begin{equation}\label{eq:m4-gram}
  \bra{M_4}U_r^\dagger U_t\ket{M_4}=\delta_{rt},
\end{equation}
and sixteen orthonormal states in a sixteen-dimensional Hilbert space form
a complete basis.  Second, conjugating $X,Y,Z$ by each listed local factor
returns a signed Pauli matrix, which is the single-qubit Clifford
condition.  Third, taking rows 2 and 3, the product $U_2U_3$ is not
proportional to $U_3U_2$.
\end{proof}

\refstepcounter{table}\label{tab:m4-encoders}
\begin{center}
\begin{minipage}{\columnwidth}
\begingroup
\centering
\scriptsize\upshape
\setlength{\tabcolsep}{1.7pt}
\renewcommand{\arraystretch}{1.03}
\begin{tabular}{@{}c|c|c|c@{}}
$r$&$\bm q_{r1}$&$\bm q_{r2}$&$\bm q_{r3}$\\
\hline
1 &(1,0,0,0)&(1,0,0,0)&(1,0,0,0)\\
2 &(0,1,0,0)&(0,0,1,0)&(0,1,0,0)\\
3 &$(h,-h,-h,-h)$&$(h,h,h,h)$&$(s,0,0,s)$\\
4 &$(s,0,0,-s)$&$(h,-h,h,-h)$&$(h,h,-h,h)$\\
5 &$(0,0,s,-s)$&(1,0,0,0)&$(0,0,s,s)$\\
6 &$(s,-s,0,0)$&(0,0,0,1)&$(s,s,0,0)$\\
7 &$(0,s,s,0)$&$(h,h,-h,-h)$&$(h,-h,h,h)$\\
8 &(0,0,1,0)&(0,0,0,1)&(0,0,1,0)\\
9 &$(0,0,s,s)$&(0,0,1,0)&$(0,0,s,-s)$\\
10&$(s,0,0,s)$&$(h,-h,-h,h)$&$(h,h,h,-h)$\\
11&(0,0,0,1)&(0,1,0,0)&(0,0,0,1)\\
12&$(h,h,h,-h)$&$(h,-h,-h,h)$&$(0,s,-s,0)$\\
13&$(0,s,-s,0)$&$(h,h,h,h)$&$(h,-h,-h,-h)$\\
14&$(h,-h,h,h)$&$(h,h,-h,-h)$&$(s,0,0,-s)$\\
15&$(s,s,0,0)$&(0,1,0,0)&$(s,-s,0,0)$\\
16&$(h,h,-h,h)$&$(h,-h,h,-h)$&$(0,s,s,0)$
\end{tabular}
\par\medskip
\raggedright\small
\textsc{Table \thetable.}\quad
Four-vector parameters for the sixteen coordinated local-Clifford encoders of
$\ket{M_4}$, with $s=1/\sqrt2$ and $h=1/2$.
\endgroup
\end{minipage}
\end{center}

\begin{proposition}[The $M_4$ separator]\label{prop:m4-separator}
The Higuchi--Sudbery state admits no commuting-reflection encoding.
\end{proposition}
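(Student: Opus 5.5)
By \cref{thm:commuting-characterization}, it suffices to show that no choice of a local unitary $V$ and a maximal stabilizer group $\mathcal S$ makes every nonidentity element of $V^\dagger\mathcal S V$ have zero expectation in $\ket{M_4}$. Combining this with \cref{lem:reflection-alignment}, the statement to refute reads as follows. At each qubit $j$ pick an orthonormal frame $(\hat{\bm x}_j,\hat{\bm y}_j,\hat{\bm z}_j)$, given by a rotation $O_j\in SO(3)$. Replace each Pauli letter in the elements of a four-qubit stabilizer group by the corresponding rotated reflection. The claim to refute is that all fifteen resulting nonidentity product observables have vanishing expectation in $\ket{M_4}$. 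By the singlet invariance noted in the footnote to \eqref{eq:m4-encoders}, a common rotation acts trivially, so I would fix $O_4=\openone$. This leaves nine continuous parameters, together with a finite choice of stabilizer group up to local Clifford, which amounts to a relabelling of frame axes.

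The first step is to compute the full correlation tensor of $\ket{M_4}$ using $SU(2)$ invariance. Every $k$-body correlation tensor $T_{\alpha_1\cdots\alpha_k}=\langle\sigma_{\alpha_1}\otimes\cdots\otimes\sigma_{\alpha_k}\rangle$ on a subset of qubits must be an $SO(3)$-invariant tensor. The one-body tensors therefore vanish, which is 1-uniformity. The two-body tensors are $t_{jk}\,\delta_{\alpha\beta}$. The three-body tensors are multiples of $\varepsilon_{\alpha\beta\gamma}$. The four-body tensor is a combination of $\delta_{\alpha\beta}\delta_{\gamma\delta}$ and its two pairings. I would fix all the scalar coefficients by evaluating a few explicit expectation values on \eqref{eq:m4-state}. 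I expect every $t_{jk}$ to be nonzero, since the two-qubit marginals are Werner states, with the sign fixed by the $\omega$-phases. A weight-two element $(\hat{\bm r}\cdot\bm\sigma)_j(\hat{\bm s}\cdot\bm\sigma)_k$ then has expectation $t_{jk}\,\hat{\bm r}\cdot\hat{\bm s}$. This vanishes exactly when the two axes are perpendicular. The weight-three and weight-four conditions likewise become explicit polynomial conditions on the frames: a triple product $\hat{\bm r}\cdot(\hat{\bm s}\times\hat{\bm u})$, and a combination of pairwise dot products, respectively.

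The second step is a combinatorial reduction. Since no 2-uniform four-qubit state exists, every four-qubit stabilizer group contains a nonidentity element of weight at most two. Its weight-one elements already vanish by 1-uniformity, so the real constraints come from weight two and above. I would run through the local-Clifford classes of four-qubit graphs: empty, one edge, two disjoint edges, path $P_3$ plus an isolated vertex, star/GHZ, and linear cluster. For each class I would list the stabilizer group's Pauli words in terms of frame axes. Each weight-two word forces a specific perpendicularity between frames at two qubits, and these are rigid relations because each frame is orthonormal. I would then substitute these relations into the weight-three words, which need a nonzero triple product to be avoided, and into the weight-four words. The goal is to show that the conditions are inconsistent in every case. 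A typical contradiction is that perpendicularity relations force some triple of axes to be an orthonormal frame, so that the corresponding $\varepsilon$-contraction equals $\pm1$ times a nonzero coefficient.

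The main obstacle is this case analysis for the most entangled classes, namely GHZ and the linear cluster. These have few weight-two elements, so the frames are only weakly pinned down and the nonvanishing must come from the weight-four tensor. For these cases I would first use the perpendicularity constraints to reduce the frames to a finite set of relative orientations up to the residual continuous freedom. I would then evaluate the four-body invariant, combined with a weight-three element, as a function of the remaining angles, and show it is bounded away from zero. If this turns out messy, a fallback is to argue via the sum of squared expectations. Flatness in a stabilizer basis forces $\sum_{S\neq\openone}\langle S\rangle^2=0$, while the $SO(3)$-invariant structure of the correlation tensor might give a positive lower bound for any frame choice. This would bypass the case split, though I am less sure such a bound holds uniformly.
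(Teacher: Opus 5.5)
Your setup is the paper's. \cref{thm:commuting-characterization} together with \cref{lem:reflection-alignment} reduces the claim to excluding local frames in which all fifteen nonidentity stabilizer elements have zero expectation. $SO(3)$-invariance of the correlation tensors of $\ket{M_4}$ then turns the weight-two, -three and -four conditions into perpendicularity, coplanarity and a pairing rule, with coefficients $-1/3$, $\pm1/\sqrt3$, and $1/3$ for each of the three pairings. Your six local-Clifford classes are also the paper's.

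The gap is that the proposal stops exactly where the proof has content, and your forecast of where the difficulty lies would steer the case analysis wrongly. The star/GHZ class is trivial, not hard. The star stabilizer contains the six weight-two words $Z_1X_2,Z_1X_3,Z_1X_4,X_2X_3,X_2X_4,X_3X_4$ (in GHZ form, all $Z_jZ_k$), so $\mathbf z_1,\mathbf x_2,\mathbf x_3,\mathbf x_4$ would have to be four mutually perpendicular unit vectors in $\mathbb R^3$. The genuinely hard classes are two disjoint edges and the four-vertex path. For both, your recipe ``reduce to finitely many orientations by perpendicularity, then show a four-body invariant is bounded away from zero'' does not work as stated. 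Your fallback does not rescue it either. The frame space is compact and there are finitely many stabilizer groups, so a uniform positive lower bound on $\sum_{S\neq\openone}\langle S\rangle^2$ is equivalent to the proposition itself: a restatement, not a route.

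\textbf{Two disjoint edges.} Here there are no weight-three elements at all. Perpendicularity ($\mathbf x_1\perp\mathbf z_2$, $\mathbf z_1\perp\mathbf x_2$, $\mathbf y_1\perp\mathbf y_2$, and likewise at sites $3,4$) makes the overlap matrix of each pair of frames an orthogonal matrix with zero diagonal, hence a signed permutation. But the relative rotation between the pairs $\{1,2\}$ and $\{3,4\}$ remains a free element of $SO(3)$. You must show that the nine weight-four conditions have no common zero on it. The paper's idea: for perpendicular pairs the four-body rule reads $\operatorname{tr}(S_iT_j)=0$, with $S_i=\mathbf a_i\mathbf b_i^{T}+\mathbf b_i\mathbf a_i^{T}$ and $T_j=\mathbf c_j\mathbf d_j^{T}+\mathbf d_j\mathbf c_j^{T}$. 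Each triple is linearly independent in the five-dimensional space of traceless symmetric $3\times3$ matrices, so the two triples cannot be mutually orthogonal.

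\textbf{The four-vertex path.} The only weight-two words are $X_1Z_2$ and $Z_3X_4$, which impose two scalar conditions on nine frame parameters, nowhere near a finite set. The pinning comes from the eight weight-three coplanarity conditions. They force one of the pairs $(\mathbf x_3,\mathbf z_4)$, $(\mathbf y_3,\mathbf y_4)$ to lie along the normal $\mathbf n$ of $P=\operatorname{span}(\mathbf x_1,\mathbf z_2)$ and the other to be an orthonormal basis of $P$. The same holds symmetrically for $(\mathbf y_1,\mathbf y_2)$ and $(\mathbf z_1,\mathbf x_2)$. Only one angle then survives, and the four-body rule gives $\langle YXXY\rangle^2+\langle YXYZ\rangle^2=1/9$, so these two stabilizers cannot both vanish.

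Without these two arguments, and the zero-diagonal orthogonal-matrix observation that also settles the one-edge case, your proposal is an outline of the right strategy rather than a proof.
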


The reason can be seen without solving for any encoder.  By
\cref{thm:commuting-characterization}, such an encoding would make $M_4$
flat in some graph-state basis after a suitable local change of
basis, and flatness forces the expectation value of
every nonidentity graph stabilizer to vanish, because every such
stabilizer has equally many $+1$ and $-1$ eigenvectors in the graph basis.
The local change of basis turns the three Pauli directions of each qubit
into an unknown orthonormal frame of \emph{axes} in ordinary
three-dimensional space, and each letter of a stabilizer selects one axis
of the corresponding frame.  The state $M_4$ is a four-qubit $SU(2)$
singlet, so it is invariant under identical rotations $v^{\otimes4}$
\cite{Higuchi2000,Kempe2001,Bourennane2004}.  Consequently, its
correlations see these axes only through
rotation-invariant combinations; for instance, for any two qubits and unit
vectors $\mathbf a,\mathbf b$,
$\langle\sigma_{\mathbf a}\sigma_{\mathbf b}\rangle
=-\tfrac13\,\mathbf a\cdot\mathbf b$, where
$\sigma_{\mathbf a}=a_xX+a_yY+a_zZ$.  For $M_4$, a vanishing
two-qubit expectation means that two ordinary three-dimensional axes are
perpendicular, while a vanishing three-qubit expectation means that three
such axes lie in one plane.  Local-Clifford changes and relabelling reduce
all $64$ four-vertex graphs to six inequivalent types.  The reduction
is free because single-qubit Cliffords are themselves local unitaries,
absorbed into the arbitrary basis change, and they act on graph states by
local complementation.  Together with
the corresponding four-qubit correlation rule these elementary constraints
exclude all six possible graphs(\cref{fig:m4-graph-classes}).  For example, the star graph would require three axes to be
mutually perpendicular and a fourth axis to be perpendicular to all three.
In the hardest case, the four-vertex path, the perpendicularity and
coplanarity conditions force two four-qubit stabilizers to obey
$\langle YXXY\rangle^{2}+\langle YXYZ\rangle^{2}=1/9$, so their
expectations cannot both vanish.
The full case analysis is given in \cref{app:w3-m4-proofs}.

\input{figures/m4_graph_classes}

Combining the two propositions gives the exact separation
\begin{equation}\label{eq:m4-separation}
  \ket{M_4}\in\EG^{(4)}\setminus\ECI^{(4)},
\end{equation}
and hence $\ECI^{(4)}\subsetneq\EG^{(4)}$.  We had originally studied
$\ket{M_4}$ as a candidate for a nonencodable state; it turned out to be
encodable, but only with the full freedom of general local encoders.

The example is independently notable for its entanglement.  It globally
maximizes both the geometric measure
\cite{Derksen2017,DenkerImaiGuhne2025} and the average two-versus-two linear
entropy \cite{Gour2010}.  Higuchi and Sudbery originally proposed it using
the corresponding average von Neumann entropy; it is a proven local maximum
and a global maximum within the 1-uniform family for that functional, while
unrestricted global optimality remains conjectural
\cite{Higuchi2000,Brierley2007,Gour2010}.  Its largest squared overlap with a
fully product state is $2/9$, the minimum possible for four qubits.  Thus
all sixteen locally equivalent basis vectors share the global geometric- and
linear-entropy extremality.  This extremality has an immediate operational
consequence: a global measurement in the $M_4$ basis identifies a
uniformly encoded four-bit message perfectly, whereas any LOCC measurement,
and indeed any fully separable one, succeeds with probability at most
$2/9$ (see \cref{app:w3-m4-proofs}).  The basis therefore hides classical
data from local observers in the spirit of
Refs.~\cite{DiVincenzo2002,Hayashi2006}.  Recent bipartite work similarly
treats shared entanglement as a resource for locally distinguishing maximally
entangled bases \cite{BandyopadhyayRusso2024}; the $M_4$ basis merits
further study from this perspective.

The $W$ and $M_4$ examples show that widening the allowed encoder
class really does enlarge the accessible set of states.  We now remove all
restrictions on the product encoders and ask whether general local encoding
is universal.

\section{A four-qubit obstruction to unrestricted local encoding}
\label{sec:nonencodability}

Our goal in this section is to prove that some four-qubit states admit no
local encoding at all, however the sixteen product unitaries are chosen.

The strategy is a comparison of dimensions.  We use a regular subset of the
four-qubit 1-uniform states---states whose one-qubit marginals are all
maximally mixed---which forms a smooth 18-dimensional family, and prove that
the encodable states within it have at most 15 independent parameters.  They
therefore cannot fill the family.

Let us first explain why such a bound cannot be obtained by simply counting
equations and unknowns.  An encoding involves not only the state but also
sixteen auxiliary product unitaries; we call this list, with one
product unitary per basis element, an \emph{encoder table}.  The encoders contain many
parameters and many redundancies: a single state may admit a vast set of
different encoder tables.  Subtracting the number of orthogonality
constraints from the number of encoder parameters therefore says nothing
about how many \emph{states} are encodable---which is why the count in
Ref.~\cite{Pimpel2023} could only be heuristic.  What is needed is a
constraint on the state alone: one that holds no matter which encoders are
used, and no matter how the encoders readjust as the state changes.

To see the mechanism, compare two infinitesimal 
changes of a normalized state $\ket{\psi}$, 
\begin{equation}
\ket{\psi} \mapsto \ket{\psi} + \varepsilon \ket{v}
\mbox{ and }
\ket{\psi} \mapsto \ket{\psi} + \delta \ket{w}
\end{equation}
represented by ordinary first derivatives $\ket v$ 
and $\ket w$ after fixing the irrelevant global phase.  
Note that these infinitesimal changes do not need to be
generated by a unitary transformation. 

To each such pair we will assign a single number, the \emph{signed area}\footnote{Up to normalization conventions, this is the Fubini--Study
symplectic form on projective Hilbert space; for its role in the geometry of
entanglement and local-unitary or fixed-marginal manifolds, see
Refs.~\cite{Sawicki2011,SawickiOszmaniec2014,DulianSawicki2025}.} 
\begin{equation}
\omega_\psi(v,w)=2\operatorname{Im}\langle v|w\rangle,
\end{equation}
for the one-dimensional case, that is a 
single complex coefficient, it is the oriented
area spanned by the two changes in the complex plane, 
and for a state vector the contributions of all 
coefficients are added.

Two facts about this quantity drive the proof.  First, when 
an orthonormal basis varies smoothly, the signed areas for any pair of the variations, contributed by its $N$ elements
always sum to zero.  This is the ONB-cancellation lemma,
\cref{lem:onb-cancellation}. It involves no locality at 
all and holds because a small change of an ONB has to be
unitary and is thus generated by a Hermitian matrix,
exactly as in ordinary unitary time evolution. Then, the
statement follows because the signed area takes only the 
imaginary parts into account.

Second, consider a 1-uniform state $\ket{\psi}$ which is locally encodable, and consider variations of this state
in two directions, which are also locally encodable. These
variations naturally require an adjustment of the unitaries 
uses for the encoding. This leads to variations of each of
the basis states in two directions, driven by two contributions: The variation of $\ket{\psi}$ and the adjustment of the unitaries. For a basis state 
$\ket{\phi_a}=U_a\ket\psi$, undoing $U_a$ leads to the 
same two state changes $v,w$ for every $a$, plus local-unitary readjustments. In this situation the local-adjustment lemma, \cref{lem:local-null} shows that 1-uniformity implies that the readjustments of the encoding make no contribution
to the signed area. Every
basis vector $\ket{\phi_a}$ therefore contributes the same signed area $\omega_\psi(v,w)$.
The equal-contributions identity, \cref{prop:equal-contributions}, combines
the two facts: $N$ equal contributions with zero sum must each vanish, so
the signed area of any two changes within such a family is zero.  

However, for any direction $\ket{v}$ the signed area does not vanish if we take as a second direction $\ket{w} = i \ket v$.
This allows to find for any state $\ket{\psi}$ in the manifold of one-uniform states pairs of variations within this
manifold where the signed area does not vanish. Consequently,
\cref{sec:dimension-count} turns the zero-area condition 
into a bound of at most 15 independent state directions by a dimension count. \Cref{fig:no-go-overview} summarizes the argument.

\input{figures/no_go_overview}

The result can now be stated precisely.  Here
$\mathcal Z_4^{\mathrm{reg}}$ denotes the four-qubit 1-uniform states at which
the marginal conditions are independent, made precise below, and $\EG$ is the
set of locally encodable states.  The encodable states, together with their
encoder tables, need not form one smooth family: the set may contain
exceptional points at which the number of independent changes is different.
The dimension of such a set means the largest dimension of any of its
smooth pieces; \cref{app:technical-dimension} explains why treating the
pieces separately is sufficient.

\begin{theorem}[Non-universality of local encodings]
\label{thm:nonencodable}
For four qubits,
\begin{align}\label{eq:main-dimension-bound}
  \dim_{\mathbb R}\mathcal Z_4^{\mathrm{reg}}&=18,\nonumber\\
  \dim_{\mathbb R}
  \bigl(\EG\cap\mathcal Z_4^{\mathrm{reg}}\bigr)&\le15.
\end{align}
In particular, the encodable states do not fill
$\mathcal Z_4^{\mathrm{reg}}$, so nonencodable four-qubit pure states exist.
\end{theorem}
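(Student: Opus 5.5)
We will follow the outline sketched before the statement, which has four ingredients.

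\emph{Step 1: the dimension of the family.} We compute $\dim_{\mathbb R}\mathcal Z_4^{\mathrm{reg}}$ by regarding 1-uniform states as the zero set of the map $\psi\mapsto(\langle\sigma^{(j)}_\alpha\rangle)_{j,\alpha}$. This map collects $4\times3=12$ real marginal expectations on the unit sphere $S^{31}$ of $\mathbb C^{16}$, and we then quotient by the global phase. Projective space $\mathbb{CP}^{15}$ has real dimension $30$. At points where the $12$ differentials are linearly independent (this is the definition of the regular set), the implicit function theorem gives a smooth manifold of dimension $30-12=18$. Non-emptiness is witnessed by exhibiting one explicit 1-uniform state, e.g.\ $M_4$ or a generic GHZ-like combination, and checking the rank of the Jacobian there.

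\emph{Step 2: ONB cancellation.} Let $\phi_a(t,s)$ be a smooth two-parameter family of orthonormal bases. Then $\phi_a(t,s)=W(t,s)\phi_a(0)$ with $W$ unitary, so each derivative has the form $\partial\phi_a=iH\phi_a$ with $H$ Hermitian. It follows that
\[
\sum_a \operatorname{Im}\langle\partial_t\phi_a|\partial_s\phi_a\rangle=\operatorname{Im}\operatorname{tr}(H_tH_s)=0,
\]
and this trace is also insensitive to the phase-fixing conventions, since those contribute purely real terms.

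\emph{Step 3: local adjustments and equal contributions.} Suppose we have a smooth piece of encodable states together with a smooth encoder table $U_a(t,s)$. Differentiating $\phi_a=U_a\psi$ gives
\[
\partial\phi_a=U_a\bigl(\partial\psi+iK_a\psi\bigr),
\]
where $K_a=\sum_j k_{a,j}$ is a sum of single-qubit Hermitian generators. Expanding $\omega$, the cross terms have the form $\operatorname{Re}\langle\psi|K_a|v\rangle$ combined with $\omega(K_a\psi,K'_a\psi)$. We would show that these vanish using two facts. First, 1-uniformity gives $\langle\psi|k_j|\psi\rangle\propto\operatorname{tr}k_j$. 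Second, tangency of $v,w$ to $\mathcal Z_4$ means that the marginal derivatives $\operatorname{Re}\langle\psi|\sigma^{(j)}|v\rangle$ vanish; note that these are exactly the terms that pair single-site generators with the state change. The remaining term $\operatorname{Im}\langle\psi|K_aK'_a|\psi\rangle$ reduces to $\tfrac12\langle\psi|[K_a,K'_a]|\psi\rangle$. Its same-site commutators are traceless Paulis with zero expectation, and its different-site commutators vanish. Hence every $a$ contributes exactly $\omega_\psi(v,w)$, and Step 2 then forces $N\,\omega_\psi(v,w)=0$.

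\emph{Step 4: dimension bound.} On any smooth piece $P$ of $\EG\cap\mathcal Z_4^{\mathrm{reg}}$, the tangent space $T$ is therefore isotropic for $\omega$ restricted to $T\mathcal Z_4^{\mathrm{reg}}$. This restricted form is degenerate, with kernel of dimension at most $12$ coming from the constraint directions: the kernel is spanned by $i\cdot$(normal directions) intersected with the tangent space. An isotropic subspace of a $2m$-dimensional space equipped with a form of rank $r$ has dimension at most $2m-r/2$. We therefore need to show that the rank of $\omega$ on $T\mathcal Z_4^{\mathrm{reg}}$ is at least $6$, giving $\dim T\le 18-3=15$.

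The main obstacle is this rank computation. We would first try to show that the kernel is exactly $T\cap iN$, where $N$ is the span of the $\sigma^{(j)}_\alpha\psi$ directions. If $i\sigma\psi$ were tangent, it would be a local-unitary direction; local unitaries preserve 1-uniformity, and there are $12$ such directions. So the kernel lies inside the 12-dimensional local-unitary orbit directions, and the rank is at least $6$, giving $18-12/2=12$. Even the weaker count $18-3$ suffices as soon as the rank is at least $6$. Finally, the appendix handles stratification: $P$ ranges over countably many smooth strata of a semialgebraic set, which justifies the maximum over pieces.
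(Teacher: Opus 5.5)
Your proposal is correct and is essentially the paper's proof: Steps 2 and 3 are \cref{lem:onb-cancellation}, \cref{lem:local-null} and \cref{prop:equal-contributions}, and your Step 4 (the radical of $\omega$ on $V_\psi$ is $V_\psi\cap iN=K_\psi$, so $\omega$ has rank $6$ there and any isotropic subspace has dimension at most $12+3=15$) is a compact form of the paper's splitting $V_\psi=K_\psi\oplus C_\psi$ with $\dim(L\cap C_\psi)\le 3$. One caveat on that step: the stratification should be taken on the joint state-and-encoder set, with constant-rank projection, because a smooth piece of the projected set need not carry a smooth encoder table.

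There are two slips to fix. First, $\ket{M_4}$ cannot witness regularity: as an $SU(2)$ singlet it is annihilated by each $\sum_j\sigma^{(j)}_\alpha$, so its twelve marginal differentials are dependent; use a generic even-parity state as in \cref{app:explicit-state} instead. Second, a $12$-dimensional radical means rank $6$, so the bound is $18-6/2=15$, not $18-12/2=12$.
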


\subsection{The 1-uniform test family}

For $N=2^n$, let
\[
  \mathcal M_n=\mathbb{CP}^{N-1}
\]
be the space of pure $n$-qubit rays, and let
$\rho_j=\Tr_{\{1,\ldots,n\}\setminus\{j\}}\ketbra{\psi}{\psi}$
denote the reduced state of qubit $j$.  Define
  \begin{equation}\label{eq:uniform-family}
  \mathcal Z_n=
  \left\{[\psi]\in\mathcal M_n:
  \rho_j=\frac{\openone}{2}\text{ for }j=1,\ldots,n\right\}.
\end{equation}
These are also called the \emph{1-uniform} states
\cite{Scott2004,Goyeneche2014,Goyeneche2015}.  Equivalently, the $3n$
local Bloch expectations vanish:
\begin{equation}\label{eq:bloch-map}
  \bm b([\psi])=
  \bigl(
  \langle X_1\rangle,\langle Y_1\rangle,\langle Z_1\rangle,
  \ldots,
  \langle X_n\rangle,\langle Y_n\rangle,\langle Z_n\rangle
  \bigr)=0.
\end{equation}
Two features make $\mathcal Z_n$ a promising test family.  It is large, as
computed below, and its vanishing local Bloch expectations will make
local basis readjustments invisible to the signed area introduced in
\cref{sec:imaginary-overlap}.  

We call a state \emph{regular} when, to first order, these $3n$ conditions
impose $3n$ independent requirements.  Around any such state, the
1-uniform states form a smooth family of dimension
\begin{equation}\label{eq:z-dimension}
  \dim_{\mathbb R}\mathcal Z_n^{\mathrm{reg}}
  =2(N-1)-3n.
\end{equation}
Regular states exist: \cref{app:explicit-state} verifies regularity within
an explicit eight-parameter family of 1-uniform four-qubit states, and in
particular at the state $\ket{\Psi_{\mathrm{ex}}}$ constructed there.
Hence $\mathcal Z_4^{\mathrm{reg}}$ is nonempty and has dimension
$30-12=18$.

By a phase-fixed derivative $v$ we mean an ordinary normalized state path
$\ket{\psi(s)}=\ket\psi+s\ket v+O(s^2)$ whose phase has been chosen so that
$\langle\psi|v\rangle=0$.  At a regular $n$-qubit state, let $E_\psi$ be the
real vector space of all such derivatives and let $V_\psi\subset E_\psi$
contain those that preserve 1-uniformity to first order.  Equivalently,
\[
  \begin{aligned}
    E_\psi
      &=\{v\in\mathbb C^N:\langle\psi|v\rangle=0\}
        =T_{[\psi]}\mathcal M_n,\\
    V_\psi
      &=T_{[\psi]}\mathcal Z_n^{\mathrm{reg}}.
  \end{aligned}
\]
Their real dimensions are $2(N-1)$ and $2(N-1)-3n$, respectively.
Let $P_\mu$, $\mu=1,\ldots,3n$, denote the one-qubit Pauli operators
$X_1,Y_1,Z_1,\ldots,X_n,Y_n,Z_n$.  The first-order change of a Pauli
expectation value along $v$ is
\begin{equation}\label{eq:pauli-first-change}
  \delta_v\langle P_\mu\rangle_\psi
  =2\operatorname{Re}\langle P_\mu\psi|v\rangle,
\end{equation}
so $v$ belongs to $V_\psi$ exactly when all $3n$ of these expressions
vanish.  For the same $3n$ operators, define
\begin{equation}\label{eq:local-directions}
  k_\mu=-iP_\mu\ket\psi,
  \qquad
  K_\psi=\operatorname{span}_{\mathbb R}\{k_1,\ldots,k_{3n}\}.
\end{equation}
Local unitaries preserve 1-uniformity, so $K_\psi\subseteq V_\psi$, and
regularity makes the $3n$ local-rotation directions independent: if a real
combination $R=\sum_\mu r_\mu P_\mu$ satisfied $R\ket\psi=0$, then by
\eqref{eq:pauli-first-change} the same combination of the $3n$ first-order
marginal conditions would vanish on every $v$, contradicting regularity.
The directions in $K_\psi$ merely reorient the local
bases.  For four qubits, $\dim V_\psi=18$ and $\dim K_\psi=12$, leaving six
directions that change the local-unitary orbit and hence the entanglement
type---the multipartite analogue of changing Schmidt coefficients rather than
changing the local Schmidt bases.  The proof below identifies the concrete
six-dimensional complement $C_\psi$.

\subsection{Why encoded 1-uniform families have zero signed area}
\label{sec:imaginary-overlap}

To bound the dimension of an encodable family, it is not enough to examine
one change at a time.  We must ask which pairs of changes can coexist while
the state and all its encoders continue to vary smoothly.  We first define
the signed area precisely and then prove the two facts announced above.

A pure state is a ray, so a change proportional to $i\ket\psi$ merely changes
its phase.  For a smooth two-parameter family, choose the phase so that, at
the point under consideration, both derivatives are phase-fixed:
\begin{equation}\label{eq:overview-variations}
  \begin{aligned}
    \ket{\psi(s,t)}
      &=\ket\psi+s\ket v+t\ket w+O(\|(s,t)\|^2),\\
    v&=\left.\partial_s\psi\right|_0,
    \qquad
    w=\left.\partial_t\psi\right|_0,\\
    \langle\psi|v\rangle&=\langle\psi|w\rangle=0.
  \end{aligned}
\end{equation}
For two changes written in this convention, define

\begin{equation}\label{eq:omega}
  \omega_\psi(v,w)=2\operatorname{Im}\langle v|w\rangle.
\end{equation}

This quantity is bilinear over the real numbers and antisymmetric:
$\omega_\psi(w,v)=-\omega_\psi(v,w)$.  For a single complex coordinate it
is, up to the factor of two, the ordinary signed area in the corresponding
real plane; with several coordinates, these signed contributions are added.
This motivates the name.  The value is unchanged if a different
representative of the ray is used.  In the full tangent space
$E_\psi$, every nonzero direction $v$ has a positive-area partner $iv$:
\begin{equation}\label{eq:omega-nonzero-partner}
  \omega_\psi(v,iv)=2\langle v|v\rangle>0.
\end{equation}
This statement involves only the full space $E_\psi$: if $v$ preserves
1-uniformity, $iv$ does not need to do so.  \Cref{sec:dimension-count}
characterizes exactly when both directions do.

An infinitesimal local-basis adjustment moves $\ket\psi$ along
\begin{equation}\label{eq:local-generator}
  -iR\ket\psi,\qquad
  R=\sum_{\mu=1}^{3n}r_\mu P_\mu,
\end{equation}
where $R$ is a sum of single-qubit Hermitian generators.  If $w$
preserves 1-uniformity, then by \eqref{eq:pauli-first-change}
\begin{equation}\label{eq:mixed-area-zero}
  \omega_\psi(-iR\psi,w)
  =2\operatorname{Re}\langle R\psi|w\rangle
  =\delta_w\langle R\rangle_\psi=0.
\end{equation}
The area between two local adjustments also vanishes:
\begin{equation}\label{eq:local-area-zero}
  \omega_\psi(-iR\psi,-iS\psi)
  =\frac1i\langle[R,S]\rangle_\psi=0.
\end{equation}
Indeed, $[R,S]/i$ is again a real linear combination of one-qubit Pauli
operators, all of whose expectations vanish in a 1-uniform state.

\begin{lemma}[Local basis adjustments make no contribution]
\label{lem:local-null}
Let $\ket\psi$ be 1-uniform, and let $v,w$ preserve 1-uniformity to
first order.  For arbitrary local generators $R,S$,
\begin{equation}\label{eq:local-no-contribution}
  \omega_\psi(v-iR\psi,w-iS\psi)=\omega_\psi(v,w).
\end{equation}
\end{lemma}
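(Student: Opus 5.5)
The plan is to expand the left-hand side of \eqref{eq:local-no-contribution} by real bilinearity of $\omega_\psi$ and show that every cross term vanishes. First a bookkeeping check. For Hermitian $R$, $\langle\psi|{-iR\psi}\rangle=-i\langle R\rangle_\psi=0$, because $R$ is a real combination of one-qubit Paulis and these have zero expectation in a 1-uniform state. So $-iR\psi$ is itself phase-fixed, and $v-iR\psi$ and $w-iS\psi$ lie in $E_\psi$, where $\omega_\psi$ was defined. Bilinearity then gives
\[
  \omega_\psi(v-iR\psi,w-iS\psi)
  =\omega_\psi(v,w)+\omega_\psi(-iR\psi,w)+\omega_\psi(v,-iS\psi)
  +\omega_\psi(-iR\psi,-iS\psi).
\]

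The two mixed terms vanish by \eqref{eq:mixed-area-zero}. For $\omega_\psi(-iR\psi,w)$ this is direct: it equals $2\operatorname{Re}\langle R\psi|w\rangle=\delta_w\langle R\rangle_\psi$, which is zero by \eqref{eq:pauli-first-change}, since $w$ preserves all $3n$ local Bloch expectations to first order. For the other mixed term, antisymmetry gives $\omega_\psi(v,-iS\psi)=-\omega_\psi(-iS\psi,v)=-\delta_v\langle S\rangle_\psi=0$, now using the hypothesis on $v$. The verification behind \eqref{eq:mixed-area-zero} is just $\operatorname{Im}\langle -iR\psi|w\rangle=\operatorname{Im}\bigl(i\langle R\psi|w\rangle\bigr)=\operatorname{Re}\langle R\psi|w\rangle$.

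The last term is handled by \eqref{eq:local-area-zero}. We have $2\operatorname{Im}\langle R\psi|S\psi\rangle=\frac1i\langle[R,S]\rangle_\psi$. Both generators are sums of single-qubit terms, so $[R,S]$ splits into commutators of operators on the same qubit. The commutator of two traceless single-qubit Hermitian operators is $i$ times a traceless Hermitian single-qubit operator. Hence $[R,S]/i$ is a real combination of the $P_\mu$, and its expectation vanishes because $\rho_j=\openone/2$ for every $j$.

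There is no real obstacle. The one point needing care is that $R$ and $S$ may contain identity components, so that the local generators lie in $\mathfrak u(2)$ rather than $\mathfrak{su}(2)$. Such components only produce global-phase directions $-ic\psi$. These are killed by the phase-fixing convention, or, if they are kept, they satisfy $\omega_\psi(-ic\psi,w)=2c\operatorname{Re}\langle\psi|w\rangle=0$. I would therefore restrict to traceless generators, as in \eqref{eq:local-generator}, and remark that this loses nothing.
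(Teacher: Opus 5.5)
Your proposal is correct and follows the paper's proof: expand by real bilinearity, kill the two mixed terms with \eqref{eq:mixed-area-zero} (using the 1-uniformity hypothesis on $w$ and on $v$, the latter via antisymmetry), and kill the local--local term with \eqref{eq:local-area-zero}. Your extra checks are sound additions to the paper's two-line argument: $-iR\psi$ is phase-fixed, $[R,S]$ splits qubit by qubit, and identity components of the generators contribute nothing.
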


\begin{proof}
Expand the left-hand side.  The two mixed terms vanish by
\eqref{eq:mixed-area-zero}, and the local--local term vanishes by
\eqref{eq:local-area-zero}.
\end{proof}

We next use a property of complete orthonormal bases that does not involve
locality.  The only assumption is that each of the two changes preserves all
orthonormality relations to first order.

\begin{lemma}[ONB cancellation]\label{lem:onb-cancellation}
Let $\{\ket{\phi_a}\}_{a=1}^N$ be an ordered orthonormal basis.  Consider two
first-order changes that each preserve orthonormality,
$\{\dot\phi_a\}_a$ and $\{\dot\phi'_a\}_a$.  Then
\begin{equation}\label{eq:onb-cancellation}
  \sum_{a=1}^N
  \omega_{\phi_a}(\dot\phi_a,\dot\phi'_a)=0.
\end{equation}
\end{lemma}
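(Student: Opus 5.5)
The plan is to encode the two first-order changes as Hermitian generators and then reduce the cancellation to the vanishing of the trace of a commutator. Orthonormality is preserved to first order exactly when $\langle\phi_a|\dot\phi_b\rangle+\langle\dot\phi_a|\phi_b\rangle=0$ for all $a,b$. This says that the matrix $A_{ab}=\langle\phi_a|\dot\phi_b\rangle$ is anti-Hermitian.

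First I would introduce the operator $A=\sum_b\ket{\dot\phi_b}\!\bra{\phi_b}$, so that $\dot\phi_b=A\phi_b$. Its matrix in the basis is $\langle\phi_a|A|\phi_b\rangle=A_{ab}$, hence $A^\dagger=-A$. I would then write $A=-iH$ with $H$ Hermitian, and do the same for the second change with $\dot\phi'_b=-iH'\phi_b$.

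Next comes the bookkeeping for the phase-fixed convention behind $\omega_{\phi_a}$. The diagonal component $\langle\phi_a|\dot\phi_a\rangle=-iH_{aa}$ is purely imaginary, so it is a pure phase change. I would therefore replace $\dot\phi_a$ by its projection $(1-\ket{\phi_a}\!\bra{\phi_a})\dot\phi_a$, which is what the definition of $\omega$ requires; the paper notes that the value does not depend on the choice of representative. For a single pair this gives
\[
\omega_{\phi_a}(\dot\phi_a,\dot\phi'_a)=2\operatorname{Im}\bigl(\langle\phi_a|HH'|\phi_a\rangle-H_{aa}H'_{aa}\bigr).
\]
Because $H_{aa}H'_{aa}$ is real, this equals $2\operatorname{Im}\langle\phi_a|HH'|\phi_a\rangle=\frac1i\langle\phi_a|[H,H']|\phi_a\rangle$. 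The last step uses $\operatorname{Im}\langle HH'\rangle=\langle[H,H']\rangle/(2i)$.

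Summing over the complete basis then gives $\frac1i\operatorname{Tr}[H,H']=0$, which is the claim. This is the same computation as \eqref{eq:local-area-zero}, except that completeness of the basis turns the expectation value into a trace, and it is the trace of a commutator that vanishes.

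The only delicate point is the phase fixing: I need to check that removing the diagonal components does not change the sum. Since $H_{aa}$ and $H'_{aa}$ are real, their product is real and drops out of the imaginary part, as noted above, so I expect this check to go through directly.
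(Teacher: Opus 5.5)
Your proof is correct and is essentially the paper's argument: both first-order changes are generated by anti-Hermitian matrices $-iH$ and $-iH'$, and the summed area reduces to $2\operatorname{Im}\operatorname{Tr}(HH')=0$, the trace of a product of two Hermitian matrices being real. The only cosmetic difference is the phase fixing. You do it afterwards, by projecting out $\ket{\phi_a}$ and noting that the correction $H_{aa}H'_{aa}$ is real. The paper instead chooses the phases of the basis vectors beforehand, so that the diagonal of the generator vanishes.
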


\begin{proof}

At the point in question, choose the phase of each basis vector so that its
derivatives in both parameter directions satisfy the convention of
\eqref{eq:overview-variations}.  The two phase derivatives can be chosen
independently.
Collect the basis vectors as the columns of a unitary matrix
$\Phi=[\,\phi_1\ \cdots\ \phi_N\,]$.  Preserving orthonormality to first
order means exactly that the two variations have the form
$\dot\Phi=\Phi A$ and $\dot\Phi'=\Phi B$ with $A$ and $B$ anti-Hermitian.
Indeed, differentiating $\Phi^\dagger\Phi=I$ gives
$\Phi^\dagger\dot\Phi+(\Phi^\dagger\dot\Phi)^\dagger=0$, so
$A=\Phi^\dagger\dot\Phi$ satisfies $A^\dagger=-A$ and
$\dot\Phi=\Phi A$; the same argument gives $\dot\Phi'=\Phi B$.
Write $A=-iH$ and $B=-iG$, with $H,G$ Hermitian, so that
$A^\dagger B=HG$.  Summing the areas of all columns gives a trace,
\begin{align}
  \sum_a\omega_{\phi_a}(\dot\phi_a,\dot\phi'_a)
  &=2\operatorname{Im}\Tr(A^\dagger B)\nonumber\\
  &=2\operatorname{Im}\Tr(HG)=0,
\end{align}
because the trace of a product of two Hermitian matrices is real.

\end{proof}

We now combine the two facts.  Consider a smooth two-parameter family in
which both the 1-uniform fiducial state and the encoder tables vary.  The
following proposition shows that the signed area of the two state changes
must then be zero.

\begin{proposition}[The equal-contributions identity]
\label{prop:equal-contributions}
Suppose $\ket{\psi(s,t)}$ is a smooth two-parameter family of 1-uniform
states, and suppose smooth product-unitary families $U_a(s,t)$ make
\[
  \ket{\phi_a(s,t)}=U_a(s,t)\ket{\psi(s,t)},
  \qquad a=1,\ldots,N,
\]
an orthonormal basis for every $s,t$.  At $s=t=0$, choose the phase of
$\ket\psi$ and write
\[
  v=\partial_s\psi,\qquad w=\partial_t\psi,
  \qquad
  \langle\psi|v\rangle=\langle\psi|w\rangle=0.
\]
Then
\begin{equation}\label{eq:central-zero}
  \omega_\psi(v,w)=0.
\end{equation}
\end{proposition}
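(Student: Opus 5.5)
We differentiate each basis vector $\ket{\phi_a}=U_a\ket\psi$ at $s=t=0$, then feed the result into \cref{lem:onb-cancellation} and \cref{lem:local-null}.

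\emph{Derivatives of the basis vectors.} Write $U_a=U_a(0,0)$. Because $U_a(s,t)$ is a smooth product unitary, its derivatives can be written as
\[
\partial_sU_a=U_a(-iR_a),\qquad \partial_tU_a=U_a(-iS_a),
\]
where $R_a,S_a$ are real combinations of the one-qubit Paulis together with a multiple of $\openone$. Pulling the generator to the right of $U_a$ is legitimate because $U_a^\dagger\partial U_a$ is anti-Hermitian and local. The scalar part only shifts the phase, so we drop it. This gives
\[
\partial_s\phi_a=U_a\bigl(v-iR_a\psi\bigr),\qquad
\partial_t\phi_a=U_a\bigl(w-iS_a\psi\bigr).
\]

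\emph{Phase fixing.} These derivatives need not satisfy the convention \eqref{eq:overview-variations}. We check whether $\langle\phi_a|\partial_s\phi_a\rangle=\langle\psi|v\rangle-i\langle R_a\rangle_\psi$ vanishes. It is zero, because $\langle\psi|v\rangle=0$ and, by 1-uniformity, every one-qubit Pauli expectation vanishes. The same holds for the $t$-derivative. So the derivatives are already phase-fixed, and any discarded $\openone$ component is a pure phase that \cref{lem:onb-cancellation} lets us choose freely.

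\emph{Equal contributions.} Since $\omega$ is defined through an inner product, it is invariant under the unitary $U_a$:
\[
\omega_{\phi_a}(\partial_s\phi_a,\partial_t\phi_a)=\omega_\psi(v-iR_a\psi,\,w-iS_a\psi).
\]
The family stays 1-uniform, so $v$ and $w$ preserve 1-uniformity to first order. \cref{lem:local-null} therefore applies and gives that each term equals $\omega_\psi(v,w)$, independent of $a$.

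\emph{Conclusion.} The family $\{\phi_a(s,t)\}$ is an orthonormal basis for all $s,t$, so both derivative families preserve orthonormality to first order. By \cref{lem:onb-cancellation}, the sum over $a$ vanishes:
\[
\sum_{a=1}^N\omega_\psi(v,w)=N\,\omega_\psi(v,w)=0,
\]
hence $\omega_\psi(v,w)=0$.

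The main obstacle is the bookkeeping of phase conventions and of the scalar parts of the generators. The point to verify carefully is that the phase-fixing used in \cref{lem:onb-cancellation} can be taken to coincide with the one inherited from $\psi$, up to pure-phase terms proportional to $i\phi_a$. Such a term is harmless: $\omega$ is evaluated on phase-fixed representatives, and 1-uniformity kills the local part of the overlap $\langle\phi_a|\partial\phi_a\rangle$.
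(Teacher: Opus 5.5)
Your proposal is correct and follows the paper's proof essentially step for step. You extract local generators $R_a,S_a$ from $U_a^\dagger\partial U_a$ and drop their scalar parts, use 1-uniformity to see that the resulting derivatives are already phase-fixed, invoke \cref{lem:local-null} to show every basis vector contributes $\omega_\psi(v,w)$, and conclude with \cref{lem:onb-cancellation}; the paper handles the phase bookkeeping you flag in the same way.
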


\begin{proof}
Differentiate $\ket{\phi_a}=U_a\ket\psi$ in both directions and multiply
by $U_a^\dagger$ at the point under consideration; this does not change
inner products.  With
$R_a=iU_a^\dagger(\partial_sU_a)$, and $S_a$ defined in the same way,
\begin{align}\label{eq:element-variations}
  U_a^\dagger\partial_s\phi_a&=v-iR_a\psi,\nonumber\\
  U_a^\dagger\partial_t\phi_a&=w-iS_a\psi,
\end{align}
where $R_a,S_a$ are sums of single-qubit Hermitian generators because
$U_a(s,t)$ is always a product unitary; the generators may be entirely
different for different values of $a$.  A multiple of the identity in
$R_a$ or $S_a$ would only change the phase of $\ket{\phi_a}$, so we drop
it.  This choice automatically enforces the phase convention
\eqref{eq:overview-variations} for every basis vector:
$\langle\phi_a|\partial_s\phi_a\rangle
=\langle\psi|v\rangle-i\langle R_a\rangle_\psi=0$, because $v$ is
phase-fixed and every one-qubit Pauli expectation vanishes in a 1-uniform
state.

Because $v,w$ preserve 1-uniformity, \cref{lem:local-null} says that the
only contribution to the area is from the change of the state, not from the
local basis adjustments:
\begin{align}\label{eq:equal-contributions}
  \omega_{\phi_a}(\partial_s\phi_a,\partial_t\phi_a)
  &=\omega_\psi(v-iR_a\psi,w-iS_a\psi)\nonumber\\
  &=\omega_\psi(v,w).
\end{align}
Thus all $N$ basis elements contribute the same area.  Since they remain
an ONB, \cref{lem:onb-cancellation} also says that these areas sum to zero:
\begin{align}\label{eq:key-cancellation}
  0
  &=\sum_{a=1}^N
    \omega_{\phi_a}(\partial_s\phi_a,\partial_t\phi_a)\nonumber\\
  &=N\,\omega_\psi(v,w),
\end{align}
which proves the claim.
\end{proof}

Neither ingredient alone would constrain the fiducial state; the result
comes from combining them.

\subsection{From zero signed area to the dimension bound}
\label{sec:dimension-count}

We now identify the six directions for which the positive-area pair
$v,iv$ remains inside the 1-uniform family.  For the rotated direction
$iv$, formula \eqref{eq:pauli-first-change} gives
\[
  \delta_{iv}\langle P_\mu\rangle_\psi
  =-2\operatorname{Im}\langle P_\mu\psi|v\rangle.
\]
Consequently, both $v$ and $iv$ preserve 1-uniformity exactly when both the
real and imaginary parts vanish:
\begin{equation}\label{eq:both-uniform}
  v,iv\in V_\psi
  \quad\Longleftrightarrow\quad
  \langle P_\mu\psi|v\rangle=0
  \quad\text{for every }\mu.
\end{equation}
Collect these directions in the space
\begin{equation}\label{eq:paired-space}
  C_\psi=V_\psi\cap iV_\psi
  =\{v\in V_\psi:iv\in V_\psi\},
\end{equation}
which is closed under multiplication by $i$: if $v,iv\in V_\psi$, then also
$iv,i(iv)=-v\in V_\psi$.  Equation~\eqref{eq:both-uniform} gives the
equivalent characterization
\[
  C_\psi
  =\{v:\langle\psi|v\rangle=0,
          \ \langle P_\mu\psi|v\rangle=0\text{ for every }\mu\}.
\]

Let us determine its dimension using only dimensions of ordinary real vector
spaces.  Both $V_\psi$ and $iV_\psi$ have dimension $18$ and lie inside the
$30$-dimensional space $E_\psi$.  Since
\[
  \dim(V_\psi+iV_\psi)
  =18+18-\dim C_\psi\le30,
\]
we must have $\dim C_\psi\ge6$.

There cannot be more than six such directions.  Recall that the twelve local
rotations form the subspace $K_\psi\subset V_\psi$.  Suppose that
$k\in K_\psi\cap C_\psi$.  Then $k=-iR\ket\psi$ for some local generator
$R$ and $ik\in V_\psi$, so \eqref{eq:mixed-area-zero} gives
$\omega_\psi(k,ik)=0$.  On the other hand,
\eqref{eq:omega-nonzero-partner} gives
$\omega_\psi(k,ik)=2\langle k|k\rangle$, so $k=0$.  Thus
$K_\psi\cap C_\psi=\{0\}$.  These two subspaces both lie in the
18-dimensional $V_\psi$, and therefore
\[
  12+\dim C_\psi
  =\dim(K_\psi+C_\psi)\le18.
\]
Together with the lower bound, this proves
\begin{equation}\label{eq:paired-space-dimension}
  \dim_{\mathbb R}C_\psi=6,
  \qquad
  V_\psi=K_\psi\mathbin\oplus C_\psi.
\end{equation}

Now consider one smooth family in which both the state and its encoder table
vary.  Let $L\subseteq V_\psi$ be the space of first-order state changes
produced within this family, and set
\[
  d=\dim L,
  \qquad
  W=L\cap C_\psi.
\]
The equal-contributions identity says
\begin{equation}\label{eq:area-zero-on-L}
  \omega_\psi(v,w)=0
  \qquad\text{for every }v,w\in L.
\end{equation}

We need two elementary bounds on the dimension of $W$.  First, both $L$ and
$C_\psi$ lie in the 18-dimensional space $V_\psi$, so
\begin{align}
  \dim(L+C_\psi)
  &=\dim L+\dim C_\psi-\dim(L\cap C_\psi)\nonumber\\
  &=d+6-\dim W\le18.
\end{align}
Rearranging gives
\begin{equation}\label{eq:W-lower-bound}
  \dim W\ge d-12.
\end{equation}
In words, $C_\psi$ has codimension twelve inside $V_\psi$, so requiring a
direction in $L$ to lie in $C_\psi$ can remove at most twelve independent
directions from $L$.

Second, $W$ cannot contain a nonzero direction together with its multiple by
$i$.  Indeed, if $x\in W\cap iW$ were nonzero, then $x=iy$ for some nonzero
$y\in W$.  Both $y$ and $iy=x$ would belong to $L$, so
\eqref{eq:area-zero-on-L} would give $\omega_\psi(y,iy)=0$, contradicting
$\omega_\psi(y,iy)=2\langle y|y\rangle>0$.  Hence
$W\cap iW=\{0\}$.  Multiplication by $i$ does not change dimension, and both
$W$ and $iW$ lie in the six-dimensional $C_\psi$.  Therefore
\begin{equation}\label{eq:W-upper-bound}
  2\dim W
  =\dim(W+iW)\le6,
  \qquad
  \dim W\le3.
\end{equation}

Combining \eqref{eq:W-lower-bound} and \eqref{eq:W-upper-bound} gives
\[
  d-12\le\dim W\le3,
\]
and hence
\begin{equation}\label{eq:smooth-family-bound}
  d\le15.
\end{equation}

This bounds the state directions of one smoothly varying state and encoder
table.  As explained in \cref{app:technical-dimension}, the joint solution
set of states and encoders divides into finitely many smooth pieces, and
the bound applies to every piece, so
$\dim_{\mathbb R}(\EG\cap\mathcal Z_4^{\mathrm{reg}})\le15$.  Since
$\mathcal Z_4^{\mathrm{reg}}$ has dimension 18, the encodable states cannot
fill it, and nonencodable states exist.  This proves
\cref{thm:nonencodable}.

\begin{remark}[General $n$ and the four-qubit threshold]
\label{rem:threshold}
Nothing in the argument is special to four qubits.  At a regular point,
$\dim E_\psi=2(N-1)$, $\dim V_\psi=2(N-1)-3n$, and $\dim K_\psi=3n$; the
inclusion $K_\psi\subseteq V_\psi$ already requires $N-1\ge3n$ there.  The
identical argument gives $\dim C_\psi=2(N-1-3n)$, the two bounds on
$W=L\cap C_\psi$ become $\dim W\ge d-3n$ and $\dim W\le N-1-3n$, and every
smooth piece obeys $d\le N-1$.  Hence
\begin{equation}\label{eq:global-bound}
  \dim_{\mathbb R}
  \left(\EG\cap\mathcal Z_n^{\mathrm{reg}}\right)\le N-1,
\end{equation}
whereas the family itself, whenever nonempty, has dimension $2(N-1)-3n$.
The bound is an obstruction precisely when $N-1>3n$.  For qubits this
first occurs at $n=4$, where $15>12$.
\end{remark}

\subsection{An explicit nonencodable state}

The dimension argument proves that nonencodable regular 1-uniform states
exist, but by itself it does not identify one.  We now give an exact example.

\begin{proposition}\label{cor:explicit-nonencodable}
Let $\alpha=2^{1/7}$ and $\mathcal N=\sqrt{6+2\mathrm e^2}$.  The state
\begin{align}\label{eq:explicit-nonencodable}
  \ket{\Psi_{\mathrm{ex}}}=\frac1{\mathcal N}\bigl[&
  \ket{0000}+\mathrm e^{i\alpha}\ket{1111}\nonumber\\[-2pt]
  &+\mathrm e^{i\alpha^2}\ket{0011}
   +\mathrm e^{i\alpha^3}\ket{1100}\nonumber\\[-2pt]
  &+\mathrm e^{i\alpha^4}\ket{0101}
   +\mathrm e^{i\alpha^5}\ket{1010}\nonumber\\[-2pt]
  &+\mathrm e^{1+i\alpha^6}\ket{0110}
   +\mathrm e^{1+i\alpha^7}\ket{1001}\bigr]
\end{align}
is 1-uniform and regular, and it admits no local encoding.
\end{proposition}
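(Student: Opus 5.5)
The plan is to combine \cref{thm:nonencodable} with a transcendence argument. The key facts are that the set of locally encodable states is cut out by polynomial conditions with rational coefficients, and that the amplitudes of $\ket{\Psi_{\mathrm{ex}}}$ are algebraically independent by the Lindemann--Weierstrass theorem. I would first dispose of the easy claims. 1-uniformity follows from direct inspection of the eight terms. They form four complementary pairs $x,\bar x$ ($0000/1111$, $0011/1100$, $0101/1010$, $0110/1001$), so every single-qubit $Z$-marginal has equal weights. No two strings in the support differ in exactly one bit, so $\langle X_j\rangle=\langle Y_j\rangle=0$. Regularity is a rank computation: the $12$ real linear functionals $v\mapsto 2\operatorname{Re}\langle P_\mu\Psi_{\mathrm{ex}}|v\rangle$ on $E_\psi$ must be independent. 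I would cite the symbolic verification of \cref{app:explicit-state} for the whole eight-parameter family in which $\ket{\Psi_{\mathrm{ex}}}$ sits.

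Next I would set up the algebraic framework. Write a local encoding in real coordinates, with the real and imaginary parts of the $16\times4$ single-qubit unitaries and of the state amplitudes as variables. Unitarity, normalization, 1-uniformity and the Gram conditions \eqref{eq:encoding-condition} are then polynomial equations with rational coefficients. By Tarski--Seidenberg, $\EG\cap\mathcal Z_4$, the projection onto the state coordinates, is a semialgebraic set defined over $\mathbb Q$. Its local dimension at regular points is $\le15$ by \cref{thm:nonencodable}.

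Then I would pull this set back to the explicit family. Let $F$ be the eight-parameter family $\ket{\Psi(\theta,r)}$ of the appendix: phases $\theta_1,\dots,\theta_7$ on seven terms, and a common modulus $r$ on the last two, with $\theta_8$ absorbed, or the analogous parametrization. Consider the map
\[
\Phi:F\times SU(2)^{\times4}\to\mathcal Z_4,\qquad (p,g)\mapsto g\ket{\Psi_p}.
\]
I would verify by an exact Jacobian computation that $\Phi$ has rank $18$ at the point of $\ket{\Psi_{\mathrm{ex}}}$, hence on a Zariski-open subset of $F$. This says the eight family directions together with the twelve local rotations span $V_\psi$. Where $\Phi$ is a submersion, the preimage of a set of dimension $\le15$ has dimension $\le15+(20-18)=17$. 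This preimage is invariant under the free-up-to-finite $LU$ action, so it is $S\times LU$ with $\dim S\le5<8$. Consequently $S$, which is semialgebraic over $\mathbb Q$ in the coordinates $(\cos\theta_k,\sin\theta_k,r)$, lies in the zero set of a nonzero polynomial with rational coefficients. More precisely, the Zariski closure of a $\mathbb Q$-definable set of dimension $<8$ in the complexified coordinates $z_k=e^{i\theta_k}$, $r$ is a proper $\mathbb Q$-variety, so some nonzero $P\in\mathbb Q[z_1,\dots,z_7,r]$ vanishes on it.

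Finally I would apply the classical theorem. The exponents $1,i,i\alpha,\dots,i\alpha^6$ are linearly independent over $\mathbb Q$, because $1,\alpha,\dots,\alpha^6$ are a basis of $\mathbb Q(2^{1/7})$ and $i\alpha^7=2i$. By Lindemann--Weierstrass, the numbers $\mathrm e,e^{i},e^{i\alpha},\dots,e^{i\alpha^6}$ are therefore algebraically independent. After matching the family's parameters, the coordinates of $\ket{\Psi_{\mathrm{ex}}}$ are these eight independent transcendentals: the ratios of amplitudes are $e^{i\alpha^k}$, $e^{2i}$ and $\mathrm e$. So $P$ cannot vanish there, and $\ket{\Psi_{\mathrm{ex}}}\notin\EG$.

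I expect the main obstacle to be the bookkeeping in the third step. First, the phase normalization must be chosen so that the eight coordinates are exactly $\mathbb Q$-rationally related to $\mathrm e$ and the $e^{i\alpha^k}$; note that $e^{i\alpha^7}=e^{2i}$ is algebraic over $\mathbb Q(e^{i})$, so the bookkeeping must use the eight independent quantities listed above. Second, one must confirm the rank-$18$ submersion condition exactly rather than numerically. I would attempt the latter with the same exact-arithmetic scripts used for regularity.
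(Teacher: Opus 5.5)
\textbf{There is a genuine gap in your third step.} The map $\Phi\colon F\times SU(2)^{\times4}\to\mathcal Z_4$ never has rank $18$, so the submersion count that gives $\dim S\le5$ fails.

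The reason is that local $Z$-rotations preserve the family. The operator $\bigotimes_j e^{-i\phi_jZ_j}$ only rephases each $\ket x$. After restoring the phase of $\ket{0000}$, it shifts the seven relative phases by $2\phi\cdot x$, with $x$ read as an integer vector. This is a linear map of rank $4$, because $1100,1010,0110,0011$ span $\mathbb R^4$. So four of your eight family directions already lie in $K_\psi$. Consequently $\Phi$ is constant along the orbits of a four-dimensional group acting on its domain, and $\operatorname{rank}d\Phi\le20-4=16$. Equivalently, the family reaches at most four of the six directions of $C_\psi$, and its local-unitary saturation is at most $16$-dimensional. Your exact Jacobian check would return $16$, and the estimate $15+(20-18)=17$ has nothing to stand on.

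\textbf{How to repair it, and how the paper proceeds.} Your strategy survives, but with no slack. At a regular point, the family directions are supported on even-parity strings, while $-iX_j\psi$ and $-iY_j\psi$ are supported on odd-parity strings. Regularity therefore forces $T_F\cap K_\psi$ to be exactly the span of the four $Z$-directions, so the rank is exactly $16$. Near such points the image $M$ of $\Phi$ is locally a $16$-dimensional submanifold of $\mathcal Z_4^{\mathrm{reg}}$, and \cref{thm:nonencodable} gives $\dim(\EG\cap M)\le15<16$. The fibres of $\Phi$ are four-dimensional, so $S\times SU(2)^{\times4}$ has dimension at most $19$. Hence $\dim S\le7<8$, which is all your Zariski-closure and Lindemann--Weierstrass steps need.

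The paper bypasses both \cref{thm:nonencodable} and any rank computation. It takes two directions inside the family, $\partial_r$ and the phase of the $\ket{1111}$ coefficient, and finds $\omega_\psi(v,w)=-r/(r^2+3)^2\neq0$. The equal-contributions identity then directly excludes an eight-dimensional smooth encodable piece within the family.

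\textbf{Two smaller points.}

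First, the appendix does not verify regularity on the whole family. Regularity fails, for example, at $r=1$, $u_j=1$, where $(X_1-X_2)\ket\psi=0$. What the appendix shows is that one $12\times12$ minor is not identically zero on the family. Regularity of $\ket{\Psi_{\mathrm{ex}}}$ then needs the same transcendence argument, and this is also what any ``exact'' rank check at a point with transcendental coordinates amounts to.

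Second, your worry about $\mathrm e^{i\alpha^7}=\mathrm e^{2i}$ is unnecessary. The numbers $\alpha,\dots,\alpha^6,\alpha^7=2$ are $\mathbb Q$-linearly independent, so $\mathrm e,\mathrm e^{i\alpha},\dots,\mathrm e^{i\alpha^7}$ are already algebraically independent. Your substitute generators also work, since $u_7\mapsto y^2$ keeps nonzero polynomials nonzero.
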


The 1-uniformity is visible directly.  All occupied strings have even
parity, so the off-diagonal entries of every one-qubit marginal vanish, and
the two amplitudes in each complementary pair have the same magnitude, so
the two diagonal entries are equal.

For the remaining claims, allow the seven phases in
\eqref{eq:explicit-nonencodable} to vary independently and replace the common
magnitude $\mathrm e$ of the last complementary pair by an arbitrary positive
number $r$.  This gives an eight-parameter family of 1-uniform states.  Two
particularly simple changes within it are: changing $r$, and changing the
phase of the $\ket{1111}$ coefficient.  \Cref{app:explicit-state} computes
their signed area directly and finds
\[
  -\frac{r}{(r^2+3)^2}\ne0.
\]
If a smooth encodable piece occupied all eight parameters, both changes would
belong to its tangent space, contradicting the equal-contributions identity.
Thus the encodable part has dimension at most seven.

The appendix then uses the polynomial description of states and encoders to
show that this lower-dimensional set obeys an additional nonzero polynomial
equation among the eight parameters.  The amplitudes in
\eqref{eq:explicit-nonencodable} are chosen precisely so that no such
equation can hold: by the Lindemann--Weierstrass theorem
\cite{Popescu2024}, the eight numbers
$\mathrm e,\mathrm e^{i\alpha},\ldots,\mathrm e^{i\alpha^7}$ satisfy no
nonzero polynomial relation with algebraic coefficients.  Hence the
displayed state cannot be encodable.  The exact argument, including
regularity, is given in \cref{app:explicit-state}.

\section{Discussion}\label{sec:discussion}

We have studied which types of multipartite entanglement can appear in a
complete projective measurement, and we have organized the question by the encoders used to generate the basis.  The resulting
picture has two main anchor points.  At the restricted end, the
commuting-reflection class is characterized exactly: the encodable states are
precisely those that are flat in a graph-state basis, a direct
generalization of the Kruszy\'nska--Kraus characterization of independent
encoders in which the product basis is replaced by an entangled stabilizer
eigenbasis [\cref{thm:commuting-characterization,cor:strict-inclusion}].
At the other extreme end, we showed that no encoder class is universal: for four
qubits, some 1-uniform states cannot generate any orthonormal basis under
product unitaries, and $\ket{\Psi_{\mathrm{ex}}}$ is an explicit example
[\cref{thm:nonencodable,cor:explicit-nonencodable}], answering the
question left open in 2007 \cite{Tanaka2007} and proving the later
four-qubit conjecture \cite{Pimpel2023}. The local-encodability problem for bipartite qudits is also being
investigated independently
\cite{EisfeldWyderkaInPreparation}.

We believe the most interesting message of this work lies between these
two anchor points: the perspective of studying the entanglement of
measurements \emph{as a function of the encoding class}.  Each class
$\mathcal C$ of encoder families defines a set $E_{\mathcal C}$ of
entanglement types---a local-unitary-invariant set of states---and the
independent and commuting-reflection classes show that this assignment is
informative: enlarging the class from independent choices to commuting
reflections enlarges the state class in a controlled, exactly
characterizable way, while
\cref{eq:m4-separation} shows that still more states become accessible once
more general encoders are allowed.  This immediately
raises what we consider the central open question.  \emph{Which other
encoding classes define natural state classes, and what do these sets look
like?}  Concrete intermediate classes could be: coordinated
tables of Pauli operators that need not commute, as in the $W$-basis of
Miyake and Briegel \cite{Miyake2005}; commuting families whose local factors
are not Hermitian, the smallest step beyond our theorem; and tables of local
Clifford unitaries, as in our $M_4$ construction.  For none of these is a
characterization known, although $M_4$ now proves that the last class is
strictly larger than commuting reflections.  More broadly, the
basic structure of the sets $E_{\mathcal C}$---their inclusions and
boundaries, useful invariants, and practical ways to decide
membership, remains largely unexplored beyond these two characterized
classes and the general results proved here.

The present hierarchy also connects naturally to the complementary
operational classification of Ref.~\cite{Pauwels2025}.  Here measurements
are organized by the encoding class needed to \emph{generate} their bases;
there they are organized by the entanglement needed to \emph{localize} and
implement them across distant laboratories.  The two viewpoints are
complementary in a literal sense: local encoding writes a classical label
into a shared state, whereas localization is the reverse task of recovering
that label locally with the aid of shared entanglement.  For bases generated
as orbits of a stabilizer group, this localizability analysis was recently
shown to simplify considerably \cite{Pauwels2026Pauli}.  Our
characterization theorem shows that, up to local unitaries, every basis
produced by a commuting-reflection encoding has precisely this form.  This
is therefore a broad class for which both the generation and implementation
structures can be analyzed systematically.  Relating the two hierarchies
beyond this class is a particularly promising direction.

The results also show that local encodability is not governed by any
ordinary notion of the \emph{amount} of entanglement.  The $W$ state is
excluded from the LME class yet admits a highly structured commuting
encoding.  The $M_4$ construction makes the complementary point: a state
that globally maximizes both the geometric measure and the average
two-versus-two linear entropy is encodable, but it already requires more
than commuting reflections.  Conversely, $\ket{\Psi_{\mathrm{ex}}}$ has
maximally mixed one-qubit marginals, a standard signature of strong
multipartite entanglement, yet cannot fill an orthonormal basis with its
local-unitary orbit.  What matters is the position of the entire orbit
relative to the orthogonality constraints of a basis, not a scalar property
of one representative.  Operationally, nonencodability is a limit on
dense-coding-type protocols: no local strategy can write $2^n$ classical
messages into such a state so that a single joint measurement reads them
out perfectly. Equivalently, local encodability is saturation of the first
Welch bound by $N$ states restricted to a single local-unitary orbit.  The
minimum excess above this bound is an orbit-constrained approximate-$1$-design
error of the type studied in Ref.~\cite{Castellano2026Designs}.

The four-qubit threshold is informative in the same spirit.  For three
qubits our dimension comparison gives no obstruction, and our numerical
searches consistently find a connected-graph-basis representation.  This
suggests a stronger form of the positive numerical picture of
Ref.~\cite{Pimpel2023}: \emph{Conjecture.---Every three-qubit pure state
admits a commuting-reflection local encoding.}

The proof technique behind \cref{thm:nonencodable} may be of independent
interest.  The signed-area identity constrains every smoothly varying
orthonormal basis, encoded or not: the signed areas contributed by the $N$
outcomes always cancel.  Whenever a family of bases is required to track a
family of states, this converts the orthogonality constraints into a
quantitative restriction on the states alone.  We expect the identity to
be useful in other basis-design problems, for instance for intermediate
encoder classes or in higher local dimensions.

Several concrete problems follow directly from our results.
First, viewed from the Bell-basis motivation, the $M_4$
construction is more than a class separator: it is a complete four-qubit
basis whose outcomes globally maximize two well-motivated multipartite
entanglement criteria, while its exact global-versus-local discrimination
gap makes it a concrete multipartite data-hiding basis.  It would be
interesting to determine whether it supports useful multipartite analogues
of the Bell basis's roles in teleportation, dense coding, entanglement
swapping, and quantum networks.  Second, the entire hierarchy should be developed for qudits,
where even the independent-encoder baseline is less understood.  Finally,
it would be valuable to bring the iso-entangled bases constructed here back
to the network scenarios where such measurements first appeared
\cite{Gisin2019,Tavakoli2021,Tavakoli2022}.

\begin{acknowledgments}
We thank Valerio Scarani and Nikolai Wyderka for interesting discussions. This research was funded by the Deutsche Forschungsgemeinschaft (DFG, German Research Foundation, project number 563437167), the Sino-German Center for Research Promotion (Project M-0294), and the German 
Federal Ministry of Research, Technology and Space 
(Project QuKuK, Grant No.\ 16KIS1618K and Project BeRyQC, Grant No.\ 13N17292) and the Swiss National Science Foundation
and NCCR-SwissMAP.

\emph{AI statement:}
We used Claude Fable and OpenAI's Codex tools
(GPT-o1 through GPT-5.6 Sol) to clarify statements, find references, explore and check proofs and candidate
constructions, check calculations, and assist with editing the final manuscript.  The authors independently verified every AI generated output. The main technical inputs of the AI were: it found and fixed a mistake in our initial proof of \cref{thm:commuting-characterization}, it found the reported $W$-state, $\eta$-state and $M$-state bases (remarkably, after extensive numerical exploration and Valerio
had failed to find one) and it greatly helped with \cref{thm:nonencodable} by proposing the area invariant for ONBs which eventually led to the proof.

\emph{Code availability.} The exact verification scripts and instructions are
available on
\href{https://github.com/jefpauwels/IsoEntangledBases}
{GitHub}.
\end{acknowledgments}

\bibliographystyle{apsrev4-2}
\bibliography{refs}

\appendix

\section{Proofs for the \texorpdfstring{$W_3$}{W3} and
\texorpdfstring{$M_4$}{M4} bases}
\label{app:w3-m4-proofs}

This appendix collects the two computations deferred from
\cref{sec:structured-examples}, and then records the extremal entanglement
and data-hiding properties of the $M_4$ basis quoted there.

\begin{proof}[Proof of \cref{prop:w-encoding}]
The unitary $q$ is a $\pi$-rotation about the Bloch axis
$(0,b,a)$.  A direct conjugation gives
\begin{align}
  q^\dagger Xq&=-X,\nonumber\\
  q^\dagger Zq&=cY+dZ=:\widetilde Z,\nonumber\\
  c&=\sqrt{\frac23},\qquad d=\frac1{\sqrt3}.
\end{align}
The conjugated stabilizer generators are therefore
\begin{equation}\label{eq:w-encoders}
  E_i:=V^\dagger K_iV
  =-X_i\widetilde Z_j\widetilde Z_k,
  \qquad\{i,j,k\}=\{1,2,3\}.
\end{equation}
They commute and square to the identity because they are unitary conjugates
of the triangle-graph stabilizer generators.

It remains only to check orthogonality of their orbit.  Since
$\ket{W_3}$ is permutation symmetric, the expectation of a nonidentity
product $E_1^{s_1}E_2^{s_2}E_3^{s_3}$ depends only on the Hamming weight of
$s$.  The useful products are
\begin{align}
  \widetilde Z^2&=\openone,\qquad
  E_1E_2=(cZ-dY)_1(cZ-dY)_2,\nonumber\\
  E_1E_2E_3&=X_1X_2X_3.
\end{align}
For the $W$ state,
\begin{align}
  \langle Z_1Z_2\rangle&=-\frac13,\qquad
  \langle Y_1Y_2\rangle=\frac23,\nonumber\\
  \langle Z_1Y_2\rangle&=\langle Y_1Z_2\rangle=0.
\end{align}
The first two values follow immediately by restricting to the three
one-excitation basis vectors; the mixed terms vanish because their matrix
elements are imaginary whereas $\ket{W_3}$ is real.  We obtain
\begin{align}
  \langle W_3|E_1|W_3\rangle&=0,\nonumber\\
  \langle W_3|E_1E_2|W_3\rangle
  &=-\frac{c^2}{3}+\frac{2d^2}{3}=0,\label{eq:w-three-checks}\\
  \langle W_3|E_1E_2E_3|W_3\rangle&=0,\nonumber
\end{align}
where the middle expression vanishes because $c^2=2d^2$.  The first and
third values need no computation at all: every term of $E_1$ either maps
the one-excitation subspace to a different excitation sector or has an
imaginary expectation value in the real state $\ket{W_3}$, and
$X_1X_2X_3$ maps the one-excitation subspace to the two-excitation
subspace.  Thus every nonidentity element of the eight-element group has
zero expectation value in $\ket{W_3}$, and the Gram matrix of the orbit
\eqref{eq:w-basis} is the identity.
\end{proof}

\begin{proof}[Proof of \cref{prop:m4-separator}]
By \cref{thm:commuting-characterization}, it is enough to prove that no local
change of basis makes $M_4$ flat in a graph-state basis.  We translate that
question into elementary geometry in three-dimensional space.  All
expectation values below are taken in $\ket{M_4}$.  Its one-qubit
expectation values vanish, and direct evaluation on the Pauli axes, extended
by multilinearity, gives the following formulas for distinct qubits.

For a unit vector $\mathbf a=(a_x,a_y,a_z)$, write
$\sigma_{\mathbf a}=a_xX+a_yY+a_zZ$.  A direct calculation from the six
amplitudes of $M_4$ gives, for distinct qubits,
\begin{subequations}\label{eq:m4-correlators}
\begin{align}
 \bigl\langle\sigma_{\mathbf a}^{(j)}
 \sigma_{\mathbf b}^{(k)}\bigr\rangle
   &=-\frac13\,\mathbf a\mathbin\cdot\mathbf b,
   \label{eq:m4-corr-two}\\[2pt]
 \bigl\langle\sigma_{\mathbf a}^{(j)}
 \sigma_{\mathbf b}^{(k)}
 \sigma_{\mathbf c}^{(l)}\bigr\rangle
   &=\frac{s_{jkl}}{\sqrt3}\det(\mathbf a,\mathbf b,\mathbf c),
   \label{eq:m4-corr-three}\\[2pt]
 3\bigl\langle\sigma_{\mathbf a}\otimes\sigma_{\mathbf b}
 \otimes\sigma_{\mathbf c}\otimes\sigma_{\mathbf d}\bigr\rangle
   &=(\mathbf a\mathbin\cdot\mathbf b)
     (\mathbf c\mathbin\cdot\mathbf d)
     +(\mathbf a\mathbin\cdot\mathbf c)
     (\mathbf b\mathbin\cdot\mathbf d)\nonumber\\[-2pt]
   &\quad +(\mathbf a\mathbin\cdot\mathbf d)
     (\mathbf b\mathbin\cdot\mathbf c),
   \label{eq:m4-corr-four}
\end{align}
\end{subequations}
where $s_{jkl}=\pm1$.  Its sign depends on the ordered triple of qubits but
will not matter.

After an arbitrary local change of basis, the Pauli axes at site $j$ become
an orthonormal frame
$(\mathbf x_j,\mathbf y_j,\mathbf z_j)$ in $\mathbb R^3$.  If the state
were flat in the graph basis, every nonidentity graph stabilizer would have
zero expectation value.  Indeed, each such stabilizer has equally many
$+1$ and $-1$ eigenvalues across the graph basis, while flatness assigns
equal probability to all basis vectors.  Calling the number of qubits on
which a stabilizer acts nontrivially its \emph{weight}, the weight-one
conditions already hold because all one-qubit expectations vanish.
Equations \eqref{eq:m4-corr-two}--\eqref{eq:m4-corr-four} turn this into three
concrete geometric rules: the two axes of a weight-two stabilizer must be
perpendicular; the three axes of a weight-three stabilizer must lie in one
plane; and the four axes of a weight-four stabilizer must make the
right-hand side of \eqref{eq:m4-corr-four} vanish.

We shall repeatedly use one simple observation about two orthonormal frames.
Suppose $(\mathbf a_1,\mathbf a_2,\mathbf a_3)$ and
$(\mathbf b_1,\mathbf b_2,\mathbf b_3)$ obey
$\mathbf a_i\perp\mathbf b_i$ for all $i$.  Then the three planes
$\operatorname{span}(\mathbf a_i,\mathbf b_i)$ are the three coordinate
planes of some orthonormal basis, and their common intersection is
$\{0\}$.  To check this, form the ordinary $3\times3$ matrix of dot
products $\mathbf a_i\mathbin\cdot\mathbf b_j$.  This is the change-of-basis
matrix between the two frames, so it is orthogonal; the assumed
perpendicularities make its diagonal zero.  Writing its remaining entries as
$\left(\begin{smallmatrix}0&a&b\\c&0&d\\e&f&0\end{smallmatrix}\right)$,
the scalar products of different rows give $bd=af=ce=0$.  Together with
unit row lengths, this forces exactly one entry of magnitude one in each row
and column, which gives the claim.

For a graph state, a product of single-qubit Clifford unitaries can toggle
every edge between the neighbours of one chosen vertex while leaving all
other edges unchanged.  This operation is called \emph{local
complementation} \cite{VandenNest2004,Hein2006}.  A product of
single-qubit Cliffords is itself a local unitary, so it can be absorbed
into the arbitrary local change of basis; graphs related by local
complementation therefore pose the same flatness question.

Four vertices have six possible edges, so there are $2^6=64$ labelled
graphs (eleven types up to relabelling alone).  An exact enumeration in the
companion repository divides them into six classes under local
complementation and relabelling (\cref{fig:m4-graph-classes}).  In the order below, their sizes are
$1,5,16,6,3,$ and $33$, which sum to $64$.  Writing $jk$ for the edge
$\{j,k\}$, representatives are
\begin{equation}\label{eq:m4-six-graphs}
 \begin{gathered}
 \varnothing,\quad \{12,13,14\},\quad \{12,23\},\\
 \{12\},\quad \{12,34\},\quad \{12,23,34\}.
 \end{gathered}
\end{equation}
Relabelling does not change the zero conditions in
\eqref{eq:m4-correlators}: it can only change the irrelevant sign in the
three-body formula.  In the order shown, these are the empty graph, a star,
a three-vertex path with one isolated vertex, one edge with two isolated
vertices, two disjoint edges, and a four-vertex path.

We exclude them one at a time below.  The empty graph and the star are
excluded by perpendicularity alone: at most three nonzero vectors in
$\mathbb R^3$ can be mutually perpendicular.  The next two cases also use
the coplanarity rule, and the final two require the four-body rule
\eqref{eq:m4-corr-four}.

\emph{No edges.}  The stabilizer contains $X_jX_k$ for every pair of
vertices.  Hence $\mathbf x_1,\ldots,\mathbf x_4$ would be four mutually
perpendicular unit vectors in $\mathbb R^3$, which is impossible.

\emph{A star.}  Let vertex 1 be the centre.  Products of pairs of leaf
generators make $\mathbf x_2,\mathbf x_3,\mathbf x_4$ mutually
perpendicular.  The three leaf generators make $\mathbf z_1$ perpendicular
to each of them.  Again, four nonzero vectors cannot be mutually
perpendicular in $\mathbb R^3$.

\emph{A three-vertex path and one isolated vertex.}  For the path
$1-2-3$, the three weight-two stabilizers
$X_1Z_2$, $Z_2X_3$, and $X_1X_3$ make
$\mathbf x_1,\mathbf z_2,\mathbf x_3$ mutually perpendicular.  Multiplying
those stabilizers by the isolated generator $X_4$ gives three weight-three
stabilizers.  They require $\mathbf x_4$ to lie in each of the three
coordinate planes determined by
$\mathbf x_1,\mathbf z_2,\mathbf x_3$.  Their intersection is $\{0\}$,
contrary to $\lVert\mathbf x_4\rVert=1$.

\emph{One edge and two isolated vertices.}  For edge $12$, the three
nonidentity stabilizers carry the local labels $XZ,ZX,YY$.  Thus
\begin{equation}
 \mathbf x_1\perp\mathbf z_2,\qquad
 \mathbf z_1\perp\mathbf x_2,\qquad
 \mathbf y_1\perp\mathbf y_2.
\end{equation}
Our observation says that their three spanning planes have only the zero
vector in common.  Multiplying the three edge stabilizers by the isolated
generator $X_3$, however, requires $\mathbf x_3$ to lie in all three
planes.  This is impossible for a unit vector.

\emph{Two disjoint edges.}  For edge $12$, set
\begin{equation}
 (\mathbf a_i,\mathbf b_i)_{i=1}^{3}
 =\bigl((\mathbf x_1,\mathbf z_2),
        (\mathbf z_1,\mathbf x_2),
        (\mathbf y_1,\mathbf y_2)\bigr),
\end{equation}
and define the real matrices
\begin{equation}
 S_i=\mathbf a_i\mathbf b_i^{T}+\mathbf b_i\mathbf a_i^{T}.
\end{equation}
In the coordinates supplied by the observation above, the three $S_i$
occupy the three different off-diagonal pairs.  They are therefore
independent symmetric matrices with zero trace.  For edge $34$, define
\begin{equation}
 (\mathbf c_j,\mathbf d_j)_{j=1}^{3}
 =\bigl((\mathbf x_3,\mathbf z_4),
        (\mathbf z_3,\mathbf x_4),
        (\mathbf y_3,\mathbf y_4)\bigr)
\end{equation}
and
\begin{equation}
 T_j=\mathbf c_j\mathbf d_j^{T}+\mathbf d_j\mathbf c_j^{T}.
\end{equation}
The same argument shows that $T_1,T_2,T_3$ are independent.  The product
of any nonidentity stabilizer of the first edge with one of the second edge
is a weight-four stabilizer.  Since the two vectors within each pair are perpendicular,
\eqref{eq:m4-corr-four} says that its expectation vanishes exactly when
\begin{equation}
 \sum_{r,s=1}^{3}(S_i)_{rs}(T_j)_{rs}=0.
 \label{eq:m4-matrix-orthogonality}
\end{equation}
Thus every one of the three independent $S_i$'s would be orthogonal to
every one of the three independent $T_j$'s under the entrywise inner
product in \eqref{eq:m4-matrix-orthogonality}.  But all these matrices live in the five-dimensional space of
real symmetric $3\times3$ matrices with zero trace: symmetry leaves six
entries and the trace removes one.  The vectors perpendicular to any three
independent ones in this five-dimensional space span at most two
dimensions, so they cannot contain the three independent $T_j$'s.

\emph{A four-vertex path.}  Finally take $1-2-3-4$.  The two weight-two
stabilizers give
\begin{equation}
 \mathbf x_1\perp\mathbf z_2,\qquad
 \mathbf z_3\perp\mathbf x_4.
 \label{eq:m4-path-pairs}
\end{equation}
Let $P=\operatorname{span}(\mathbf x_1,\mathbf z_2)$, and let $\mathbf n$
be its unit normal.  Four weight-three stabilizers have axis triples
\begin{equation}\label{eq:m4-path-first-triples}
 \begin{gathered}
 (\mathbf x_1,\mathbf x_3,\mathbf z_4),\quad
 (\mathbf z_2,\mathbf x_3,\mathbf z_4),\\
 (\mathbf x_1,\mathbf y_3,\mathbf y_4),\quad
 (\mathbf z_2,\mathbf y_3,\mathbf y_4).
 \end{gathered}
\end{equation}
The first two coplanarity conditions say that
$\mathbf x_3\times\mathbf z_4$ is perpendicular to both spanning vectors
of $P$, and hence parallel to $\mathbf n$; the last two say the same for
$\mathbf y_3\times\mathbf y_4$.  Thus each of the pairs
$(\mathbf x_3,\mathbf z_4)$ and $(\mathbf y_3,\mathbf y_4)$ is either
parallel (its cross product is zero) or lies in $P$ (its nonzero cross
product is normal to $P$).  They cannot both lie in $P$: then
$\mathbf z_3\parallel\mathbf x_3\times\mathbf y_3$ and
$\mathbf x_4\parallel\mathbf y_4\times\mathbf z_4$ would both be normal to
$P$, contradicting $\mathbf z_3\perp\mathbf x_4$.  They cannot both be
parallel either: the planes spanned by $(\mathbf x_3,\mathbf y_3)$ and
$(\mathbf z_4,\mathbf y_4)$ would coincide, making their normals
$\mathbf z_3$ and $\mathbf x_4$ parallel.  Hence exactly one pair is
parallel and the other spans $P$.  Both vectors in the parallel pair are
perpendicular to both vectors of the planar pair, so they are parallel to
$\mathbf n$.  (Say $\mathbf x_3\parallel\mathbf z_4$.  Then
$\mathbf x_3\perp\mathbf y_3$ within the frame at site 3, and
$\mathbf x_3\parallel\mathbf z_4\perp\mathbf y_4$ transfers the same
conclusion to site 4; the other case is identical.)  The remaining axes $\mathbf z_3,\mathbf x_4$ lie in
$P$; each is a right-angle rotation of the corresponding vector in the
planar pair.  Since $\mathbf z_3\perp\mathbf x_4$, both planar pairs are
orthonormal bases of $P$.

The other four weight-three stabilizers have triples
\begin{equation}\label{eq:m4-path-second-triples}
 \begin{gathered}
 (\mathbf y_1,\mathbf y_2,\mathbf z_3),\quad
 (\mathbf y_1,\mathbf y_2,\mathbf x_4),\\
 (\mathbf z_1,\mathbf x_2,\mathbf z_3),\quad
 (\mathbf z_1,\mathbf x_2,\mathbf x_4).
 \end{gathered}
\end{equation}
The same reasoning, now using the plane spanned by
$\mathbf z_3,\mathbf x_4$, shows that both vectors in one of the pairs
$(\mathbf y_1,\mathbf y_2)$ and $(\mathbf z_1,\mathbf x_2)$ are parallel
to $\mathbf n$, while the other pair is an orthonormal basis of $P$.
Consequently, exactly one of $\mathbf y_1,\mathbf x_2$ is parallel to
$\mathbf n$; call the other vector $\mathbf u\in P$.

Consider finally two of the weight-four stabilizers, $YXXY$ and $YXYZ$.
Both contain the same two axes at sites 1 and 2: one is parallel to
$\mathbf n$, and the other is the unit vector $\mathbf u\in P$.  At
sites 3 and 4, each stabilizer contains one further axis parallel to
$\mathbf n$.  The two remaining planar axes, one from each stabilizer, are
the vectors $\mathbf e_1,\mathbf e_2$ of the orthonormal basis of $P$
found above.  The four-body formula therefore gives, up to signs,
\[
  \langle YXXY\rangle=\pm\frac{\mathbf u\mathbin\cdot\mathbf e_1}{3},
  \qquad
  \langle YXYZ\rangle=\pm\frac{\mathbf u\mathbin\cdot\mathbf e_2}{3}.
\]
Because $\mathbf u$ is a unit vector in $P$, its squared components in
the orthonormal basis $(\mathbf e_1,\mathbf e_2)$ sum to one.  Therefore
\begin{equation}\label{eq:m4-path-final}
 \langle YXXY\rangle^2+\langle YXYZ\rangle^2=\frac19.
\end{equation}
In particular, the two expectations cannot both vanish.

Every graph in \eqref{eq:m4-six-graphs} is therefore excluded.  No local
change of basis can make $M_4$ flat in a graph basis, so
$\ket{M_4}\notin\ECI^{(4)}$.
\end{proof}

\paragraph{Maximal entanglement and local discrimination.}
Let
\begin{equation}\label{eq:product-overlap}
  \Lambda^2(\psi)=
  \max_{\ket{a_1}\cdots\ket{a_4}}
  \left|\braket{a_1a_2a_3a_4|\psi}\right|^2
\end{equation}
be the maximal squared product overlap underlying the geometric measure
\cite{WeiGoldbart2003}, where the maximum is over normalized one-qubit
states.  Every four-qubit pure state obeys $\Lambda^2\ge2/9$.  Choose the
first-qubit Schmidt vector whose squared coefficient is at least $1/2$.
The corresponding normalized three-qubit state has squared product overlap
at least $4/9$ \cite{Tamaryan2009,Chen2010}; multiplying the two bounds
gives $2/9$.  A smaller value of $\Lambda^2$ means larger geometric
entanglement.
The state $\ket{M_4}$ attains this bound.  Grouping its terms according to
the fourth qubit gives
\begin{equation}\label{eq:m4-chiral-decomposition}
\begin{aligned}
  \ket{M_4}&=\frac1{\sqrt2}
  \left(\ket{\phi_1}\ket1+\ket{\phi_2}\ket0\right),\\
  \ket{\phi_1}&=\frac1{\sqrt3}
  \left(\ket{001}+\omega\ket{010}+\omega^2\ket{100}\right),\\
  \ket{\phi_2}&=\frac1{\sqrt3}
  \left(\ket{110}+\omega\ket{101}+\omega^2\ket{011}\right).
\end{aligned}
\end{equation}
The vectors $\ket{\phi_1},\ket{\phi_2}$ span the three-qubit chiral
subspace.  It was proven that every normalized state in
this subspace has maximal squared product overlap $4/9$
\cite{DenkerImaiGuhne2025}.  Contracting the fourth qubit of $\ket{M_4}$
with any normalized state therefore leaves, up to the factor $1/\sqrt2$,
a normalized vector in this subspace.  Hence
$\Lambda^2(M_4)=(1/2)(4/9)=2/9$.  Together with the universal lower bound
above, this shows that $\ket{M_4}$ globally maximizes the geometric measure
of four-qubit entanglement.  Local-unitary invariance gives the same
conclusion for every vector in the basis of \cref{prop:m4-encoding}.

The original Higuchi--Sudbery criterion gives a related, but more nuanced,
extremal statement.  A direct partial trace shows that every two-qubit
reduction of $\ket{M_4}$ has spectrum
$\{1/2,1/6,1/6,1/6\}$, and hence von Neumann entropy
\begin{equation}\label{eq:m4-entropy}
  S(\rho_{jk})=1+\frac12\log_2 3.
\end{equation}
Higuchi and Sudbery proposed maximizing the average of this entropy over the
three inequivalent $2|2$ cuts.  The state is a proven local maximum, and
Gour and Wallach proved global maximality within the 1-uniform family; its
unrestricted global maximality remains conjectural
\cite{Higuchi2000,Brierley2007,Gour2010}.  By contrast, the associated
average linear entropy is globally maximized by $\ket{M_4}$: using the
definition without an additional normalization factor, each cut has
$1-\Tr(\rho_{jk}^2)=2/3$.  This global optimizer is not unique
\cite{Gour2010}.

Finally, write $\ket{M_r}=U_r\ket{M_4}$.  If $E_r$ is a fully separable
positive operator, write it as a positive sum of projectors onto normalized
product vectors.  Each squared overlap is at most $2/9$, so
\eqref{eq:product-overlap} gives
\[
  \bra{M_r}E_r\ket{M_r}\le\frac29\Tr(E_r).
\]
For any fully separable POVM $\{E_r\}_{r=1}^{16}$, averaging a uniform
label and using $\sum_rE_r=\openone$ therefore yields
\begin{equation}\label{eq:m4-sep-bound}
  P_{\mathrm{succ}}^{\mathrm{SEP}}
  =\frac1{16}\sum_r\bra{M_r}E_r\ket{M_r}
  \le\frac29.
\end{equation}
Every LOCC POVM is separable, so the same bound applies to LOCC measurements,
while the global projective measurement in the $M_4$ basis succeeds with
probability one
\cite{DiVincenzo2002,Hayashi2006}.

\section{Why the smooth-family bound applies globally}
\label{app:technical-dimension}

The main text bounds the number of state parameters on one smooth family for
which the state and a complete encoder table vary together.  Here we justify
applying that single-family bound to the full encodable set.  This extra step is
needed because a state can admit several unrelated encoder tables, and a
single choice of table need not vary smoothly across every encodable state.

Keep the state and all $N$ product encoders as simultaneous variables.
Represent the state ray by its rank-one projector and each one-qubit
$SU(2)$ factor as
$aI-i(bX+cY+dZ)$, where $a^2+b^2+c^2+d^2=1$.  Restricting from $U(2)$ to
$SU(2)$ loses nothing: the omitted phase merely multiplies one encoded basis
vector by an irrelevant global phase.  Purity, 1-uniformity, the unitary
normalizations, and the real and imaginary parts of all ONB conditions are
then polynomial equations in real variables.  Regularity says that at least
one determinant from a finite list is nonzero, so it is also a polynomial
condition (an inequality rather than an equality).

Sets described in this way, by polynomial equations and inequalities, are
called \emph{semialgebraic}.  A standard structure
result\footnote{Theorem~9.1.8 of Ref.~\cite{Bochnak1998} states
that a semialgebraic set can be divided into finitely many smooth pieces.
We divide the pieces further, if necessary, so that the rank of the map
which retains only the state is constant on each piece.}
therefore divides the simultaneous state-and-encoder solutions into finitely
many smooth pieces.  On each piece, the number of independently varying
state parameters is the rank of the projection that forgets the encoders.
Every smooth family of rank-one projectors has, locally, a smooth normalized
ket representative.  Therefore any two state directions in a projected
piece lift to two parameter changes of a ket together with its complete
encoder table.  The equal-contributions identity,
\cref{prop:equal-contributions}, applies to these lifted changes, and the
argument in \cref{sec:dimension-count} bounds the projection rank by $N-1$.

Finally, forgetting the encoders projects the full solution set onto the set
of encodable regular 1-uniform states.  It is a finite union of the projected
smooth pieces, each of dimension at most $N-1$.  The full encodable set
therefore has dimension at most $N-1$, which is precisely
\eqref{eq:global-bound}.

\section{Regularity and nonencodability of
\texorpdfstring{$\ket{\Psi_{\mathrm{ex}}}$}{Psi\_ex}}
\label{app:explicit-state}

We prove \cref{cor:explicit-nonencodable} by placing its state in a simple
eight-parameter family.  For $r>0$ and $|u_j|=1$, let
\begin{align}\label{eq:explicit-family}
 \ket{\Psi(r,\bm u)}\propto{}&
 \ket{0000}+u_1\ket{1111}\nonumber\\[-2pt]
 &+u_2\ket{0011}+u_3\ket{1100}\nonumber\\[-2pt]
 &+u_4\ket{0101}+u_5\ket{1010}\nonumber\\[-2pt]
 &+r u_6\ket{0110}+r u_7\ket{1001}.
\end{align}
Fixing the first coefficient and requiring $r>0$ removes the overall scale
and phase; the eight parameters are then uniquely determined by the ray.
Every member is 1-uniform: even parity
makes all one-qubit off-diagonal entries vanish, while equal magnitudes
within each complementary pair balance the two populations.

\emph{A criterion for regularity.}
Regularity means that the twelve marginal conditions can change
independently to first order.  This can be checked using only ordinary inner
products.  For real numbers $r_\mu$, set
$R=\sum_\mu r_\mu P_\mu$.  The same linear combination of the twelve
first-order marginal changes is
\[
  \delta_v\langle R\rangle_\psi
  =2\operatorname{Re}\langle R\psi|v\rangle.
\]
If this expression vanished for every phase-fixed change $v$, we could choose
$v=R\ket\psi$: it is phase-fixed because
$\langle R\rangle_\psi=0$ for a 1-uniform state.  We would obtain
$2\|R\ket\psi\|^2=0$.  Thus the twelve marginal conditions are independent
exactly when the map $R\mapsto R\ket\psi$ has no nonzero kernel.  It is
therefore enough to write this map as a real matrix and exhibit one nonzero
$12\times12$ determinant formed from its rows.

\emph{Verifying the criterion on a slice.}
To show that such a determinant is not always zero in
\eqref{eq:explicit-family}, set all $u_j=1$ and consider the simpler states
whose two amplitudes within each complementary pair are equal and real,
\begin{equation}\label{eq:paired-slice}
  \begin{aligned}
  \ket{\Psi_{ABCD}}\propto{}&
  A(\ket{0000}+\ket{1111})
  +B(\ket{0011}+\ket{1100})\\[-2pt]
  &+C(\ket{0101}+\ket{1010})
  +D(\ket{0110}+\ket{1001}),
  \end{aligned}
\end{equation}
reached at $(A,B,C,D)=(1,1,1,r)$; for this calculation we allow arbitrary
real $A,B,C,D$.  Parity separates the changes generated by local $Z$
rotations from those generated by local $X$ and $Y$ rotations.  An exact
coefficient calculation gives the following selected $12\times12$ real
determinant of the matrix representing
$R\mapsto R\ket{\Psi_{ABCD}}$.  The selected rows are the real parts of
the output amplitudes on
$0000,0001,0010,0011,0100,0101,0110,0111$, together with the imaginary
parts on $0001,0010,0100,0111$:
\begin{equation}\label{eq:regular-minor}
  16ABCD
  \prod_{\epsilon_B,\epsilon_C,\epsilon_D\in\{\pm1\}}
  (A+\epsilon_BB+\epsilon_CC+\epsilon_DD).
\end{equation}
A single nonzero $12\times12$ determinant is enough to prove independence.
At $(A,B,C,D)=(1,1,1,2)$,
which is the member of \eqref{eq:explicit-family} with all $u_j=1$ and
$r=2$, it equals $-4320$.  The same determinant is a polynomial in the real
coordinates of the full family \eqref{eq:explicit-family}, and we have just
exhibited a point of that family at which it does not vanish; it is
therefore not identically zero on \eqref{eq:explicit-family}.

\emph{Counting encodable parameters inside the family.}
We can obtain the required seven-parameter bound directly, without splitting
the changes into local and nonlocal ones.  Normalize the state in
\eqref{eq:explicit-family} and write
\[
  \mathcal N^2=6+2r^2,\qquad
  \ket\psi=\ket{\Psi(r,\bm u)},
\]
so that at $r=\mathrm e$ the normalization $\mathcal N$ agrees with the one
in \cref{cor:explicit-nonencodable}.
Consider two ordinary parameter changes.  The first changes the magnitude
$r$ of the final complementary pair; the second changes the phase
$u_1=e^{i\theta_1}$ of the $\ket{1111}$ coefficient.  After removing the
irrelevant first-order global phase, representatives of their ray
derivatives are
\begin{align}
  v=\partial_r\psi
  &=\frac{u_6\ket{0110}+u_7\ket{1001}}{\mathcal N}
    -\frac{2r}{\mathcal N^2}\ket\psi,\label{eq:explicit-r-derivative}\\
  w=(I-\ketbra{\psi}{\psi})\partial_{\theta_1}\psi
  &=\frac{iu_1\ket{1111}}{\mathcal N}
    -\frac{i}{\mathcal N^2}\ket\psi.\label{eq:explicit-phase-derivative}
\end{align}
The projector in the second line simply removes the infinitesimal global
phase: $\langle\psi|\partial_{\theta_1}\psi\rangle=i/\mathcal N^2$.
Directly from the eight orthogonal computational-basis kets,
\begin{align}\label{eq:explicit-inner-product}
  \langle v|w\rangle
    &=-\frac{2ir}{\mathcal N^4},\nonumber\\
  \omega_\psi(v,w)
    &=2\,\operatorname{Im}\langle v|w\rangle
      =-\frac{4r}{\mathcal N^4}\nonumber\\
    &=-\frac{r}{(r^2+3)^2}\ne0.
\end{align}
Both parameter changes remain inside the 1-uniform family
\eqref{eq:explicit-family}, and the area \eqref{eq:explicit-inner-product}
is nonzero at every member.  Now suppose that one of the smooth pieces of
the state-and-encoder solution set, restricted to this family, reached all
eight state parameters.  Near some member, the state directions obtained
by forgetting the encoders would then fill the eight-dimensional tangent
space of the family; in particular, they would include both $v$ and $w$.
The equal-contributions identity would require $\omega_\psi(v,w)=0$,
contradicting \eqref{eq:explicit-inner-product}.  Thus every projected
smooth piece has dimension at most seven.

It remains to turn this local statement into an equation that the explicit
choice of parameters can avoid.  As explained in
\cref{app:technical-dimension}, the state-and-encoder conditions are
polynomial equalities and inequalities with rational coefficients.  Add the
parameter $r$ and the real and imaginary parts of the seven $u_j$ to that
description, with the seven circle equations
$(\operatorname{Re}u_j)^2+(\operatorname{Im}u_j)^2=1$.  Removing the
auxiliary encoder variables leaves a semialgebraic set of encodable
parameters; Proposition~2.2.4 of Ref.~\cite{Bochnak1998} justifies this
projection.  The finite smooth-piece argument of
\cref{app:technical-dimension}, now restricted to
\eqref{eq:explicit-family}, shows that their union---the entire projected
encodable parameter set---has dimension at most seven, whereas the full
family has dimension eight.  Proposition~2.8.2 of the same reference says
that a semialgebraic set and its smallest algebraic closure have the same
dimension.  Consequently, one additional nonzero polynomial, not identically
zero on the full family, vanishes on every encodable member.  Since the
defining equations have rational coefficients, this polynomial may be chosen
with algebraic coefficients.

\emph{Why the chosen phases avoid every such equation.}
For the state in \eqref{eq:explicit-nonencodable},
\[
  r=\mathrm e,\qquad u_j=\mathrm e^{i\alpha^j},\qquad
  \alpha=2^{1/7}.
\]
The algebraic numbers $1,i\alpha,i\alpha^2,\ldots,i\alpha^7$ are linearly
independent over the rationals.  Indeed, real and imaginary parts first
separate the initial $1$, and dividing any relation among
$\alpha,\ldots,\alpha^7$ by $\alpha$ would give a polynomial of degree
at most six satisfied by the degree-seven number $\alpha$.
The Lindemann--Weierstrass theorem says that exponentials of distinct
algebraic numbers are linearly independent over the algebraic numbers.  A
polynomial relation among the following eight numbers would expand into
such a linear relation; the rational independence just proved ensures that
all exponent sums are distinct.  Therefore
\begin{equation}\label{eq:explicit-independent}
  \mathrm e,\ \mathrm e^{i\alpha},\ \mathrm e^{i\alpha^2},\ldots,
  \mathrm e^{i\alpha^7}
\end{equation}
obey no nonzero polynomial equation with algebraic coefficients by the
Lindemann--Weierstrass theorem (Ref.~\cite{Popescu2024}, Thm.~3.2).

Writing $u_j=x_j+i y_j$, the circle condition gives
$x_j=(u_j+u_j^{-1})/2$ and $y_j=(u_j-u_j^{-1})/(2i)$.  A polynomial
equation in the real coordinates that held at the displayed state would
therefore become, after multiplying by suitable powers of the $u_j$,
a polynomial equation among the eight numbers in
\eqref{eq:explicit-independent}; and if the original polynomial is not
identically zero on the family, the resulting polynomial is nonzero.  No
such equation exists.

The two remaining claims now follow.  First, the selected determinant
used to test regularity is not identically zero on the
family, so it cannot vanish at the displayed state.  The state is
therefore regular.
Second, every encodable member obeys the additional polynomial equation
found above, which is not identically zero on the family; the displayed
state can obey no such equation, so it is not encodable.  Hence
$\ket{\Psi_{\mathrm{ex}}}$ is regular and nonencodable, as claimed.  The
factorization \eqref{eq:regular-minor} and its nonvanishing at
$(1,1,1,2)$ are checked exactly in the companion repository.

\section{Exact encoding of a conjectured nonencodable candidate}
\label{app:eta-encoding}

Ref.~\cite{Pimpel2023} reported a failed numerical search for a local
encoding of the following four-qubit state:
\begin{equation}\label{eq:eta-state}
  \ket{\eta}
  =\sqrt{\frac23}\ket{W_4}
   +\frac1{\sqrt3}\ket{\mathrm{GHZ}_4}
  =\frac1{\sqrt6}\sum_{x\in S}\ket{x},
\end{equation}
where
\[
  S=\{0000,1111,0001,0010,0100,1000\}\subset\mathbb F_2^4.
\]
Here $\mathbb F_2^4$ denotes four-bit strings, with addition performed
bitwise modulo two.
The following exact construction shows that this particular candidate is
locally encodable.  Interestingly, the encoding below was proposed by ChatGPT, after extensive numerical searches had failed to
find one.

\begin{proposition}[An exact encoding of $\ket{\eta}$]
\label{prop:eta-encoding}
For $t\in\mathbb F_2^4$, write
$\boldsymbol\theta_t=(\theta_{t,1},\ldots,\theta_{t,4})$ and define
\[
  \begin{aligned}
    X^t&=\bigotimes_{j=1}^4X_j^{t_j},\\
    D_t&=\bigotimes_{j=1}^4
    \operatorname{diag}\!\left(1,e^{i\pi\theta_{t,j}}\right),\\
    U_t&=D_0^\dagger D_tX^t,
  \end{aligned}
\]
with the phases, in units of $\pi$, given by
\begin{center}
\begingroup
\footnotesize\upshape
\setlength{\tabcolsep}{2.5pt}
\renewcommand{\arraystretch}{1.04}
\begin{tabular}{@{}c|cccc@{\qquad}c|cccc@{}}
$t$&$\theta_{t,1}$&$\theta_{t,2}$&$\theta_{t,3}$&$\theta_{t,4}$&
$t$&$\theta_{t,1}$&$\theta_{t,2}$&$\theta_{t,3}$&$\theta_{t,4}$\\
\hline
0000&1/4&3/4&3/4&5/4&1000&5/4&0&0&1\\
0001&1&1/2&3/2&1/4&1001&0&1/4&5/4&0\\
0010&1/2&3/2&7/4&1&1010&3/2&1/4&1&5/4\\
0011&7/4&3/4&1/2&0&1011&3/4&1&1/4&1/4\\
0100&1/2&7/4&1/2&0&1100&3/2&1&5/4&1/4\\
0101&7/4&3/2&3/4&1&1101&3/4&5/4&1&5/4\\
0110&1/4&1/2&3/2&5/4&1110&5/4&5/4&1/4&1\\
0111&1&7/4&7/4&1/4&1111&0&0&0&0
\end{tabular}
\endgroup
\end{center}
The phase entries are understood modulo $2$.  Here $D_0$ is the diagonal
unitary specified by the $t=0000$ row; the common factor $D_0^\dagger$
ensures that $U_0=\openone$.
Then $\{U_t\ket{\eta}:t\in\mathbb F_2^4\}$ is an orthonormal basis.
\end{proposition}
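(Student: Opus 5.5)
The plan is to reduce orthonormality to a finite list of linear congruences among the tabulated phases, and then check that list exactly. I would proceed in three steps.

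\emph{Step 1: remove the common factor and fix normalization.} Since $D_0^\dagger$ is the same product unitary in every encoder, it cancels in every Gram entry:
\[
  \langle\eta|U_s^\dagger U_t|\eta\rangle
  =\langle\eta|X^sD_s^\dagger D_tX^t|\eta\rangle .
\]
Normalization is then automatic, so it only remains to show that the off-diagonal entries vanish.

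\emph{Step 2: write each Gram entry as a sum of phases.} By \eqref{eq:eta-state}, $X^t\ket\eta=\tfrac1{\sqrt6}\sum_{y\in S+t}\ket y$. The operator $D_s^\dagger D_t$ is diagonal in the computational basis and multiplies $\ket y$ by $\exp\bigl(i\pi\sum_j(\theta_{t,j}-\theta_{s,j})y_j\bigr)$. Hence
\[
  \langle\eta|U_s^\dagger U_t|\eta\rangle
  =\frac16\sum_{y\in(S+s)\cap(S+t)}
  e^{i\pi\sum_j(\theta_{t,j}-\theta_{s,j})y_j}.
\]

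\emph{Step 3: count the terms.} Put $u=s+t\neq0$. The strings $y$ in $(S+s)\cap(S+t)$ correspond bijectively to pairs $x,x'\in S$ with $x+x'=u$. Enumerating the sums of pairs from $S$ gives:
\begin{itemize}[label=--]
  \item weight-one $u=e_j$ arises only from the pair $\{0000,e_j\}$;
  \item weight-two $u=e_i+e_j$ arises only from $\{e_i,e_j\}$;
  \item weight-three $u=1111+e_j$ arises only from $\{1111,e_j\}$;
  \item $u=1111$ arises only from $\{0000,1111\}$.
\end{itemize}
So for every nonzero $u$ the intersection has exactly two elements, $y_1$ and $y_2=y_1+\dots$ as determined by the pair. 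Every off-diagonal Gram entry is therefore $\tfrac16(e^{i\pi\Phi_1}+e^{i\pi\Phi_2})$. It vanishes if and only if
\[
  \sum_j(\theta_{t,j}-\theta_{s,j})\bigl((y_1)_j-(y_2)_j\bigr)\equiv1\pmod 2 .
\]
Concretely, this is a parity-type condition on the $\theta$ differences over the support of the pair. The set $S$ has exactly the structure that makes the $16$ translates overlap pairwise in two points, which is why a flat phase assignment can succeed.

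\emph{Step 4: verify the table.} This leaves $\binom{16}{2}=120$ congruences in quarter-integers modulo $2$, one per unordered pair $\{s,t\}$, each identified by the weight class of $u$ above. I would group them by $u$, giving $15$ classes of $8$ pairs each, and within each class evaluate the relevant combination of table entries. This is exact rational arithmetic, so it can be done by hand or by the companion scripts.

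The main obstacle is only this bookkeeping. Conceptually, Step~3 already shows that $\ket\eta$ behaves like a "two-point overlap" design. I do not expect a slicker structural reason, such as the $\theta$'s forming a quadratic form over $\mathbb F_2^4$, since the table contains quarter values. So the final check would be presented as a finite exact verification rather than a closed-form identity.
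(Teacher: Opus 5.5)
Your proposal is correct and follows the paper's proof essentially step for step. Both arguments reduce each off-diagonal Gram entry to $\tfrac16$ times a sum of two phases over $(S+s)\cap(S+t)$, using that every nonzero shift meets $S$ in exactly two strings, and then check $(\boldsymbol\theta_t-\boldsymbol\theta_s)\cdot(y_1-y_2)\equiv1\pmod 2$ for all $120$ pairs by direct substitution from the table. Your enumeration of pairwise sums shows that each nonzero $u$ is the sum of exactly one unordered pair of distinct elements of $S$, i.e.\ $S$ is a $(16,6,2)$ difference set; this is a slightly cleaner justification of the two-point overlap than the paper's shift-by-shift check. To finish it, state that the bijection is with ordered pairs, and replace your placeholder by $y_2=y_1+u$ with $u=s+t$.
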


\begin{proof}
The only property of $S$ needed below is that every nonzero binary shift
overlaps $S$ in exactly two strings.  Directly checking the fifteen
nonzero shifts gives
\[
  |S\cap(S+d)|=2.
\]
Because $D_t$ is diagonal and $X^t$ flips the bits indicated by $t$,
the support of $U_t\ket{\eta}$ is $S+t$.  Hence, for $t\ne u$, the two
encoded states overlap on exactly two computational strings,
$\{y_0,y_1\}=(S+t)\cap(S+u)$, and
\[
  \bra{\eta}U_t^\dagger U_u\ket{\eta}
  =\frac16\sum_{k=0}^1
    e^{i\pi(\boldsymbol\theta_u-\boldsymbol\theta_t)\cdot y_k}.
\]
For the dot product, regard $y_0,y_1$ as vectors of the ordinary integers
$0$ and $1$.  Direct substitution of the phase table gives
\[
  (\boldsymbol\theta_u-\boldsymbol\theta_t)\cdot(y_1-y_0)
  \equiv1\pmod 2
\]
for every $t\ne u$.  The ratio of the two phase factors is therefore
$e^{i\pi}=-1$, so they cancel.  Each $U_t$ is a product of one-qubit
unitaries, so all sixteen encoded states are normalized.  Sixteen pairwise
orthogonal states in a sixteen-dimensional Hilbert space form an orthonormal
basis.  The symbolic verification in the companion repository also checks
all $120$ unordered pairs exactly.
\end{proof}

\end{document}

%% file: figures/encoding_hierarchy.tex
\begin{figure}[t!]
\centering
\begin{tikzpicture}[x=1cm,y=1cm]
  \draw[rounded corners=7pt,draw=black!30,fill=black!2,line width=0.6pt]
    (0,0) rectangle (8.3,5.7);
  \node[font=\footnotesize,text=black!55,anchor=west] at (0.3,5.36)
    {$\mathbb{CP}^{15}$: four-qubit pure states};

  \draw[draw=encoderorange,fill=softorange,line width=0.9pt]
    (3.6,2.6) ellipse (3.05 and 1.95);
  \draw[draw=encoderteal,fill=softteal,line width=0.9pt]
    (2.95,2.4) ellipse (1.95 and 1.30);
  \draw[draw=encoderblue,fill=softblue,line width=0.9pt]
    (2.45,2.25) ellipse (1.0 and 0.7);

  \node[font=\scriptsize\bfseries,text=encoderblue] at (2.45,2.25)
    {$\EI=\mathrm{LME}$};
  \node[font=\small\bfseries,text=encoderteal]   at (3.5,3.32)  {$\ECI$};
  \node[font=\small\bfseries,text=encoderorange] at (1.8,3.7) {$\EG$};

  \draw[dashed,draw=black!50,line width=0.7pt]
    (5.9,4.0) ellipse [x radius=2.1, y radius=0.82, rotate=-20];
  \node[font=\scriptsize,text=black!55] at (6.6,5.32)
    {$\mathcal Z_4^{\mathrm{reg}}$ ($\dim 18$)};
  \node[font=\scriptsize,text=encoderorange!80!black] at (5.4,3)
    {$\dim\le15$};

  \fill[encoderteal] (4.2,1.7) circle (2.1pt);
  \node[font=\scriptsize,anchor=west] at (3.33,1.9)
    {$\ket{W_3}\!\otimes\!\ket0$};
  \fill[encoderorange!90!black] (5.7,2.2) circle (2.1pt);
  \node[font=\scriptsize,anchor=west] at (5.84,2.2) {$\ket{M_4}$};
  \fill[black!80] (6.6,4.25) circle (2.1pt);
  \node[font=\scriptsize,anchor=north] at (6.6,4.12)
    {$\ket{\Psi_{\mathrm{ex}}}$};

  \node[font=\scriptsize,text=black!70] at (4.15,0.33)
    {$\EI^{(4)}\subsetneq\ECI^{(4)}\subsetneq\EG^{(4)}
      \subsetneq\mathbb{CP}^{15}$};
\end{tikzpicture}
\caption{\label{fig:state-class-map}%
The four-qubit encoding hierarchy
\eqref{eq:strict-four-qubit-hierarchy}.  Every inclusion is strict:
$\ket{W_3}\otimes\ket0$ is reached by commuting reflections but is not
LME (\cref{rem:four-qubit-w-witness}); $\ket{M_4}$ is locally encodable
but by no commuting reflections
(\cref{prop:m4-encoding,prop:m4-separator});
$\ket{\Psi_{\mathrm{ex}}}$ admits no local encoding at all
(\cref{cor:explicit-nonencodable}).  The dashed region is the regular
1-uniform family $\mathcal Z_4^{\mathrm{reg}}$ of dimension $18$, whose
encodable part has dimension at most $15$ (\cref{thm:nonencodable}); all
boundaries are schematic.}
\end{figure}

%% file: figures/orbit_concept.tex
\begin{figure}[t]
\centering
\begin{tikzpicture}[x=1cm,y=1cm]
  \draw[rounded corners=7pt,draw=black!25,fill=black!2,line width=0.6pt]
    (0,0) rectangle (8.3,4.55);
  \node[font=\scriptsize,text=black!50,anchor=east] at (8.12,0.36)
    {pure states $\mathbb{CP}^{N-1}$};

  \definecolor{orbitviolet}{RGB}{124,88,160}
  \draw[encoderteal!60,line width=1.15pt]
    plot [smooth cycle,tension=0.8] coordinates
    {(4.9,2.3) (4.45,3.18) (2.75,3.52) (1.45,2.95) (0.9,2.3)
     (1.62,1.55) (3.05,1.12) (4.3,1.48)};
  \node[font=\scriptsize,text=encoderteal!80!black] at (2.55,0.72)
    {local-unitary orbit};

  \draw[fill=encoderblue,draw=encoderblue!55!black,line width=0.35pt]
    (0.9,2.3) circle (2.4pt);
  \node[font=\small,anchor=east,text=encoderblue] at (0.68,2.3)
    {$\ket\psi$};
  \draw[fill=encoderorange,draw=encoderorange!55!black,line width=0.35pt]
    (2.75,3.52) circle (2.4pt);
  \node[font=\small,anchor=south,text=encoderorange] at (2.75,3.68)
    {$U_a\ket\psi$};
  \draw[fill=encoderteal,draw=encoderteal!55!black,line width=0.35pt]
    (4.9,2.3) circle (2.4pt);
  \draw[fill=orbitviolet,draw=orbitviolet!55!black,line width=0.35pt]
    (3.05,1.12) circle (2.4pt);

  \draw[-{Latex[length=1.8mm]},black!45,line width=0.7pt]
    (5.15,2.42) -- (5.9,2.42);
  \foreach \i in {0,1,2,3}{
    \foreach \j in {0,1,2,3}{
      \draw[black!25,line width=0.3pt]
        ({6.05+0.26*\i},{1.72+0.26*\j}) rectangle
        ({6.31+0.26*\i},{1.98+0.26*\j});
    }
  }
  \fill[encoderblue!85]  (6.05,2.50) rectangle (6.31,2.76);
  \fill[encoderorange!85] (6.31,2.24) rectangle (6.57,2.50);
  \fill[encoderteal!85]  (6.57,1.98) rectangle (6.83,2.24);
  \fill[orbitviolet!85]  (6.83,1.72) rectangle (7.09,1.98);
  \node[font=\scriptsize,text=black!65] at (6.57,1.42)
    {$\langle\phi_a|\phi_b\rangle=\delta_{ab}$};
\end{tikzpicture}
\caption{\label{fig:orbit-concept}%
The state-first question as geometry.  Product unitaries move the
fiducial state $\ket\psi$ along its local-unitary orbit (curve); a local
encoding (\cref{def:local-encoding}) exists precisely when the orbit
contains $N=2^n$ mutually orthogonal states (dots, shown for $N=4$;
each dot is colored as its diagonal entry of the Gram matrix).
These states then form an iso-entangled basis: a complete projective
measurement all of whose outcomes carry exactly the entanglement of
$\ket\psi$.}
\end{figure}

%% file: figures/encoder_coordination.tex
\begin{figure}[t]
\centering
\begin{tikzpicture}[x=1cm,y=1cm,
  qwire/.style={black!70,line width=0.7pt},
  cwire/.style={encoderorange!80!black,double,double distance=1.3pt,
                line width=0.4pt},
  gate/.style={draw=encoderblue,fill=softblue,line width=0.8pt,
               rounded corners=1.5pt,minimum width=1.08cm,
               minimum height=0.5cm,font=\scriptsize},
  cbit/.style={draw=encoderorange,fill=softorange,line width=0.7pt,
               rounded corners=1.5pt,minimum height=0.42cm,
               inner xsep=3pt,font=\scriptsize},
  ptitle/.style={font=\footnotesize\bfseries,text=black!80}]
  \begin{scope}
    \node[ptitle,anchor=west] at (-0.9,1.95) {(a) independent};
    \node[cbit] (a1) at (1.3,1.22)  {$a_1$};
    \node[cbit] (a2) at (2.35,1.22) {$a_2$};
    \node[cbit] (a3) at (3.4,1.22)  {$a_3$};
    \foreach \y in {0,-0.8,-1.6}
      \draw[qwire] (0,\y) -- (4.7,\y);
    \node[gate] (g1) at (1.3,0)     {$U_1(a_1)$};
    \node[gate] (g2) at (2.35,-0.8) {$U_2(a_2)$};
    \node[gate] (g3) at (3.4,-1.6)  {$U_3(a_3)$};
    \draw[cwire] (a1.south) -- (g1.north);
    \draw[cwire] (a2.south) -- (g2.north);
    \draw[cwire] (a3.south) -- (g3.north);
    \draw[decorate,decoration={brace,amplitude=4pt},black!60]
      (-0.18,-1.78) -- (-0.18,0.18);
    \node[font=\small,anchor=east] at (-0.42,-0.8) {$\ket\psi$};
    \draw[decorate,decoration={brace,amplitude=4pt},black!60]
      (4.88,0.18) -- (4.88,-1.78);
    \node[font=\small,anchor=west] at (5.12,-0.8) {$U_a\ket\psi$};
  \end{scope}
  \begin{scope}[yshift=-4.7cm]
    \node[ptitle,anchor=west] at (-0.9,1.95) {(b) coordinated};
    \node[cbit] (aa) at (2.35,1.22) {$a$};
    \foreach \y in {0,-0.8,-1.6}
      \draw[qwire] (0,\y) -- (4.7,\y);
    \node[gate] (h1) at (1.3,0)     {$U_1(a)$};
    \node[gate] (h2) at (2.35,-0.8) {$U_2(a)$};
    \node[gate] (h3) at (3.4,-1.6)  {$U_3(a)$};
    \draw[cwire] ($(aa.south)+(-0.09,0)$) -- (h1.north);
    \draw[cwire] (aa.south) -- (h2.north);
    \draw[cwire] ($(aa.south)+(0.09,0)$) -- (h3.north);
    \draw[decorate,decoration={brace,amplitude=4pt},black!60]
      (-0.18,-1.78) -- (-0.18,0.18);
    \node[font=\small,anchor=east] at (-0.42,-0.8) {$\ket\psi$};
    \draw[decorate,decoration={brace,amplitude=4pt},black!60]
      (4.88,0.18) -- (4.88,-1.78);
    \node[font=\small,anchor=west] at (5.12,-0.8) {$U_a\ket\psi$};
  \end{scope}
\end{tikzpicture}
\caption{\label{fig:encoder-coordination}%
Encoder coordination (\cref{def:encoder-structures}), for $n=3$.
Double lines carry the classical label; single lines are the qubits of
$\ket\psi$.  (a)~In an independent encoding, party $j$ selects its
unitary using only its private bit $a_j$; these encoders reach exactly
the LME states (\cref{thm:kraus}).  (b)~In a coordinated encoding, each
local unitary may depend on the entire label $a$.  Coordination concerns
only how the classical label is used: every encoder remains a product
unitary, and no entangling operation is applied.}
\end{figure}

%% file: tables/parallel_characterizations.tex
\begin{table*}[t]
\caption{\label{tab:parallel-characterizations}
Parallel characterizations of the independent and commuting-reflection
classes.  A state belongs to either column exactly when it satisfies any, and
hence all, of that column's three entries.  The ancilla row on the left is the
standard LME definition; on the right it is the coherent-control
reformulation discussed below.}
\begin{ruledtabular}
\renewcommand{\arraystretch}{1.35}
\begin{tabular}{@{}p{0.16\textwidth}p{0.36\textwidth}p{0.40\textwidth}@{}}
\textbf{Viewpoint}
&
\textbf{Independent class $\EI=\mathrm{LME}$}
&
\textbf{Commuting-reflection class $\ECI$ (this work)}
\\
\colrule
\textbf{Flatness}
&
Up to local unitaries, the state is flat in a product basis.
&
Up to local unitaries, the state is flat in a graph-state basis.
\\
\textbf{Encoding}
&
There exist $n$ independent local binary choices,
$U_a=\bigotimes_j U_j(a_j)$, whose orbit is an orthonormal basis.
&
There are $2^n$ product reflections that commute pairwise and whose orbit
is an orthonormal basis.
\\
\textbf{Coherent ancillas}
&
Party-local controlled gates, each coupling $A_j$ only to $S_j$,
maximally entangle the ancilla register with the system (locally maximally
entangleable).
&
The same holds with the label register controlling the commuting
reflections jointly, but the control need not factor across the pairs
$A_jS_j$.
\end{tabular}
\end{ruledtabular}
\end{table*}

%% file: figures/m4_graph_classes.tex
\begin{figure*}[t]
\centering
\begin{tikzpicture}[x=1cm,y=1cm,
  vert/.style={circle,fill=encoderblue,draw=encoderblue!60!black,
               line width=0.3pt,inner sep=1.7pt},
  gedge/.style={encoderteal,line width=1.15pt,line cap=round},
  gpanel/.style={rounded corners=3pt,draw=black!28,fill=black!1.5,
                 line width=0.6pt}]
  \foreach \s in {0,...,5}{
    \begin{scope}[xshift=\s*2.8cm]
      \draw[gpanel] (0,-0.05) rectangle (1.7,1.55);
    \end{scope}
  }
  \begin{scope}
    \node[vert] at (0.35,1.2) {};
    \node[vert] at (1.35,1.2) {};
    \node[vert] at (1.35,0.3) {};
    \node[vert] at (0.35,0.3) {};
    \node[font=\scriptsize,text=black!55] at (0.85,-0.38) {$1$};
  \end{scope}
  \begin{scope}[xshift=2.8cm]
    \coordinate (w1) at (0.85,0.83);
    \coordinate (w2) at (0.32,1.25);
    \coordinate (w3) at (1.38,1.25);
    \coordinate (w4) at (0.85,0.18);
    \draw[gedge] (w1) -- (w2);
    \draw[gedge] (w1) -- (w3);
    \draw[gedge] (w1) -- (w4);
    \node[vert] at (w1) {};
    \node[vert] at (w2) {};
    \node[vert] at (w3) {};
    \node[vert] at (w4) {};
    \node[font=\scriptsize,text=black!55] at (0.85,-0.38) {$5$};
  \end{scope}
  \begin{scope}[xshift=5.6cm]
    \node[vert] (v1) at (0.35,1.2) {};
    \node[vert] (v2) at (1.35,1.2) {};
    \node[vert] (v3) at (1.35,0.3) {};
    \node[vert] (v4) at (0.35,0.3) {};
    \draw[gedge] (v1.center) -- (v2.center);
    \draw[gedge] (v2.center) -- (v3.center);
    \node[vert] at (0.35,1.2) {};
    \node[vert] at (1.35,1.2) {};
    \node[vert] at (1.35,0.3) {};
    \node[font=\scriptsize,text=black!55] at (0.85,-0.38) {$16$};
  \end{scope}
  \begin{scope}[xshift=8.4cm]
    \node[vert] (v1) at (0.35,1.2) {};
    \node[vert] (v2) at (1.35,1.2) {};
    \node[vert] (v3) at (1.35,0.3) {};
    \node[vert] (v4) at (0.35,0.3) {};
    \draw[gedge] (v1.center) -- (v2.center);
    \node[vert] at (0.35,1.2) {};
    \node[vert] at (1.35,1.2) {};
    \node[font=\scriptsize,text=black!55] at (0.85,-0.38) {$6$};
  \end{scope}
  \begin{scope}[xshift=11.2cm]
    \node[vert] (v1) at (0.35,1.2) {};
    \node[vert] (v2) at (1.35,1.2) {};
    \node[vert] (v3) at (1.35,0.3) {};
    \node[vert] (v4) at (0.35,0.3) {};
    \draw[gedge] (v1.center) -- (v2.center);
    \draw[gedge] (v3.center) -- (v4.center);
    \node[vert] at (0.35,1.2) {};
    \node[vert] at (1.35,1.2) {};
    \node[vert] at (1.35,0.3) {};
    \node[vert] at (0.35,0.3) {};
    \node[font=\scriptsize,text=black!55] at (0.85,-0.38) {$3$};
  \end{scope}
  \begin{scope}[xshift=14cm]
    \node[vert] (v1) at (0.35,1.2) {};
    \node[vert] (v2) at (1.35,1.2) {};
    \node[vert] (v3) at (1.35,0.3) {};
    \node[vert] (v4) at (0.35,0.3) {};
    \draw[gedge] (v1.center) -- (v2.center);
    \draw[gedge] (v2.center) -- (v3.center);
    \draw[gedge] (v3.center) -- (v4.center);
    \node[vert] at (0.35,1.2) {};
    \node[vert] at (1.35,1.2) {};
    \node[vert] at (1.35,0.3) {};
    \node[vert] at (0.35,0.3) {};
    \node[font=\scriptsize,text=black!55] at (0.85,-0.38) {$33$};
  \end{scope}
  \draw[encoderblue,line width=1.2pt,line cap=round]
    (0.1,-0.85) -- (4.4,-0.85);
  \node[font=\scriptsize,text=encoderblue] at (2.25,-1.18)
    {perpendicularity alone};
  \draw[encoderteal,line width=1.2pt,line cap=round]
    (5.7,-0.85) -- (10.0,-0.85);
  \node[font=\scriptsize,text=encoderteal] at (7.85,-1.18)
    {$+$ coplanarity};
  \draw[encoderorange,line width=1.2pt,line cap=round]
    (11.3,-0.85) -- (15.6,-0.85);
  \node[font=\scriptsize,text=encoderorange] at (13.45,-1.18)
    {$+$ four-body rule};
\end{tikzpicture}
\caption{\label{fig:m4-graph-classes}%
Why $\ket{M_4}$ admits no commuting-reflection encoding
(\cref{prop:m4-separator}).  Up to local complementation and
relabelling, the $64$ labelled four-vertex graphs fall into six classes
(representatives shown; numbers give the class sizes).  Flatness in a
graph basis would force every nonidentity graph stabilizer to have zero
expectation in $\ket{M_4}$, which translates into elementary geometry:
vanishing two-qubit correlators force perpendicular Bloch axes,
vanishing three-qubit correlators force coplanar axes, and a four-body
rule handles the remaining stabilizers (\cref{app:w3-m4-proofs}).  The
bars indicate the strongest ingredient needed to exclude each pair of
classes; since all six classes are excluded,
$\ket{M_4}\notin\ECI^{(4)}$.}
\end{figure*}

%% file: figures/no_go_overview.tex
\begin{figure*}[t]
\centering
\begin{tikzpicture}[x=1cm,y=1cm,
  gpanel/.style={rounded corners=3pt,draw=black!28,fill=black!1.5,
                 line width=0.6pt},
  ptitle/.style={font=\footnotesize\bfseries,text=black!80},
  tag/.style={font=\scriptsize,text=black!55},
  boxhead/.style={font=\scriptsize,text=black!75},
  flow/.style={-{Latex[length=2mm]},black!50,line width=0.8pt}]

  \node[ptitle,anchor=west] at (0.05,5.3) {(a)};
  \begin{scope}[shift={(0.55,3.6)}]
    \coordinate (vv) at (8:1.6);
    \coordinate (ww) at (62:1.25);
    \fill[encoderblue!12] (0,0) -- (vv) -- ($(vv)+(ww)$) -- (ww) -- cycle;
    \draw[dashed,black!35,line width=0.5pt]
      (vv) -- ($(vv)+(ww)$) -- (ww);
    \draw[-{Latex[length=1.8mm]},encoderblue,line width=1pt]
      (0,0) -- (vv);
    \draw[-{Latex[length=1.8mm]},encoderteal,line width=1pt]
      (0,0) -- (ww);
    \draw[-{Latex[length=1.2mm]},black!50,line width=0.5pt]
      (8:0.5) arc[start angle=8,end angle=62,radius=0.5];
    \node[font=\small,text=encoderblue,anchor=north west]
      at ($(vv)+(0.02,0.0)$) {$v$};
    \node[font=\small,text=encoderteal,anchor=south east]
      at ($(ww)+(-0.02,0.02)$) {$w$};
  \end{scope}
  \node[font=\scriptsize] at (1.7,2.98)
    {$\omega_\psi(v,w)=2\operatorname{Im}\langle v|w\rangle$};
  \begin{scope}[shift={(0.75,0.85)}]
    \coordinate (vb) at (8:1.3);
    \coordinate (ivb) at (98:1.3);
    \fill[softorange] (0,0) -- (vb) -- ($(vb)+(ivb)$) -- (ivb) -- cycle;
    \draw[dashed,black!35,line width=0.5pt]
      (vb) -- ($(vb)+(ivb)$) -- (ivb);
    \draw[-{Latex[length=1.8mm]},encoderblue,line width=1pt]
      (0,0) -- (vb);
    \draw[-{Latex[length=1.8mm]},encoderorange,line width=1pt]
      (0,0) -- (ivb);
    \draw[-{Latex[length=1.2mm]},black!50,line width=0.5pt]
      (8:0.55) arc[start angle=8,end angle=98,radius=0.55];
    \node[font=\tiny,text=black!55] at (53:0.85) {$\times i$};
    \node[font=\small,text=encoderblue,anchor=north west]
      at ($(vb)+(0.02,0.0)$) {$v$};
    \node[font=\small,text=encoderorange,anchor=south east]
      at ($(ivb)+(0,0.02)$) {$iv$};
  \end{scope}
  \node[font=\scriptsize] at (1.7,0.22)
    {$\omega_\psi(v,iv)=2\langle v|v\rangle>0$};

  \node[ptitle,anchor=west] at (3.9,5.3) {(b)};

  \draw[gpanel] (4.15,3.15) rectangle (8.75,4.75);
  \node[boxhead] at (6.45,4.44) {any smooth change of an ONB};
  \draw[black!30,line width=0.5pt] (4.55,3.8) -- (6.35,3.8);
  \foreach \k/\h in {0/0.30,1/-0.20,2/0.24,3/-0.30,4/0.14,
                     5/-0.10,6/0.27,7/-0.21,8/0.12,9/-0.26}{
    \fill[encoderblue!75]
      ({4.68+0.17*\k-0.045},3.8) rectangle ({4.68+0.17*\k+0.045},{3.8+\h});
  }
  \node[font=\small] at (7.55,3.8) {$\displaystyle\sum_a A_a=0$};
  \node[tag] at (6.45,2.95) {\cref{lem:onb-cancellation} (ONB cancellation)};

  \draw[gpanel] (4.15,0.65) rectangle (8.75,2.25);
  \node[boxhead] at (6.45,1.94) {encoded 1-uniform basis};
  \draw[black!30,line width=0.5pt] (4.55,1.3) -- (6.35,1.3);
  \foreach \k in {0,...,9}{
    \fill[encoderteal!75]
      ({4.68+0.17*\k-0.045},1.3) rectangle ({4.68+0.17*\k+0.045},1.56);
  }
  \node[font=\small] at (7.55,1.3) {$A_1=\cdots=A_N$};
  \node[tag] at (6.45,0.45) {\cref{lem:local-null} (local moves are area-free)};

  \draw[flow] (8.75,3.95) to[out=0,in=180] (9.3,3.0);
  \draw[flow] (8.75,1.45) to[out=0,in=180] (9.3,2.5);
  \draw[gpanel] (9.3,2.0) rectangle (12.3,3.5);
  \node[font=\small] at (10.8,3.0) {$\omega_\psi(v,w)=0$};
  \node[font=\scriptsize,text=black!65] at (10.8,2.6)
    {on every smooth};
  \node[font=\scriptsize,text=black!65] at (10.8,2.33)
    {encodable family};
  \node[tag] at (10.8,1.8) {\cref{prop:equal-contributions}};

  \draw[flow] (12.3,2.75) -- (12.95,2.75);
  \draw[gpanel] (12.95,1.6) rectangle (17.2,3.95);
  \draw[draw=encoderblue,fill=softblue,line width=0.7pt,
        rounded corners=1.5pt] (13.25,3.0) rectangle (14.5,3.42);
  \node[font=\scriptsize,text=encoderblue] at (13.875,3.21) {$K_\psi$};
  \node[font=\scriptsize,text=black!60] at (14.66,3.21) {$\oplus$};
  \foreach \k in {0,1,2}{
    \draw[draw=encoderblue,fill=softblue,line width=0.7pt]
      ({14.82+0.8*\k},3.0) rectangle ({15.12+0.8*\k},3.42);
    \draw[draw=encoderorange,fill=softorange,line width=0.7pt]
      ({15.16+0.8*\k},3.0) rectangle ({15.46+0.8*\k},3.42);
    \draw[black!50,line width=0.6pt]
      ({15.16+0.8*\k},3.0) -- ({15.46+0.8*\k},3.42);
    \draw[black!50,line width=0.6pt]
      ({15.16+0.8*\k},3.42) -- ({15.46+0.8*\k},3.0);
    \draw[-{Latex[length=1mm]},black!45,line width=0.5pt]
      ({14.97+0.8*\k},3.47) to[bend left=55] ({15.31+0.8*\k},3.47);
  }
  \node[font=\tiny,text=black!55] at (15.14,3.74) {$\times i$};
  \node[font=\scriptsize,text=black!60] at (13.875,2.74) {$12$};
  \node[font=\scriptsize,text=black!60] at (15.94,2.74)
    {$\le3$ of $6$};
  \node[font=\small,text=black!80] at (15.07,2.12)
    {$d\;\le\;12+3\;=\;15\;<\;18$};
  \node[tag] at (15.07,1.38)
    {\cref{thm:nonencodable}; \cref{app:technical-dimension}};
\end{tikzpicture}
\caption{\label{fig:no-go-overview}%
The mechanism behind \cref{thm:nonencodable}.  (a)~Two first-order
changes $v,w$ of a state span the signed area $\omega_\psi(v,w)$ (top);
every nonzero direction pairs with $iv$ at strictly positive area
(bottom).  (b)~Structure of the proof.  When any orthonormal basis
varies smoothly, the areas $A_a$ contributed by its $N$ elements cancel
(\cref{lem:onb-cancellation}); when a locally encoded basis of a
1-uniform state varies, the local adjustments contribute no area, so all
elements contribute the same area $A_a=\omega_\psi(v,w)$
(\cref{lem:local-null}).  Equal contributions with zero sum vanish:
$\omega_\psi(v,w)=0$ along every smooth encodable family
(\cref{prop:equal-contributions}).  Among the $18$ tangent directions of
the regular 1-uniform family, the $12$ local rotations $K_\psi$
contribute no area, while the remaining six form three pairs $(v,iv)$;
an area-free family can keep at most one member of each pair, so at most
$12+3=15<18$ directions coexist (\cref{sec:dimension-count};
\cref{app:technical-dimension} extends the bound to the full encodable
set).}
\end{figure*}